%% file: ALENEX27_full-version.tex
\pdfoutput=1
\documentclass[twoside,leqno,twocolumn]{article}
\usepackage[letterpaper]{geometry}   
\usepackage{siamproceedings}

\usepackage[utf8]{inputenc}
\usepackage[T1]{fontenc}
\usepackage{amssymb,amsfonts}
\usepackage{mathtools}
\usepackage{bm}
\usepackage{xspace}
\usepackage{microtype}
\usepackage[noadjust]{cite}

\usepackage{algpseudocodex}
\algrenewcommand\algorithmicrequire{\textbf{Input:}}
\algrenewcommand\algorithmicensure{\textbf{Output:}}
\algrenewcommand\algorithmicthen{}
\algrenewcommand\algorithmicdo{}
\algnewcommand{\IfInline}[2]{\textbf{if} #1 \textbf{then} #2}

\usepackage[dvipsnames]{xcolor}
\definecolor{VUBlue}{rgb}{0,0.466,0.702}
\definecolor{VUOrange}{rgb}{0.8,0.254,0}
\newcommand{\algstep}[1]{%
  \State\vspace{2pt}%
  \textcolor{VUBlue}{{\bfseries$\blacktriangleright$}\ #1}%
  \vspace{1pt}%
}

\usepackage{booktabs}
\usepackage{tabularx}
\usepackage{makecell}
\usepackage{multirow}

\usepackage{pgf}
\usepackage{caption}
\usepackage{subcaption}

\usepackage[group-separator={,}]{siunitx}

\usepackage{enumitem}
\setlist[1]{labelindent=\parindent}
\setlist[enumerate,1]{label=(\arabic*),noitemsep}
\setlist[itemize]{noitemsep}
\setlist[description]{noitemsep}

\usepackage{listings}
\usepackage{dirtytalk}
\usepackage[normalem]{ulem}
\usepackage{cryptocode}
\usepackage{placeins}
\usepackage{comment}
\usepackage{marginnote}
\usepackage[skins]{tcolorbox}
\usepackage{appendix}
\newsiamthm{conjecture}{Conjecture}
\newsiamthm{observation}{Observation}
\newsiamremark{remark}{Remark}

\crefname{hypothesis}{Hypothesis}{Hypotheses}
\crefname{enumi}{condition}{conditions}
\newcommand{\lineref}[1]{Line~\ref{#1}}

\ifdefined\N\else\DeclareMathOperator{\N}{\mathbb{N}}\fi
\ifdefined\Z\else\DeclareMathOperator{\Z}{\mathbb{Z}}\fi
\ifdefined\R\else\DeclareMathOperator{\R}{\mathbb{R}}\fi

\newcommand{\cO}{\mathcal{O}}
\newcommand{\cOstar}{\mathcal{O}^{*}}

\newcommand{\poly}{\mathrm{poly}}

\newcommand{\setsize}[1]{\left|#1\right|}

\newcommand{\occ}{\mathsf{occ}}
\newcommand{\Exact}{\text{Exact}}
\newcommand{\MSlider}{\text{M-SLIDER}}
\newcommand{\SEQSlider}{\text{SEQ-SLIDER}}

\newcommand{\key}{\mathrm{key}}
\newcommand{\rank}{\mathrm{rank}}
\newcommand{\ed}{\mathrm{ed}}

\newif\ifpaper
\definecolor[named]{benGray}{rgb}{0.61,0.61,0.63}
\definecolor[named]{maybeDone}{cmyk}{0.45,0,0.50,0.32}

\newcommand{\fpt}{\ensuremath{\mathsf{FPT}}\xspace}

\newcommand{\letx}{\mathord{\diamond}}
\newcommand{\id}{\rank}
\newcommand{\CMM}{CMM\xspace}
\newcommand{\ldCMM}{\ensuremath{(\ell,d)}-\CMM}

\newcommand{\MaxIP}{MaxIP\xspace}
\newcommand{\SETH}{SETH\xspace}
\def\dd{\mathinner{.\,.}}
\newcommand{\chisup}{f_{\chi^2}}

\newcommand{\defproblem}[3]{
\vspace{2mm}
\noindent\fbox{\begin{minipage}{0.96\columnwidth}
   \textsc{#1}\\ {\bf{Input:}} #2 \\ {\bf{Output:}} #3
\end{minipage}}\vspace{2mm}}

\newcommand{\ie}[1]{(i.e.,~#1)}

\newif\ifhidetodos
\hidetodostrue
\newcommand{\cbox}[2][yellow]{\fcolorbox{#1}{white}{\parbox{\dimexpr\linewidth-2\fboxsep}{\strut #2\strut}}}
\newcommand{\customtodo}[3]{\textcolor{#2}{\(\blacktriangledown\)}\marginnote{\raggedright \textcolor{#2}{\textbf{#1:} #3}}}
\newcommand{\customtododisplay}[3]{\noindent\cbox[#2]{\textcolor{#2}{\textbf{#1:} #3}}}
\newcommand{\customtodoinline}[3]{\textcolor{#2}{\textcolor{#2}{\textbf{#1:} #3}}}
\ifhidetodos
    \renewcommand{\customtodo}[3]{}
    \renewcommand{\customtododisplay}[3]{}
    \renewcommand{\customtodoinline}[3]{}
\fi

\usepackage{soul}

\newsavebox{\ulbox}
\newlength{\ulwidth}
\newcommand{\colunderline}[2]{%
  \sbox{\ulbox}{$#2$}%
  \settowidth{\ulwidth}{\usebox{\ulbox}}%
  \makebox[0pt][l]{\color{#1}\rule[-2.2pt]{\ulwidth}{0.6pt}}%
  \usebox{\ulbox}%
}
\newcommand{\vb}[1]{\textcolor{VUBlue}{#1}}
\newcommand{\vo}[1]{\textcolor{VUOrange}{#1}}

\begin{document}

\title{\Large Mining Correlated Motifs with Wildcards in PPI Networks: Exact and Optimal}
\author{Lucas P. Oliveira\thanks{CWI, Amsterdam, The Netherlands and University of Campinas, Campinas, Brazil
    (\email{lucas.peres@students.ic.unicamp.br}).}
    \and Ben Bals\thanks{CWI, Amsterdam, The Netherlands and Vrije Universiteit Amsterdam, Amsterdam, The Netherlands (\email{bjjb@cwi.nl}).}
    \and Aalt D. J. van Dijk\thanks{University of Amsterdam, Amsterdam, The Netherlands (\email{a.d.j.vandijk@uva.nl}).}
    \and Solon P. Pissis\thanks{The Cyprus Institute, Nicosia, Cyprus
    (\email{s.pissis@cyi.ac.cy}).}}

\date{}

\maketitle

\begin{abstract}
A \emph{protein--protein interaction} (PPI) network is a graph whose nodes are protein sequences and whose edges represent physical interactions between proteins.
\emph{Correlated Motif Mining} (CMM) seeks pairs of sequence motifs whose co-occurrence in interacting proteins reveals binding sites and interaction mechanisms.
Formally, in the $(\ell,d)$-\CMM problem, we are given a graph $G=(V,E)$, with $V\subseteq\Sigma^*$ for an alphabet $\Sigma$, and integers $0\leq d<\ell$, and we seek a pair $\{X,Y\}$ of motifs, each of length $\ell$ and containing $d$ wildcards, that maximizes a support measure $f(G,\{X,Y\})$; e.g., the count $f_E$ of edges $\{u,v\}\in E$ such that $X$ matches a substring of $u$ and $Y$ matches a substring of $v$, or vice versa.
Although existing approaches have produced biologically meaningful results, they are either inefficient or heuristic-based, as the underlying optimization problem is NP-hard.
Here, we revisit $(\ell,d)$-\CMM.
For $d=0$, we give an exact $\mathcal{O}(N^2)$-time and $\mathcal{O}(N)$-space algorithm, where $N$ is the total length of the node strings, and prove that this time is conditionally optimal under the Strong Exponential-Time Hypothesis.
For any $d>0$, we give an exact $\mathcal{O}\!\left(\binom{\ell}{d}^2N^2\right)$-time and $\mathcal{O}(N)$-space algorithm, and observe that the problem is W[1]-hard parameterized by $d$. The latter algorithm improves the state of the art by exponentially large factors in both time and space.
We further demonstrate experimentally that our algorithms are practical on real-world PPI networks and find significantly higher-scoring motif pairs than the state of the art.
Moreover, we show that our algorithms find many top-scoring motif pairs that the state of the art misses entirely. Taken together, our results suggest that exact \CMM with wildcards is more tractable in practice than previously assumed.
\end{abstract}

\section{Introduction.}\label{sec:intro}
Large-scale biological networks capturing protein--protein interactions (PPIs) are now available for a wide range of organisms~\cite{luck2020reference,oughtred2021biogrid,panneerselvam2024intact}.
While such data reveal how proteins function within a PPI network, they offer no direct insight into how those interactions are encoded at the sequence level---in particular, which regions of protein sequences correspond to physical interaction sites. Popular tools such as AlphaFold~\cite{jumper2021alphafold} have revolutionized protein structure prediction, but experimental validation of interaction interfaces at scale remains a major bottleneck: the human proteome alone contains tens of thousands of binary interactions~\cite{luck2020reference}, and determining their structural basis experimentally is infeasible at this scale.
Moreover, even large-scale structural characterization of interacting proteins would not directly answer the question of how interaction specificity is encoded in protein sequence or structure. In other words: \emph{which specific residues determine which partners a protein can bind?}

To answer this question, a variety of bioinformatics approaches have been developed~\cite{idrees2024uncovering,soleymani2022protein,jia2024comprehensive}. In particular, several algorithmic approaches have been proposed for locating binding sites by mining overrepresented pairs of sequence patterns, called \emph{motif pairs}, from sequences of interacting proteins~\cite{DBLP:journals/bioinformatics/LiLW06,tan2006,DBLP:journals/tkde/LiLLW07,leung2007,DBLP:conf/sdm/LiSLW08,DBLP:journals/tcbb/BoyenDNHD11}.
One such approach, known as \emph{Correlated Motif Mining} (CMM), identifies binding sites by searching for a consensus pattern in one group of proteins that interacts with nearly all proteins containing a second consensus pattern.
When such a pair is found, both patterns are likely to represent complementary surface regions that enable physical binding.
\Cref{fig:motifpairs} illustrates this idea: $\{\texttt{G}\letx\letx\texttt{P}\letx\texttt{NY}, \texttt{P}\letx\letx\texttt{P}\}$ and $\{\texttt{R}\letx\texttt{R}\letx\letx\texttt{pS}, \texttt{L}\letx\texttt{R}\letx\letx\texttt{pS}\}$ are two correlated motif pairs, where an undirected edge connects two protein sequences if the first contains motif $\texttt{G}\letx\letx\texttt{P}\letx\texttt{NY}$ and the second contains motif $\texttt{P}\letx\letx\texttt{P}$, and analogously for $\{\texttt{R}\letx\texttt{R}\letx\letx\texttt{pS}, \texttt{L}\letx\texttt{R}\letx\letx\texttt{pS}\}$.
Although it remains unclear what fraction of protein interfaces can be described by correlated motifs of this kind, existing approaches consistently show that such motifs do contain meaningful information about interaction interfaces.

Existing computational approaches fall into two broad categories: (i) interaction-driven approaches~\cite{DBLP:journals/bioinformatics/LiLW06,DBLP:journals/tkde/LiLLW07,DBLP:conf/sdm/LiSLW08}; and (ii) motif-driven approaches~\cite{tan2006,leung2007,DBLP:journals/tcbb/BoyenDNHD11}.
Interaction-driven approaches search for (quasi-)bicliques---that is, subsets of nodes in a PPI network, in which nearly every node on one side is connected to nearly all nodes on the other.
These subgraphs capture an all-versus-all, or most-versus-most, interaction pattern, from which a representative motif pair is then derived from the sequences labeling the nodes.
Motif-driven approaches, by contrast, begin with candidate motif pairs and evaluate their support directly in the PPI network.
While both paradigms have been shown to yield biologically meaningful results, the motif-driven approach has several conceptual advantages over its interaction-driven counterpart.
First, motif pairs are identified directly rather than derived in post-processing.
Second, all proteins containing a given motif are taken into account, rather than an arbitrarily selected subset.
Third, when the interactions between two groups of proteins arise from multiple compatible binding sites---e.g., $\{\texttt{G}\letx\letx\texttt{P}\letx\texttt{NY}, \texttt{P}\letx\letx\texttt{P}\}$ and $\{\texttt{R}\letx\texttt{R}\letx\letx\texttt{pS}, \texttt{L}\letx\texttt{R}\letx\letx\texttt{pS}\}$ in \Cref{fig:motifpairs}---interaction-driven methods are forced to merge them into one motif pair, thereby losing this structural distinction.

\begin{figure}[htbp]
\centering
\begin{tikzpicture}[
  scale=0.875,
  transform shape,
  font=\small,
  arr/.style={draw, rounded corners=1pt, minimum width=3.1cm, minimum height=0.95cm, inner sep=2pt, align=center},
  edge/.style={ultra thick},
  edgeA/.style={ultra thick, draw=VUBlue},
  edgeB/.style={ultra thick, draw=VUOrange}
]

\node[arr] (u1) at (0,3.2) {$\colunderline{VUBlue}{\vb{\texttt{G}}\texttt{RS} \vb{\texttt{P}} \texttt{P} \vb{\texttt{NY}}} \dots \colunderline{VUOrange}{\vo{\texttt{R}}\texttt{N}\vo{\texttt{R}}\texttt{GS}\vo{\texttt{p}\texttt{S}}}$};
\node[arr] (u2) at (0,2.0) {$\dots \dots \colunderline{VUOrange}{\vo{\texttt{R}}\texttt{R}\vo{\texttt{R}}\texttt{pP}\vo{\texttt{pS}}} \dots \dots$};
\node[arr] (u3) at (0,0.8) {$\dots \colunderline{VUBlue}{\vb{\texttt{G}}\texttt{RR}\vb{\texttt{P}}\texttt{S}\vb{\texttt{NY}}} \dots \dots \dots$};

\node[arr] (v1) at (6.0,3.2) {$\dots \colunderline{VUBlue}{\vb{\texttt{P}}\texttt{RS}\vb{\texttt{P}}} \dots \dots \dots$};
\node[arr] (v2) at (6.0,2.0) {$\dots \dots \colunderline{VUOrange}{\vo{\texttt{L}}\texttt{G}\vo{\texttt{R}}\texttt{PS}\vo{\texttt{pS}}} \dots \dots$};
\node[arr] (v3) at (6.0,0.8) {$\dots\colunderline{VUBlue}{\vb{\texttt{P}}\texttt{GP}\vb{\texttt{P}}} \dots \dots$};

\draw[edgeA] (u1.east) -- (v1.west);
\draw[edgeB] (u1.east) -- (v2.west);
\draw[edgeA] (u1.east) -- (v3.west);

\draw[edgeB] (u2.east) -- (v2.west);
\draw[edge] (u2.east) -- (v3.west);

\draw[edgeA] (u3.east) -- (v1.west);
\draw[edgeA] (u3.east) -- (v3.west);

\end{tikzpicture}
\caption{Motif pairs $\textcolor{VUBlue}{\{\texttt{G}\letx\letx\texttt{P}\letx\texttt{NY}, \texttt{P}\letx\letx\texttt{P}\}}$ and $\textcolor{VUOrange}{\{\texttt{R}\letx\texttt{R}\letx\letx\texttt{pS}, \texttt{L}\letx\texttt{R}\letx\letx\texttt{pS}\}}$ as correlated motifs.
The colored edges show their correlation.}
\label{fig:motifpairs}
\end{figure}

\subsection{The problem and the state of the art.}

In this paper, we focus on the motif-driven approach to CMM.
The vanilla version of the problem is defined as follows.

\defproblem{Correlated Motif Mining (\CMM)}
{A graph $G=(V,E)$, with $V\subseteq\Sigma^*$ for an alphabet $\Sigma$, and an integer $\ell>0$.}
{A pair $\{X,Y\}$, with $X,Y\in\Sigma^{\ell}$, that maximizes a support measure $f(G,\{X,Y\})$.}

We next define the \CMM problem \emph{with wildcards}; wildcards are placeholder symbols, denoted by $\letx$, that can match any character from the underlying alphabet.

\defproblem{\CMM with Wildcards ($(\ell,d)$-\CMM)}
{A graph $G=(V,E)$, with $V\subseteq\Sigma^*$ for an alphabet $\Sigma$, and integers $\ell>0$ and $0\le d<\ell$.}
{A pair $\{X,Y\}$, with $X,Y\in(\Sigma\sqcup\{\letx\})^{\ell}$, each containing exactly $d$ wildcards, that maximizes a support measure $f(G,\{X,Y\})$.}

While we formalize the problems in a general form, the motivating application leads us to use the following widely-used support measures~\cite{tan2006,leung2007,DBLP:journals/tcbb/BoyenDNHD11}: the count $f_E$ of edges $\{u,v\}\in E$ such that $X$ matches a substring of $u$ and $Y$ matches a substring of $v$, or vice versa; and the $\chi^2$-score $f_{\chi^2}$ of $G$ with respect to $\{X,Y\}$~\cite{tan2006}.

To the best of our knowledge, three techniques have so far been proposed. Unfortunately, these techniques differ not only in their mining method, but also in
the matching relation and the support measure used.

The first work by Tan et al.~\cite{tan2006} introduced D-MOTIF, an exact algorithm for
finding correlated motif pairs under the Hamming distance.
D-MOTIF considers $f_{\chi^2}$ to quantify support.
The basic idea of the algorithm is to enumerate all
possible motif pairs from every pair $\{u,v\}\in E$,
using the \emph{Hamming distance neighborhood} notion,
and then check if they have enough support.
The authors also suggested a heuristic counterpart called D-STAR.
Unfortunately, neither D-MOTIF nor D-STAR scales to networks with more than 250 proteins.

The second work by Leung et al.~\cite{leung2007}
introduced FindMotif, an exact algorithm for \ldCMM.
FindMotif uses a different support measure that considers
the probability that the observed number of interactions is generated under a
null hypothesis. The algorithm builds a table of $\cO(\binom{\ell}{d}^2|\Sigma|^{2(\ell-d)})$ entries, one per possible motif pair over alphabet $\Sigma$,
and computes the support for each using a \emph{voting} technique.
The authors also suggested a heuristic counterpart called MotifHeuristics,
which was able to handle larger networks than D-STAR.

The third work by Boyen et al.~\cite{DBLP:journals/tcbb/BoyenDNHD11} proved that $(\ell,d)$-\CMM is NP-hard for so-called \emph{compliant biclique-maximal} measures, which include $f_E$ and $f_{\chi^2}$.
They also introduced SLIDER, a generic metaheuristic that incorporates two \emph{steepest-ascent} procedures. The central component of these procedures is a neighborhood function defined by viewing a motif as a window that slides along the protein sequence of one of the proteins.
Unlike more conventional neighborhood functions, this has a clear biological interpretation: if a motif partially overlaps a binding site in a sequence, a few mutations should suffice to convert it into a binding site.
In this sense, both neighborhood functions are designed to identify nearby motifs that may correspond to regions close to one another on actual proteins.
The difference is that one neighborhood function considers as neighbors all motifs that could, in principle, lie near the given motif on any protein, whereas the other restricts attention to motifs that are actually nearby on a single selected protein.
Although SLIDER can be paired with an arbitrary support measure, the authors used $f_{\chi^2}$, providing compelling evidence that it is biologically the most relevant,
and showing that SLIDER outperforms MotifHeuristics.

\subsection{Our contributions.}
Since the motif-driven approach to CMM is computationally hard, in general~\cite{DBLP:journals/tcbb/BoyenDNHD11},
it is widely believed that \emph{exact optimization} is impractical for realistically sized PPI networks.
Consequently, the main aim of the existing works~\cite{tan2006,leung2007,DBLP:journals/tcbb/BoyenDNHD11} has been to design efficient and effective \emph{heuristics} that identify high-quality motif pairs in reasonable time.
This need is particularly acute for modern real-world datasets, which are large, noisy, and in need of methods that scale while still capturing biologically meaningful patterns.

Here, however, we present an \emph{exact} method that remains efficient in practice on modern datasets, thereby showing that the problem can be solved far more effectively than is commonly believed.
Our results thus refute the prevailing view that only heuristic approaches are viable for this task. Our crucial observation is that the hardness proof in~\cite{DBLP:journals/tcbb/BoyenDNHD11} proceeds under the assumption that both $\ell$ and $d$ are unbounded, which is not realistic in the motivating application.
Intuitively, we confine the computational hardness to \emph{parameter} $\ell$.
Indeed, the values considered for $\ell$ and $d$ are typically  small~\cite{tan2006,leung2007,DBLP:journals/tcbb/BoyenDNHD11}.

We make the following theoretical contributions:

\begin{enumerate}
    \item We give an \emph{exact} $\cO(N^2)$-time algorithm for \CMM (the vanilla version), where $N\coloneqq\sum_{v \in V}|v|$. The space usage is $\cO(N)$. See \Cref{sec:exact}.
    \item We show that this algorithm is essentially \emph{optimal}: no algorithm can solve \CMM in time $\cO(N^{2-\varepsilon} \poly(|E|, |V|))$, for any $\varepsilon > 0$, unless the popular Strong Exponential-Time Hypothesis (\SETH)~\cite{DBLP:journals/jcss/ImpagliazzoPZ01} is false. See \Cref{sec:exact}.
    \item We build on the \CMM algorithm to give an \emph{exact} $\cO\!\left(\binom{\ell}{d}^2N^2\right)$-time algorithm for $(\ell,d)$-\CMM (the wildcards version). The space usage is $\cO(N)$; and the algorithm is embarrassingly parallelizable.
    This improves dramatically on the state of the art from $\cOstar\!\left(\binom{\ell}{d}^2\setsize{\Sigma}^{2(\ell-d)}\right)$ time to $\cOstar\!\left(\binom{\ell}{d}^2\right)$ and from
    $\cOstar\!\left(\binom{\ell}{d}^2\setsize{\Sigma}^{2(\ell-d)}\right)$ space to $\cO(N)$.\footnote{The $\cOstar(\cdot)$ notation suppresses polynomial factors.}
     See \Cref{sec:wild}.

    \item  Finally, we observe that $(\ell,d)$-\CMM is W[1]-hard parameterized by $d$. See \Cref{sec:wild}.
\end{enumerate}

We make the following practical contributions:

\begin{enumerate}
    \item We show that the parallelized version of our algorithm for $(\ell,d)$-\CMM processes real-world datasets in a reasonable amount of time.
    \item We show that the runtime and space usage predicted by our theoretical results are accurately reflected in practice.
    \item We show that our exact algorithm finds motif pairs with significantly higher support than the state of the art, which highlights its usefulness.
    \item We show that our exact algorithm finds many top-scoring motif pairs that the state of the art misses.
\end{enumerate}

For the above practical contributions, see \Cref{sec:exp}. We begin with some preliminaries in \Cref{sec:prel}.

\section{Preliminaries.}\label{sec:prel}

We assume throughout that arrays and strings start at position~$1$.

\subsection{Strings.} A \emph{string} is a finite sequence of characters over an alphabet $\Sigma$.
In this work, we consider strings over a totally ordered alphabet.
We denote the length of a string $T$ by $|T|$.
For any index $i \in [1, |T|]$, we denote the $i$-th character of $T$ by $T[i]$.
A string of the form $T[i\dd j] = T[i]T[i+1]\cdots T[j]$,
where $1 \le i \le j \le |T|$, is called a \emph{fragment} of $T$.
If $T$ is known, we can encode $T[i\dd j]$ in $\cO(1)$ space as the pair $(i,j)$.
A fragment of the form $T[1\dd j]$ is called a \emph{prefix} of $T$, and a fragment of the form $T[i\dd |T|]$ is called a \emph{suffix} of $T$.
 A string $P$ is a \emph{substring} of $T$
 if $P=T[i\dd i+|P|-1]$, for some $i$.
 In this case, we say that $P$ \emph{occurs} in $T$.
 A \emph{wildcard} is a special symbol $\letx$
 matching any $a\in\Sigma\sqcup\{\letx\}$.

Given a string $T$ of length $n$, the \emph{suffix array} of $T$, denoted by $\textsf{SA}_T[1 \dd n]$, stores a permutation of the indices $\{1, \dots, n\}$ such that $T[\textsf{SA}_T[i] \dd n]$ is the $i$-th lexicographically smallest suffix of~$T$.
Let $\operatorname{lcp}(X,Y)$ denote the length of the \emph{longest common prefix} (LCP) of the strings $X$ and $Y$.
The \emph{LCP array} of $T$, denoted by $\textsf{LCP}_T[1\dd n]$, is defined as follows: $\textsf{LCP}_T[1] = 0$, and for every $i \in [2, n]$, \(
\textsf{LCP}_T[i] = \operatorname{lcp}\bigl(T[\textsf{SA}_T[i-1]\dd n],\, T[\textsf{SA}_T[i]\dd n]\bigr).
\)

\subsection{Graphs and motifs.}
Let $G=(V,E)$ be an undirected graph, where $V \subseteq \Sigma^*$, for an alphabet~$\Sigma$.
Let $N \coloneqq \sum_{v\in V} |v|$ denote the total length of all $v\in V$.
An \textit{$(\ell,d)$-motif} is a string $X \in (\Sigma \sqcup \{\letx\})^{\ell}$ containing exactly $d$ wildcard symbols~$\letx$.
When $d = 0$, the motif is simply a string over~$\Sigma$ of length~$\ell$.
An $(\ell,d)$-\textit{motif pair} $\{X, Y\}$ is an unordered pair of $(\ell,d)$-motifs.
We may omit $\ell$ and $d$ when they are clear from the context and then simply speak of motifs and motif pairs.

\subsection{Motif-induced subgraphs.}
For a motif~$X$, let $V_X \coloneqq \{\,v \in V \mid \text{$X$ occurs in } v\,\}$ be the set of nodes where~$X$ occurs. If a node $v$ contains multiple occurrences of~$X$, then $v$ is included only once in~$V_X$.

For a motif pair $\{X,Y\}$, the set of edges whose endpoints contain $X$ and~$Y$, respectively, is
\[
E_{X,Y} \coloneqq \{\,\{u,v\} \in E \mid u \in V_X \text{ and } v \in V_Y\,\}.
\]

We set $f_E(G, \{X,Y\})  \coloneqq\setsize{E_{X,Y}}$.
The subgraph \textit{selected} by $\{X,Y\}$ is then $G_{X,Y} \coloneqq (V_X \cup V_Y, E_{X,Y})$.
If $V_X \cap V_Y$ is nonempty, we may have an edge between distinct nodes $v, u \in V_X$ provided that at least one of them is in $V_X \cap V_Y$.
On the other hand, if $v,u \in V_X \setminus V_Y$, they cannot be adjacent by definition of~$E_{X,Y}$.

A motif pair $\{X,Y\}$ is called a $(k_X,k_Y,k_{X,Y})$\textit{-motif pair} for~$G$ if $G_{X,Y}$ satisfies $|V_X| = k_X$, $|V_Y| = k_Y$, and $|V_X \cap V_Y| = k_{X,Y}$.
It is called \textit{complete} if every node of $V_X$ is adjacent to every node of $V_Y$.
A complete bipartite graph is called a \textit{biclique}.

\subsection{Support measures.}
A \textit{support measure} $f$ on~$G$ is a function that maps an $(\ell,d)$-motif pair $\{X,Y\}$ to a nonnegative real number $f(G,\{X,Y\})$.
Different types of support functions have been considered in the literature~\cite{tan2006,leung2007,DBLP:journals/tcbb/BoyenDNHD11}.
We restrict our attention to $f_E$ and $\chisup$, which we define below.
The former is simple and illustrative and will serve to make our theoretical analysis lucid.
The second works well in practice \cite{DBLP:journals/tcbb/BoyenDNHD11} and we use it in our experimental evaluation (we also show that our theoretical results extend to it).

\subsection{The $\chisup$ support measure.}
The $\chi^2$-based support measure of a graph~$G$ with respect to an $(\ell,d)$-motif pair $\{X,Y\}$, denoted by $\chisup$, is defined as
{\small\[
\chisup(G,\{X,Y\}) \coloneqq
\begin{cases}
    \dfrac{(|E_{X,Y}| - \overline{E_{X,Y}})^2}{\overline{E_{X,Y}}} & \text{if $|E_{X,Y}| > \overline{E_{X,Y}}$}, \\[8pt]
    0 & \text{otherwise},
\end{cases}
\]}
where $\overline{E_{X,Y}}$ is the expected number of edges between $V_X$ and~$V_Y$ under a uniform random-graph model.
Concretely,
\[
\overline{E_{X,Y}} \coloneqq \ed(G) \cdot E^{\max}_{k_X,k_Y,k_{X,Y}},
\]
where $\ed(G) \coloneqq |E| / \binom{|V|}{2}$ is the \textit{edge density} of~$G$, and
\[
E^{\max}_{k_X,k_Y,k_{X,Y}} \coloneqq k_X k_Y - \binom{k_{X,Y}}{2} - k_{X,Y}
\]
is the maximum possible number of edges in the subgraph selected by any $(k_X,k_Y,k_{X,Y})$-motif pair.
The parameters $k_X$, $k_Y$, and~$k_{X,Y}$ are as defined above; $|V|$ and $|E|$ are trivially obtained from the input.
Intuitively, this measure captures how much above the average density a subgraph selected by a motif pair is, taking into account its size.

\section{An exact algorithm for \CMM.}\label{sec:exact}
To simplify our presentation, we focus on the $f_E$ support measure for now, but the same algorithmic ideas also apply to the $\chisup$ support measure.
Recall that $f_E(G, \{X,Y\})$ counts the edges $\{u,v\}\in E$ such that $X$ matches a substring of $u$ and $Y$ matches a substring of $v$, or vice versa.

\begin{theorem}\label{thm:basic}
There is an algorithm for \CMM with $f_E$ running in time $\cO(N^2)$, where $N\coloneqq\sum_{v \in V}|v|$.
The space used is $\cO(N)$.
\end{theorem}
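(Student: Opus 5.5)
Our plan is to reduce the search space to the distinct length-$\ell$ substrings of the node strings, and then evaluate every pair of them in amortized constant time. An optimal pair $\{X,Y\}$ with $f_E(G,\{X,Y\})>0$ must have $X$ and $Y$ occurring in some nodes. A pair in which a motif occurs nowhere has support $0$, which is also attained by any pair, so we may restrict to motifs that occur in $V$. There are at most $N$ distinct such length-$\ell$ strings.

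\textbf{Step 1 (naming the candidate motifs).} Concatenate all node strings with distinct separators into $T$ of length $\cO(N)$. Build $\textsf{SA}_T$ and $\textsf{LCP}_T$ in $\cO(N)$ time (assuming an integer alphabet polynomial in $N$). Scan the suffix array and cut it into maximal blocks with $\textsf{LCP}_T\ge \ell$, discarding suffixes that cross a separator within $\ell$ characters. Each block corresponds to one distinct motif $X_1,\dots,X_m$ with $m\le N$. For each motif $X_j$ we record the set $V_{X_j}$, removing duplicates by marking the last block id seen per node. Each node $v$ contributes to at most $|v|-\ell+1$ motifs, so $\sum_j |V_{X_j}|\le N$, and this data takes $\cO(N)$ space. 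Equivalently, we store for each node $v$ its list $L_v$ of distinct motif ids, again with $\sum_v|L_v|\le N$.

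\textbf{Step 2 (counting all pairs).} For a fixed motif $X_i$, we compute $f_E(G,\{X_i,X_j\})$ for all $j$ simultaneously using a counter array $C[1\dd m]$, reset after each $i$. The edge $\{u,v\}$ is counted if ($u\in V_{X_i}$ and $v\in V_{X_j}$) or ($v\in V_{X_i}$ and $u\in V_{X_j}$). Consider the set $E_i$ of edges with at least one endpoint in $V_{X_i}$. For each such edge, collect the union of motif ids $j$ that are satisfied by one of the two orientations, and increment $C[j]$ once per edge. Deduplication within an edge uses a timestamp array. Updating from one $i$ to the next touches only the relevant entries. A cruder bound for this step is $\sum_i\sum_{\{u,v\}\in E_i}(|L_u|+|L_v|)$. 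Afterwards we scan the touched entries and keep the maximum.

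\textbf{Main obstacle (the time bound).} The nontrivial part is showing that this totals $\cO(N^2)$, since edges are not bounded by $N$: a node may have many neighbours. We would charge the work differently. For each $i$ and each $u\in V_{X_i}$, we iterate over neighbours $v$ of $u$ and their lists $L_v$. Summed over $i$, each node $u$ is visited $|L_u|$ times, giving work $\sum_u |L_u|\sum_{v\in N(u)}|L_v|\le \sum_u|L_u|\cdot N$, which is $N\cdot N=N^2$ provided each neighbour contributes at most its own length. This holds because $\sum_{v\in N(u)}|L_v|\le\sum_v|L_v|\le N$ when $G$ is simple.

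Handling the ``or vice versa'' without double counting requires care. We process ordered incidences ($u\in V_{X_i}$, $v$ a neighbour, $j\in L_v$) and mark the edge-id/$j$ pair with a timestamp, so that an edge satisfying both orientations is counted once. This keeps the whole computation in $\cO(N^2+|E|)$ time. Since the graph is simple and each nonempty node has length at least $1$, we have $|E|\le |V|^2\le N^2$, so the bound is $\cO(N^2)$. The extra space is the counter and timestamp arrays of size $m\le N$, plus the input, for $\cO(N)$ in total.
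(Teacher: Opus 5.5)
Your proposal is correct and follows essentially the paper's route: name the distinct length-$\ell$ substrings by SA/LCP ranks, fix $X$, sweep the edges with an endpoint in $V_X$ while incrementing a rank-indexed counter for the substrings of the opposite endpoint, and charge the work to (edge, substring of one endpoint, substring of the other) triples, which your bound $\sum_u|L_u|\sum_{v\in N(u)}|L_v|\le N^2$ captures just as the paper's $\sum_{\{u,v\}\in E}|u||v|\le N^2$ does. One detail needs pinning down: a timestamp array indexed by $j$ deduplicates the two orientations only if both are processed back-to-back for each edge, as in your Step~2 (e.g., handle an edge with both endpoints in $V_{X_i}$ once, from its smaller endpoint, using an $\cO(|V|)$ node-marker array); in the node-by-node order of your last paragraph another edge can overwrite the stamp in between, whereas the paper sidesteps this with a second loop that adds, for such edges, only those $Y$ occurring in $u$ but not in $v$.
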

\begin{proof}
The algorithm (see \Cref{alg:basic}) iterates over all motif pairs $\{X,Y\}$, such that $X$ and $Y$ are substrings of at least one of the labels, and then counts the corresponding number of edges.
That is, it computes the value $f_E(G, \{X,Y\})$ explicitly.
Therefore, taking the maximum will clearly produce the desired output.

The more interesting part is how to implement some of the operations efficiently enough.
The short version is that the suffix array (SA) will give us all the necessary tools.
First, observe that we can construct the SA of the concatenation of the node strings (with some separation character) in $\cO(N)$ time~\cite{SA} after sorting all characters occurring in
all strings $v\in V$ in $\cO(N\log N)$ time.
The SA (together with the LCP array) gives us a constant-size unique name for each length-$\ell$ substring of any label that can be computed (after preprocessing) in constant time given a location in a node string.
Call this the \emph{rank} of a substring.
Note that the ranks are the same for all occurrences of the same substring.
In addition, these are integers bounded by $N$.
One may think of this as a perfect hash function that we can use to (1) compare substrings in constant time, and (2) implement the dictionary $C$ defined in \lineref{alg:basic:dict} using $\cO(N)$ space.
Using (1) and (2), we implement the loop in \lineref{alg:basic:outer-loop} so that it runs in $\cO(N)$ time.
In particular, after a single linear pass over the node strings, we use these names to build a simple inverse index to access the nodes that contain a given rank, as required in \lineref{alg:basic:index}.
As we access the nodes for each label exactly once and this inverse index has total size $\cO(N)$, that is also the total time to access the nodes.
Similarly, observe that in a linear pass, we can build a list of the unique ranks that occur in each node, making the loop in \lineref{alg:basic:inner-loop} take $\cO(N)$ time in total for each $X$.
The operations in the innermost loop require constant time.
In particular, the operation $\preceq$ can be any total order on the length-$\ell$ substrings.
To get a constant-time comparator, we choose the rank order.

Thus, we are finally left to argue why the loop in \lineref{alg:basic:edge-loop} does not blow up the running time.
Observe that each edge $\{u,v\}$ is only considered for a motif pair $\{X,Y\}$ if $X$ occurs in $u$ and $Y$ occurs in $v$. (Each edge is visited exactly once in Line~\ref{alg:basic:edge-loop}, oriented so that
$u\in V_X$; the second loop adds the $Y$'s that occur only in $u$.)
This can happen at most $(\setsize{u}-\ell+1) (\setsize{v}-\ell+1)$ times. Then
\begin{align*}
    &\sum_{\{u,v\} \in E} (\setsize{u}-\ell+1) (\setsize{v}-\ell+1)\\
    \le &\sum_{(u,v) \in V \times V} \setsize{u} \setsize{v}
    = \left(\sum_{u \in V} \setsize{u}\right) \left(\sum_{v \in V} \setsize{v}\right) = \cO(N^2),
\end{align*}
yields the desired bound. The space used is $\cO(N)$ as SA and $C$ are implemented in $\cO(N)$ space.
\end{proof}

\begin{algorithm}[t]
\caption{Basic algorithm for \CMM with $f_E$. Some operations require care to implement efficiently; see the proof of \Cref{thm:basic}.}
\label{alg:basic}
\begin{algorithmic}[1]
\Require Graph $G=(V,E)$ with $V \subseteq \Sigma^*$, integer $\ell>0$
\Ensure Motif pair $\{X^\star,Y^\star\}$ from $\Sigma^{\ell}$ maximizing $f_E$
\State Let $C:\Sigma^{\ell} \to \Z_{\geq 0}$ be indexed using SA ranks \label{alg:basic:dict}
\State $(X^\star,Y^\star) \gets (\bot, \bot)$
\For{each distinct length-$\ell$ substring $X$ in $V$} \label{alg:basic:outer-loop}
    \State $V_X \gets \{u \in V \mid X \text{ is a substring of } u\}$ \label{alg:basic:index}
    \For{each $\{u,v\}\in E$ with $u\in V_X$ }\label{alg:basic:edge-loop}
        \For{each distinct length-$\ell$ substring $Y$ of $v$} \label{alg:basic:inner-loop}
            \If{$X \preceq Y$}
                \State $C[Y] \gets C[Y] + 1$
            \EndIf
        \EndFor
    \EndFor
    \For{each $\{u,v\}\in E$ with $u,v\in V_X$}
        \For{each distinct length-$\ell$ substring $Y$ of $u$}
            \If{$X \preceq Y$ and $Y$ is \textbf{not} a substring of $v$}
                \State $C[Y] \gets C[Y] + 1$
            \EndIf
        \EndFor
    \EndFor
    \State{Update $\{X^\star, Y^\star\}$ if $\max(C) > f_E(G, \{X^\star, Y^\star\})$}
    \State Reset (the touched entries of) $C$
\EndFor
\State \Return $\{X^\star,Y^\star\}$
\end{algorithmic}
\end{algorithm}

\begin{remark}[Sparse graphs]
    It is easy to see that the running time of \Cref{alg:basic} can also be bounded by $\cO(N \log N + \setsize{E} \max_{v \in V}\setsize{v}^2)$.
    Thus, we expect the algorithm to perform better on sparse graphs where the labels are relatively short. The latter is natural because in many practical instances, the labels have roughly the same length (thus, relatively short compared to $N$).
\end{remark}
\begin{remark}[Enumeration]
    Instead of simply optimizing $f_E$ in \Cref{alg:basic},
    we can build a complete sorted list of all motif pairs and their support value (storing them using ranks).
    Since the values of the support measure are bounded by $|E| \le N^2$, we can do this while maintaining the $\cO(N^2)$ running time using an algorithm like radix sort.
    Unfortunately, this increases the space complexity to $\cO(N^2)$ in the worst case.
    In practice, we are only interested in the top-$k$ motif pairs.
    These we can maintain with an $\cO(\log k)$ multiplicative overhead by keeping a heap with the top-$k$ pairs.
\end{remark}

We defer the changes required for $f_{\chi^2}$
and the \SETH lower bounds for both $f_{E}$ and $f_{\chi^2}$ to Appendix~\ref{app:exact}.

\section{An exact algorithm for \ldCMM.}\label{sec:wild}
\Cref{alg:basic} for \CMM (\Cref{thm:basic}) will serve as a basis for our exact algorithm for \ldCMM.

The basic premise is simple: iterate over all possible combinations of wildcard positions in the two motifs and then run a modified version of \Cref{alg:basic} that ignores the characters at these positions.
While this seems easy, achieving it without losing efficiency is somewhat tricky, especially when targeting $\cO(N)$ space.
Recall that in \Cref{alg:basic}, we heavily relied on the rank of every length-$\ell$ substring in the suffix array (SA) of the concatenated node strings.
In particular, we relied on the fact that this rank is a compact representation of the substrings.
Thus, we were able to compare any two substrings for equality by simply comparing their ranks.
With the $d$ wildcards, we need to treat substrings that are actually different as the same, that is, we have a relaxed notion of equality that ignores the wildcarded positions.
Defining the right notion of rank and implementing access to it efficiently requires a substantial amount of care. We show the following core lemma.

\begin{lemma}\label{lem:wildcard-ranks}
Given a set $V\subseteq\Sigma^*$ of strings of total length $N$ over $\Sigma=\{1,2,\ldots,N\}$, an integer $\ell\in[1,N]$, and a set $H \subseteq [\ell]$ with $d\coloneqq |H| < \ell$, we can
construct an $\cO(N)$-size data structure in $\cO((d+1) N)$ time and $\cO(N)$ working space answering the following queries in $\cO(1)$ time: given $(u,i)$, with $u\in V$ and $i \in [1,\setsize{u} - \ell  +1]$, return a \emph{rank} (a unique identifier) of string $M$ of length $\ell$, such that, for all $j \in [1, \ell]$:
\[
M[j] \coloneqq
\begin{cases}
\letx & \text{if } j \in H, \\
u[i + j - 1] & \text{if } j \notin H.
\end{cases}
\]
In particular, the output rank is an integer from $[0,N)$.
\end{lemma}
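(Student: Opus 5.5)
The plan is to split the wildcard set into maximal runs and name each masked window by combining per-block ranks, one block at a time. Write $[\ell]\setminus H$ as a union of at most $d+1$ maximal intervals (blocks) $B_1,\dots,B_k$ with $k\le d+1$. Block $B_t=[s_t, e_t]$ has length $\lambda_t = e_t-s_t+1$. For a window starting at $(u,i)$, the masked string $M$ is determined exactly by the tuple of fragments $u[i+s_t-1 \dd i+e_t-1]$ for $t=1,\dots,k$, because the wildcard positions are fixed by $H$. Two windows therefore yield the same $M$ if and only if their tuples agree blockwise.

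First, as in the proof of \Cref{thm:basic}, build the suffix array and LCP array of the concatenation $T$ of all $v\in V$ with distinct separators, in $\cO(N)$ time; the alphabet is already integer, so no sorting is needed. For a fixed length $\lambda$, give every position $p$ of $T$ a name $r_\lambda(p)\in[0,N)$ for the length-$\lambda$ fragment starting there. To do this, scan the SA in order and start a new name whenever $\textsf{LCP}[j]<\lambda$ (or the suffix is shorter than $\lambda$). This is one linear pass, so it costs $\cO(N)$ per block and $\cO((d+1)N)$ over all blocks.

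Next, combine the block names incrementally. Keep an array $R[1\dd N]$, where $R[p]$ will be the final rank of the window starting at text position $p$. Initialize it with block $1$: $R[p]=r_{\lambda_1}(p+s_1-1)$. For $t=2,\dots,k$, form the pairs $(R[p], r_{\lambda_t}(p+s_t-1))$ over all valid window starts $p$. Both components lie in $[0,N)$, so two rounds of counting or radix sort order the pairs in $\cO(N)$ time and $\cO(N)$ space. Then rename the distinct pairs with consecutive integers from $[0,N)$ and overwrite $R$. By induction on $t$, $R[p]=R[q]$ holds if and only if the windows at $p$ and $q$ agree on $B_1\cup\dots\cup B_t$. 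After round $k$ this is exactly equality of the masked strings $M$. There are at most $N$ windows, so the names stay in $[0,N)$. The total time is $\cO((d+1)N)$, and only $\cO(1)$ arrays of size $N$ are alive at any moment. A query $(u,i)$ is answered by looking up $R[\mathrm{off}(u)+i-1]$, where $\mathrm{off}(u)$ is the start of $u$ in $T$ and is stored per node. The final data structure is just $R$ plus these offsets, which is $\cO(N)$ size.

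The main obstacle is keeping each round at $\cO(N)$ rather than $\cO(N\log N)$, and the working space at $\cO(N)$, which rules out hashing tuples of length up to $d+1$. The incremental pairwise renaming with radix sort handles both, since each round only ever sorts pairs of integers bounded by $N$. Two smaller points need care. Windows that would cross a separator are never queried, so they can be ignored. The edge case $H=\emptyset$ reduces to a single block, which is the rank used in \Cref{thm:basic}. If $k$ were counted naively as $\ell-d$ singleton blocks, the bound would degrade to $\cO((\ell-d) N)$; grouping into maximal intervals is what gives $k\le d+1$.
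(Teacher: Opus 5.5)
Your proposal is correct and follows essentially the same route as the paper. Both build the SA and LCP array of the separated concatenation, name the fixed-length fragments of each of the at most $d+1$ maximal non-wildcard blocks with one LCP scan, and refine a position-indexed rank array block by block by radix-sorting pairs (current rank, block name). One cosmetic point: if $[\ell]\setminus H$ is a single block you never rename over the valid windows, so your ranks lie in $[0,|T|)$ rather than $[0,N)$; a final renaming pass fixes this, and the paper gets it for free by initializing all ranks to $0$ and refining with every block, including the first.
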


\begin{proof}
Consider \Cref{alg:wildcard-ranks}.
We show that this algorithm solves exactly the claimed task.

Let $H=\{h_1<\cdots<h_d\}$. Further let $S_V=v_1\#_1v_2\#_2\dots v_{|V|}\#_{|V|}$ be the concatenation of the input strings with appropriate separators and let $N_V\coloneqq|S_V|-|V|$. (We will ignore suffixes of $S_V$ starting with separators.)
In $\cO(N_V)$ time, we construct the suffix array $\textsf{SA}_{S_V}$~\cite{SA} and the LCP array $\textsf{LCP}_{S_V}$~\cite{DBLP:conf/cpm/KasaiLAAP01}.

Consider the list of the suffixes of $S_V$.
The characters that occur at the positions $h_1, \ldots, h_d$ of every suffix are replaced by wildcards to obtain $N_V$ suffixes each split into $(d + 1)$ fragments.
The above replacements are \emph{implicit}; we never apply them.
Moreover, we are interested only in the length-$\ell$ prefix of each suffix.
Some care is needed: if a suffix is shorter than $\ell$, we ignore it from any further computation; and if a suffix contains a separator $\#_i$ in its length-$\ell$ prefix, we also ignore it. Using $\textsf{SA}_{S_V}$ and starting with the first $N_V$ substrings (prefixes of suffixes) of length exactly $(h_1-1)$, a new rank is assigned to each of them according to the lexicographical ordering of these substrings; this can be easily derived from a single scan of $\textsf{LCP}_{S_V}$ in $\cO(N_V)$ time. We then proceed to $h_2$ analogously, and so on.
To maintain $\cO(N_V)$ space, for every consecutively considered pair of wildcard positions, we use radix sort over the $N_V$ pairs of ranks (corresponding to the $N_V$ suffixes) to obtain a new integer ranking over $[0,N_V)$, and the process continues.
In the end, we have a new lexicographical ordering of the length-$\ell$ prefixes of suffixes of $S_V$ that disregards the characters at the $h_1$-th,$\ldots$,$h_d$-th positions. Moreover, for every two successive length-$\ell$ prefixes in this order, we know whether they have an LCP of length at least $\ell$, which suffices for our purposes.
The whole process thus takes $\cO(N_V\cdot (d+1))$ time.
Since this list is sorted, identical length-$\ell$ prefixes (here identical means with $d$ wildcards) are \emph{grouped} and so each group member gets the same rank  from $[0,N_V)$.

Two length-$\ell$ substrings are identical under $H$ if and only if their constituent fragments between the indices of $H$ are identical. By treating each substring as a tuple of ranks corresponding to these fragments---specifically $(r_1, r_2, \dots, r_{d+1})$---and applying radix sort, we obtain a lexicographical ordering (of the suffixes) based strictly on the subsequence of non-wildcard characters. This correctly groups all occurrences of a specific motif $M$ into a contiguous range in the sorted list.

The ranks (unique identifiers) are stored in the array $\mathsf{R}$ returned by \Cref{alg:wildcard-ranks}, indexed by the starting position of each suffix in $S_V$. Given $u\in V$ and an offset $i$, the length-$\ell$ prefix of the corresponding suffix is exactly the length-$\ell$ substring $u[i\dd i+\ell-1]$; hence, a single lookup in $\mathsf{R}$ returns its identifier in $\cO(1)$ time, which is invariant under replacing the positions in $H$ by $\letx$, as claimed. The $d$ radix-sort rounds and the LCP array scans each cost $\cO(N_V)$ time. In total, the algorithm costs $\cO((d+1) N)$ time and uses $\cO(N)$ working space.
\end{proof}

\begin{remark}
The technique in \Cref{lem:wildcard-ranks}
is inspired by the construction of the \emph{gapped suffix array}~\cite{DBLP:conf/spire/CrochemoreT10}.
A notable difference is the following:
since $\ell$ is fixed, \emph{we avoid the use of LCE queries} (cf.~\cite{DBLP:journals/tcs/LandauV86}),
and compute the relevant LCP values by comparing integers in $\cO(1)$ time (Lines 8--17).
While this makes no difference in theory, it is crucial in practice, as LCE data structures are typically either costly to construct or
costly to query.
\end{remark}

\begin{algorithm}[t]
\caption{Rank assignment to $(\ell,d)$-motifs. See the proof of \Cref{lem:wildcard-ranks}.}
\label{alg:wildcard-ranks}
\begin{algorithmic}[1]
\Require Wildcard positions $H=\{h_1<\dots<h_d\}\subseteq[\ell]$;
string $S_V=v_1\#_1 v_2\#_2\ldots v_{\setsize{V}}\#_{\setsize{V}}$;
suffix array $\mathsf{SA}_{S_V}$; LCP array $\mathsf{LCP}_{S_V}$
\Ensure Array $\mathsf{R}$ such that $\mathsf{R}[i]$ is the rank of suffix
$S_V[i \dd |S_V|]$ by its length-$\ell$ prefix, in lexicographic order
ignoring the positions in $H$
\State Let $\mathcal{S}$ be the positions $i$ such that $S_V[i\dd i+\ell-1]$
       is defined and contains no separator
\State $\mathsf{R}[i] \gets 0$ for each $i \in \mathcal{S}$
\State Add $h_0 \coloneqq 0$ and $h_{d+1} \coloneqq \ell+1$ to $H$ \Comment{Handle intervals before first and after last wildcard}
\For{$t \gets 0$ \textbf{to} $d$}
    \State $I_t = (h_t, h_{t+1})$ \Comment{Non-$\letx$ interval}
    \State \IfInline{$I_t$ is empty}{continue}
    \algstep{Step 1: Compute ranks for $I_t$}
    \State $\rank\gets 0$;\quad $\mathsf{R}'[\mathsf{SA}_{S_V}[1]] \gets 0$
    \For{$j \gets 2$ \textbf{to} $|S_V|$}
      \State \textbf{if} $\mathsf{LCP}_{S_V}[j] < \setsize{I_t}$ \textbf{then} $\rank\gets\rank+1$
      \State $\mathsf{R}'[\mathsf{SA}_{S_V}[j]] \gets \rank$
    \EndFor
    \algstep{Step 2: Refine by (current rank, rank for $I_t$)}
    \State Stably sort $\mathcal{S}$ by $\key(j) = \left(\mathsf{R}[\mathcal{S}[j]],\ \mathsf{R}'[\mathcal{S}[j]+h_t]\right)$
    \State $\rank \gets 0;\quad \widehat{\mathsf R}[\mathcal S[1]]\gets 0$
    \For{$j \gets 2$ \textbf{to} $|\mathcal{S}|$}
      \State \IfInline{$\key(j) \ne \key(j-1)$}{$\rank \gets \rank + 1$}

      \State $\widehat{\mathsf R}[\mathcal S[j]] \gets \rank$
    \EndFor
    \State $\mathsf{R} \gets \widehat{\mathsf R}$
\EndFor
\State \Return $\mathsf{R}$
\end{algorithmic}
\end{algorithm}

\begin{theorem}\label{thm:main}
There is an algorithm for $(\ell,d)$-\CMM with $f_E$ running in time \[\cO\left(\binom{\ell}{d}^2 N^2\right),\]
where $N\coloneqq\sum_{v \in V}|v|$.
The space used is $\cO(N)$.
\end{theorem}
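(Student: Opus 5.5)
We reduce the problem to $\binom{\ell}{d}^2$ runs of \Cref{alg:basic}, one for each choice of wildcard positions $(H_X,H_Y)$ with $H_X,H_Y\subseteq[\ell]$ and $|H_X|=|H_Y|=d$. Every $(\ell,d)$-motif $X$ has a unique wildcard set $H_X$. Moreover, $X$ matches a substring $u[i\dd i+\ell-1]$ exactly when $X$ equals that substring with the positions in $H_X$ replaced by $\letx$. Hence the only motifs with nonzero $f_E$ are those obtained from some length-$\ell$ substring of a node string by masking $H_X$, and analogously for $Y$. So it suffices to solve, for each fixed pair $(H_X,H_Y)$, the problem restricted to motifs $X$ with wildcard set $H_X$ and $Y$ with wildcard set $H_Y$, and to keep the best pair overall. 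This is correct because $\max f_E$ over all pairs is the maximum over these classes. An unordered pair $\{X,Y\}$ with $H_X\neq H_Y$ is covered by the ordered pair $(H_X,H_Y)$. The case $H_X=H_Y$ is exactly the setting of \Cref{alg:basic}, with its symmetry breaking $X\preceq Y$.

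For a fixed pair $(H_X,H_Y)$, we apply \Cref{lem:wildcard-ranks} twice, obtaining two rank arrays $\mathsf R_X$ and $\mathsf R_Y$. This takes $\cO((d+1)N)$ time and $\cO(N)$ space, and afterwards the rank of the masked motif at any $(u,i)$ is available in $\cO(1)$ time. We then rerun \Cref{alg:basic} with modified rank lookups. The outer loop ranges over the distinct $\mathsf R_X$-ranks. The inverted index $V_X$ is built from $\mathsf R_X$. For each node we precompute its list of distinct $\mathsf R_Y$-ranks, deduplicated in linear time with a timestamped bucket array indexed by rank in $[0,N)$. The dictionary $C$ is indexed by $\mathsf R_Y$-ranks.

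When $H_X\neq H_Y$, the ordered pair $(H_X,H_Y)$ already determines which side plays $X$. We therefore drop the $X\preceq Y$ test and iterate only over ordered pairs $(H_X,H_Y)$ with $H_X\le H_Y$ in some fixed order on sets. We must then count each edge once per orientation in which it qualifies, i.e.\ we need $|\{\{u,v\}\in E: (u\in V_X, v\in V_Y)\text{ or }(v\in V_X,u\in V_Y)\}|$. The second loop of \Cref{alg:basic} handles this. It adds $Y$ for $u$ when both endpoints lie in $V_X$ and $Y$ is not in $v$; the membership test ``$Y$ occurs in $v$'' is answered with a per-node hash or marker array over $\mathsf R_Y$-ranks. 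Correctness of the count then follows verbatim from \Cref{thm:basic}.

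The running time per pair $(H_X,H_Y)$ is bounded exactly as in \Cref{thm:basic}. An edge $\{u,v\}$ contributes at most $(|u|-\ell+1)(|v|-\ell+1)$ increments (times a constant for the two loops), since each masked motif comes from one of these window positions. Summing over edges gives $\cO(N^2)$, plus $\cO((d+1)N)$ for preprocessing. Over all $\binom{\ell}{d}^2$ pairs the total is $\cO(\binom{\ell}{d}^2(N^2+(d+1)N))=\cO(\binom{\ell}{d}^2N^2)$, since $d<\ell\le N$. We reuse the $\cO(N)$ space across iterations, storing only the current best pair as $(H_X,H_Y,$ witness positions$)$, so the space is $\cO(N)$. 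The main subtlety I expect is the symmetry handling when $H_X\neq H_Y$: we must ensure that no edge is double-counted and that no unordered pair $\{X,Y\}$ is missed. I would resolve this by proving that the two loops count each qualifying edge exactly once, with the orientation fixed by which endpoint contains the $X$-side motif.
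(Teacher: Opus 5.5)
Your proposal is correct and follows the paper's proof: for each pair of wildcard sets you build the ranks of \Cref{lem:wildcard-ranks}, rerun \Cref{alg:basic} with them in place of the plain suffix-array ranks in $\cO(N^2)$ time, and reuse the $\cO(N)$ space across the $\binom{\ell}{d}^2$ pairs, and your restriction to $H_X\le H_Y$, with the test $X\preceq Y$ kept only when $H_X=H_Y$, is exactly the symmetry handling the paper gives in its parallelization subsection. Two small points to tidy: the target is to count each qualifying edge once, as your set formula says, not ``once per orientation''; and the test ``$Y$ occurs in $v$'' should use a single global timestamped marker array loaded with the $\mathsf{R}_Y$-ranks of $v$ (cost $\cO(|v|)$ per visit, absorbed by the same $|u||v|$ charging), since a separate marker array per node would take $\cO(|V|N)$ space.
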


\begin{proof}
We solve $(\ell,d)$-\CMM by a slight adaptation of the algorithm from \Cref{sec:exact}. Fix a pair $(H,H')$ of wildcard combinations, one for $X$ and one for $Y$. In a pre-processing step, we sort the characters occurring in all strings $v\in V$, thus mapping them onto the range $\{1,2,\ldots,N\}$. This is done once in $\cO(N \log N)$ time. We now apply \Cref{lem:wildcard-ranks} that takes $\cO((d+1)N)$ time using $\cO(N)$ space, and obtain, for every length-$\ell$ substring $X$ (resp.\ $Y$) of a node, an identifier $\id(X)$ (resp.\ $\id(Y)$) under $H$ (resp.\ under $H'$) such that: any two length-$\ell$ substrings receive the
same identifier if and only if they agree on all non-wildcard positions.
Namely, we view $\id(X)$ as the equivalent of the rank from \Cref{alg:basic}.

These identifiers satisfy the two properties that the proof of
\Cref{thm:basic} requires of the rank: they let us (1) test equality of two
(wildcarded) length-$\ell$ substrings in $\cO(1)$ time and (2) index the dictionary $C$ in
$\cO(N)$ space. Hence, we run \Cref{alg:basic} verbatim on the fixed pair $(H,H')$,
substituting the wildcard-aware ranks for the plain ranks. By \Cref{thm:basic}, this
computes the maximum $f_E$ over all motif pairs whose wildcards sit at positions
$(H,H')$ in $\cO(N^2)$ time and $\cO(N)$ space.

Since the problem requires maximizing an objective function over unknown (non-fixed)
wildcard positions, the algorithm iterates through all possible pairs of $d$-size subsets $H \subset [\ell]$. This exhaustive search guarantees completeness: if an optimal motif pair
$\{X, Y\}$ exists with a specific wildcard configuration, the algorithm will inevitably
process it during the iteration. Running the above for each of the $\binom{\ell}{d}$ choices of $H$ and $\binom{\ell}{d}$ choices of $H'$ and taking the overall maximum therefore solves
$(\ell,d)$-\CMM in  $\cO\!\left(\binom{\ell}{d}^2 N^2\right)$ time using $\cO(N)$ space.
\end{proof}
This algorithm can be extended to the $\chisup$ support measure by adapting the $\chisup$ version of \Cref{alg:basic} (see \Cref{cor:basic-chisup}).
We defer the W[1]-hardness for $(\ell,d)$-\CMM, parameterized by $d$, to Appendix~\ref{app:wild}.

\subsection{Parallelization.}
\label{sec:parallel}

The algorithm of \Cref{thm:main} is easy to parallelize since the computation can be carried out independently for each \emph{wildcard combination}.
As motif pairs are unordered, roughly half of the $\binom{\ell}{d}^2$ combinations need to be considered, giving $\binom{\ell}{d}\big(\binom{\ell}{d}+1\big)/2$ independent jobs, each running in $\cO(N^2)$ time.
For $H \neq H'$, the motifs $X$ and $Y$ are always distinct, so the tie-breaking test $X \preceq Y$ of \Cref{alg:basic} is applied only in the jobs with $H = H'$.
Since each thread processes a disjoint subset of the motif pairs, the global optimum is the maximum of the per-job optima, so aggregation is trivial. We will later see that this is both easy to implement and efficient in practice.
Moreover, by using \Cref{thm:main}, $\cO(N)$ space is used per thread.

\section{Experimental results.} \label{sec:exp}
In this section, we compare our exact algorithm against SEQ-SLIDER and M-SLIDER, two variants of SLIDER~\cite{DBLP:journals/tcbb/BoyenDNHD11}, the state-of-the-art (heuristic) method.
Each employs a different neighborhood function when sliding over neighboring motifs during steepest ascent. We also empirically evaluate the time and space complexities of our exact algorithm to confirm our theoretical findings.
These goals were split into the following research questions:

\begin{enumerate}[leftmargin=3.1em,label=\bfseries RQ\arabic*:]
    \item How do the scores of SLIDER's solution set compare to the ones of the optimal solution set?
    \item What is the recall of SLIDER's solution set relative to the optimal solution set?
    \item How does our algorithm scale as $|V|$ (and thus $N$) grows?
    \item How does our algorithm scale as $|E|$ grows?
    \item How does our algorithm scale as the number $t$ of threads used increases?
\end{enumerate}

\begin{remark}
We excluded D-MOTIF~\cite{tan2006} and FindMotif~\cite{leung2007}, as their complexities are prohibitive for our datasets and parameters, and for the same scalability reasons, we excluded their heuristic counterparts, D-STAR and MotifHeuristics.
FindMotif and MotifHeuristics also lack publicly available implementations and rely on a different, probabilistic support measure.
MotifHeuristics was re-implemented and evaluated in~\cite{DBLP:journals/tcbb/BoyenDNHD11}, where it was shown to be outperformed by SLIDER.
\end{remark}

\subsection{Datasets.}
We selected 5 networks of different species and with varying sizes among the 10 most queried networks from the STRING database~\cite{mering2003,szklarczyk2023} (version 12).
In order to replicate the setup of~\cite{DBLP:journals/tcbb/BoyenDNHD11} for RQ1 and RQ2 (which concern the \emph{accuracy} of SLIDER), we also included the two networks used in~\cite{DBLP:journals/tcbb/BoyenDNHD11}.
The characteristics of the networks are summarized in \Cref{tab:string-data-preproc}.
These networks were preprocessed as described in \Cref{a:datasets}. All networks in \Cref{tab:string-data-preproc} were used to address RQ1 and RQ2.
For RQ3, RQ4, and RQ5, we also generated smaller artificial networks as described in \Cref{a:datasets} to thoroughly investigate different input parameters.

\begin{table}[t]
\footnotesize
\caption{Top: Parameters of datasets from the STRING database after preprocessing. Bottom: Parameters of datasets from~\cite{DBLP:journals/tcbb/BoyenDNHD11} after preprocessing.}\label{tab:string-data-preproc}
\begin{center}
\begin{tabular}{lccc}
\toprule
Dataset & $|V|$ & $|E|$ & $N$ \\
\midrule
E.\ coli          & \num{4006}  & \num{32244}  & \num{1277734} \\
S.\ cerevisiae I  & \num{5791}  & \num{104188} & \num{2744228} \\
D.\ melanogaster  & \num{10522} & \num{118698} & \num{6043033} \\
A.\ thaliana      & \num{20239} & \num{331525} & \num{8903236} \\
H.\ sapiens I     & \num{16201} & \num{236930} & \num{9605435} \\
\midrule
S.\ cerevisiae II & \num{1620}  & \num{9060}   & \num{862281}  \\
H.\ sapiens II    & \num{10380} & \num{69515}  & \num{5607838} \\
\bottomrule
\end{tabular}
\end{center}
\end{table}

\subsection{Implementations.}
We implemented our exact algorithm (\Cref{thm:main}) in \texttt{C++}, which we make available at \url{https://github.com/lucper/cmm}.
We used \texttt{libsais64} (version 2.4.10) to construct the SA and the LCP array~\cite{libsais} and implemented the parallel version of \Cref{thm:main} (as described in \Cref{sec:parallel}) using OpenMP~\cite{dagum1998-omp}.
For SLIDER, we used the \texttt{Java} implementation provided by the authors~\cite{DBLP:journals/tcbb/BoyenDNHD11}, which is also available in our repository. Since we do not compare the efficiency of the two implementations, using the latter suffices for our purposes.

\subsection{Setup.}
All experiments were conducted on two 32-core Intel Xeon Gold 6530 CPUs (2 threads per core, totaling $P=128$ threads across the two CPUs), operating at a base clock frequency of 2.1\,GHz (up to 4.0\,GHz), and 512\,GB RAM.
We compiled our implementations with \texttt{g++} version \texttt{15.2.1} at the \texttt{-O3} optimization level.
Wall-clock times were measured with \texttt{std::chrono::steady\_clock}.
Peak RAM usage was measured as the maximum resident set size using the \texttt{getrusage} function.

\subsection{Metrics for RQ1 and RQ2.}\label{s:metrics}
\paragraph{Similarity index.}
In order to compare solution sets, we introduce a notion of similarity between $(\ell,d)$-motif pairs.
We count the edges in which two $(\ell,d)$-motif pairs, $\{X,Y\}$ and $\{Z,W\}$, co-occur such that the $(\ell,d)$-motifs in each endpoint of the edge match substrings that are at most $h$ letters apart, where $h$ is a non-negative integer.
(Note that for $h=0$, we count the edges in which there exists a substring in each endpoint that both $(\ell,d)$-motifs for that endpoint match.)

Let $\occ_X(u)$ denote the set of occurrences of an $(\ell,d)$-motif $X$ in string $u$, namely $\occ_X(u)=\{i \mid X \text{ matches } u[i \dd i+\ell-1]\}$.
Let $E_h'((X,Y),(W,Z)) \subseteq E$ be the set of edges $\{u,v\}$ such that there exists:
\begin{enumerate}
    \item $i \in \occ_X(u)$ and $j \in \occ_W(u)$ with $|i-j| \leq h$; and
    \item $i \in \occ_Y(v)$ and $j \in \occ_Z(v)$ with $|i-j| \leq h$.
\end{enumerate}
Further let $E_h(\{X,Y\},\{W,Z\})=E_h'((X,Y),(W,Z)) \cup E_h'((X,Y),(Z,W))$.
Then, we define the \emph{similarity index} between $(\ell,d)$-motif pairs $\{X,Y\}$ and $\{Z,W\}$ as
\[
s_h(\{X,Y\},\{Z,W\})\coloneqq\frac{|E_h(\{X,Y\},\{Z,W\})|}{|E_{X,Y} \cup E_{Z,W}|}.
\]
In particular, we note that $s_0$ corresponds to the similarity index used by~\cite{DBLP:journals/tcbb/BoyenDNHD11} in their experimental evaluation of SLIDER.
In our evaluation, we used the more lenient version $s_{\ell}$ for a motif length $\ell$; that is, we afforded SLIDER's solution set more slack when it was compared to our exact algorithm's solution set.

\paragraph{Reduced solution set.}
Let $S=\{M_1,M_2,\ldots,M_k\}$ be a solution set ranked from the highest-scoring motif pair $M_1$ to the lowest-scoring one, $M_k$.
We construct a \emph{reduced solution set} $S^* \subseteq S$ with the following property: for any distinct motif pairs $M'$ and $M''$ in $S^*$, we require that $s_0(M',M'') < 1$.
Note that our similarity relation $s_h$ is not transitive, so there may exist multiple subsets of $S$ with this property.
Following the approach of~\cite{DBLP:journals/tcbb/BoyenDNHD11}, our set $S^*$ is constructed with the following greedy strategy: for each $i=1, \ldots, k$, we remove every $M_j \in S$ with $j > i$ satisfying $s_0(M_i,M_j)=1$.

\paragraph{Coverage of one solution set by another.}
Given two reduced solution sets $S_A$ and $S_B$, we define the \emph{coverage of $S_B$ by $S_A$}, parameterized by $h$ and $\tau$, as
{\small\[
\mathrm{Cov}_{h,\tau,A}(S_B)\coloneqq\frac{\setsize{\{M \in S_B \mid \exists M' \in S_A,\ s_h(M,M') \geq \tau\}}}{\setsize{S_B}}.
\]}
This resembles the notion of \emph{recall}: we are effectively measuring the fraction of motif pairs from $S_B$ retrieved by $A$ (under similarity $s_h$ at a threshold $\tau$).

\paragraph{Setup.}
In order to compare M-SLIDER and SEQ-SLIDER's solution quality with that of our exact algorithm, we used complementary metrics: first, we compared the $f_{\chi^2}$ score distribution of M-SLIDER and SEQ-SLIDER's reduced solution sets with our exact algorithm's (RQ1); second, we used the similarity index introduced above to measure how similar the reduced solution sets are (RQ2).

We ran M-SLIDER and SEQ-SLIDER over 10 trials, each with a maximum convergence time of 10 minutes (totaling 100 minutes of running time for each dataset), and requested at most $\num{10000}$ highest scoring $(\ell,d)$-motif pairs per trial.
Then, we concatenated them, removed identical $(\ell,d)$-motif pairs \ie{the same pair rediscovered across independent trials}, sorted by the $f_{\chi^2}$ score descending, retained at most $\num{10000}$ top-scoring distinct $(\ell,d)$-motif pairs, and applied the ``de-duplication'' procedure described above.
The resulting reduced solution sets of M-SLIDER and SEQ-SLIDER are denoted by $S_{\MSlider}$ and $S_{\SEQSlider}$, respectively.
We ran our exact algorithm once, retrieved the top $\num{10000}$ highest scoring $(\ell,d)$-motif pairs, and applied the ``de-duplication'' procedure; the resulting reduced solution set is denoted by $S_{\Exact}$.
For RQ1 and RQ2, we ran this setup for parameter pairs $(\ell=8,d=5)$ and $(\ell=8,d=3)$, following the setup of~\cite{DBLP:journals/tcbb/BoyenDNHD11}.

\subsection{Results for RQ1.}\label{s:rq1}
We compared the $f_{\chi^2}$ scores of $S_{\MSlider}$ and $S_{\SEQSlider}$ with $S_{\Exact}$ as follows.
Since any solution set is ranked from highest to lowest $f_{\chi^2}$ score and the highest scoring $(\ell,d)$-motif pair $M^*$ of $S_{\Exact}$ has the best possible $f_{\chi^2}$ score, we normalized the $f_{\chi^2}$ score of each $(\ell,d)$-motif pair from all (reduced) solution sets by dividing each by $f_{\chi^2}(M^*)$.
Since we are comparing the top-scoring $(\ell,d)$-motif pairs across $S_{\MSlider}$, $S_{\SEQSlider}$, and $S_{\Exact}$, the resulting solution sets $S_{\MSlider}'$, $S_{\SEQSlider}'$, and $S_{\Exact}'$ after normalization were truncated to have size $\min\{\setsize{S_{\MSlider}},\setsize{S_{\SEQSlider}},\setsize{S_{\Exact}}\}$.
\Cref{fig:rq1-score-l8d5} shows the resulting (normalized) $f_{\chi^2}$ score distributions of $S_{\MSlider}'$ and $S_{\SEQSlider}'$ compared to $S_{\Exact}'$ for the parameter pair $(\ell=8,d=5)$; for the parameter pair $(\ell=8,d=3)$, see \Cref{fig:rq1-score-l8d3}.
Firstly, we note that SEQ-SLIDER consistently outperforms M-SLIDER in all datasets, which confirms the evaluation in~\cite{DBLP:journals/tcbb/BoyenDNHD11}.
Secondly, the $f_{\chi^2}$ score of $S_{\SEQSlider}'$ is comparable to $S_{\Exact}'$ only for the E.\ coli, S.\ cerevisiae (I and II), and D.\ melanogaster datasets across parameter pairs $(\ell=8,d=3)$ and $(\ell=8,d=5)$.
For the (larger) remaining datasets, A.\ thaliana and H.\ sapiens (I and II), the $f_{\chi^2}$ score of $S_{\SEQSlider}'$ is far from $S_{\Exact}'$.
For instance, for the parameter pair $(\ell=8,d=5)$, $S_{\SEQSlider}$ has average $f_{\chi^2}$ score $\num{14963}$, $\num{7313}$, and $\num{1094}$ for the A.\ thaliana, H.\ sapiens I, and H.\ sapiens II datasets, respectively, whereas $S_{\Exact}$ has average $f_{\chi^2}$ score $\num{26777}$, $\num{10439}$, and $\num{1514}$, respectively.
We note that these gaps are larger for the parameter pair $(\ell=8,d=3)$ (see \Cref{fig:rq1-score-l8d3}).

These results establish that our exact algorithm mines motif pairs whose $f_{\chi^2}$ scores are at least as high as SEQ-SLIDER's on all datasets and substantially higher on larger ones.
Two possible explanations are: (1) our algorithm captures the same interactions with more precise motif pairs, raising the score; or (2) our algorithm finds interactions that SLIDER misses altogether.
\Cref{s:rq2} is designed to show that case (2) is a significant contributor to the observed difference in scores.

\begin{figure}[t!]
    \centering
    \input{curve_legend.tex}

    \vspace{0.4em}

    \begin{subfigure}[t]{0.48\columnwidth}
        \centering\input{curve.ecoli.l8d5.dedup.tex}
        \caption{E.\ coli}
    \end{subfigure}\hfill%
    \begin{subfigure}[t]{0.48\columnwidth}
        \centering\input{curve.yeast.l8d5.dedup.tex}
        \caption{S.\ cerevisiae I}
    \end{subfigure}

    \vspace{0.6em}

    \begin{subfigure}[t]{0.48\columnwidth}
        \centering\input{curve.fly.l8d5.dedup.tex}
        \caption{D.\ melanogaster}
    \end{subfigure}\hfill%
    \begin{subfigure}[t]{0.48\columnwidth}
        \centering\input{curve.plant.l8d5.dedup.tex}
        \caption{A.\ thaliana}
    \end{subfigure}

    \vspace{0.6em}

    \begin{subfigure}[t]{0.48\columnwidth}
        \centering\input{curve.human.l8d5.dedup.tex}
        \caption{H.\ sapiens I}
    \end{subfigure}\hfill%
    \begin{subfigure}[t]{0.48\columnwidth}
        \centering\input{curve.yeast_aj.l8d5.dedup.tex}
        \caption{S.\ cerevisiae II}
    \end{subfigure}

    \vspace{0.6em}

    \begin{subfigure}[t]{0.48\columnwidth}
        \centering\input{curve.human_aj.l8d5.dedup.tex}
        \caption{H.\ sapiens II}
    \end{subfigure}\hfill%

    \caption{Distribution of (normalized) $f_{\chi^2}$ scores of $S_{\MSlider}'$ and $S_{\SEQSlider}'$ compared to $S_{\Exact}'$ across datasets of \Cref{tab:string-data-preproc} for parameter pair $(\ell=8,d=5)$.}
    \label{fig:rq1-score-l8d5}
\end{figure}

\begin{figure}[t!]
    \centering
    \input{curve_legend.tex}

    \vspace{0.4em}

    \begin{subfigure}[t]{0.48\columnwidth}
        \centering\input{curve.ecoli.l8d3.dedup.tex}
        \caption{E.\ coli}
    \end{subfigure}\hfill%
    \begin{subfigure}[t]{0.48\columnwidth}
        \centering\input{curve.yeast.l8d3.dedup.tex}
        \caption{S.\ cerevisiae I}
    \end{subfigure}

    \vspace{0.6em}

    \begin{subfigure}[t]{0.48\columnwidth}
        \centering\input{curve.fly.l8d3.dedup.tex}
        \caption{D.\ melanogaster}
    \end{subfigure}\hfill%
    \begin{subfigure}[t]{0.48\columnwidth}
        \centering\input{curve.plant.l8d3.dedup.tex}
        \caption{A.\ thaliana}
    \end{subfigure}

    \vspace{0.6em}

    \begin{subfigure}[t]{0.48\columnwidth}
        \centering\input{curve.human.l8d3.dedup.tex}
        \caption{H.\ sapiens I}
    \end{subfigure}\hfill%
    \begin{subfigure}[t]{0.48\columnwidth}
        \centering\input{curve.yeast_aj.l8d3.dedup.tex}
        \caption{S.\ cerevisiae II}
    \end{subfigure}

    \vspace{0.6em}

    \begin{subfigure}[t]{0.48\columnwidth}
        \centering\input{curve.human_aj.l8d3.dedup.tex}
        \caption{H.\ sapiens II}
    \end{subfigure}\hfill%

    \caption{Distribution of (normalized) $f_{\chi^2}$ scores of $S_{\MSlider}'$ and $S_{\SEQSlider}'$ compared to $S_{\Exact}'$ across datasets of \Cref{tab:string-data-preproc} for parameter pair $(\ell=8,d=3)$.}
    \label{fig:rq1-score-l8d3}
\end{figure}

\subsection{Results for RQ2.}\label{s:rq2}
We measured the fraction (coverage) of $(\ell,d)$-motif pairs in $S_{\Exact}$ that have similarity $s_{\ell} \geq 70\%$ to at least one $(\ell,d)$-motif pair in $S_{\MSlider}$ and $S_{\SEQSlider}$.
Here, we summarize the results for SEQ-SLIDER only since it outperformed M-SLIDER in all but one instance (H.\ sapiens I with $(\ell=8,d=3)$); refer to \Cref{tab:rq2} in the appendix.
The coverage of $S_{\Exact}$ by $S_{\SEQSlider}$ ranges from $8\%$ to $52\%$ across both parameter pairs for the A.\ thaliana and H.\ sapiens (I and II) datasets, meaning that SEQ-SLIDER misses at least roughly half of the top-scoring motif pairs, and up to $92\%$ of them (M-SLIDER fares similarly).
For the remaining datasets --- E.\ coli, S.\ cerevisiae (I and II) and D.\ melanogaster --- the coverage of $S_{\Exact}$ by $S_{\SEQSlider}$ ranges from $45\%$ to $100\%$.
In particular, it was $100\%$ only for S.\ cerevisiae I and D.\ melanogaster with parameter pair $(\ell=8,d=3)$.
Note that we used the more lenient version $s_{\ell}$ of the similarity index to derive these results.

\begin{figure}[t!]
    \centering
    \input{density_legend.tex}

    \vspace{0.4em}

    \begin{subfigure}[t]{0.48\columnwidth}
        \centering\input{density.ecoli.l8d5.h8.dedup.tex}
        \caption{E.\ coli}
    \end{subfigure}\hfill%
    \begin{subfigure}[t]{0.48\columnwidth}
        \centering\input{density.yeast.l8d5.h8.dedup.tex}
        \caption{S.\ cerevisiae I}
    \end{subfigure}

    \vspace{0.6em}

    \begin{subfigure}[t]{0.48\columnwidth}
        \centering\input{density.fly.l8d5.h8.dedup.tex}
        \caption{D.\ melanogaster}
    \end{subfigure}\hfill%
    \begin{subfigure}[t]{0.48\columnwidth}
        \centering\input{density.plant.l8d5.h8.dedup.tex}
        \caption{A.\ thaliana}
    \end{subfigure}

    \vspace{0.6em}

    \begin{subfigure}[t]{0.48\columnwidth}
        \centering\input{density.human.l8d5.h8.dedup.tex}
        \caption{H.\ sapiens I}
    \end{subfigure}\hfill%
    \begin{subfigure}[t]{0.48\columnwidth}
        \centering\input{density.yeast_aj.l8d5.h8.dedup.tex}
        \caption{S.\ cerevisiae II}
    \end{subfigure}

    \vspace{0.6em}

    \begin{subfigure}[t]{0.48\columnwidth}
        \centering\input{density.human_aj.l8d5.h8.dedup.tex}
        \caption{H.\ sapiens II}
    \end{subfigure}\hfill%

    \caption{For every $M \in S_{\Exact}$, we show the highest $s_{\ell}(M,M')$ such that $M' \in S_{\MSlider}$  (resp.\ $M' \in S_{\SEQSlider}$) for parameter pair $(\ell=8,d=5)$.}
    \label{fig:rq2-density-l8d5}
\end{figure}

\begin{figure}[t!]
    \centering
    \input{density_legend.tex}

    \vspace{0.4em}

    \begin{subfigure}[t]{0.48\columnwidth}
        \centering\input{density.ecoli.l8d3.h8.dedup.tex}
        \caption{E.\ coli}
    \end{subfigure}\hfill%
    \begin{subfigure}[t]{0.48\columnwidth}
        \centering\input{density.plant.l8d3.h8.dedup.tex}
        \caption{A.\ thaliana}
    \end{subfigure}

    \vspace{0.6em}

    \begin{subfigure}[t]{0.48\columnwidth}
        \centering\input{density.human.l8d3.h8.dedup.tex}
        \caption{H.\ sapiens I}
    \end{subfigure}\hfill%
    \begin{subfigure}[t]{0.48\columnwidth}
        \centering\input{density.yeast_aj.l8d3.h8.dedup.tex}
        \caption{S.\ cerevisiae II}
    \end{subfigure}

    \vspace{0.6em}

    \begin{subfigure}[t]{0.48\columnwidth}
        \centering\input{density.human_aj.l8d3.h8.dedup.tex}
        \caption{H.\ sapiens II}
    \end{subfigure}\hfill%

    \caption{For every $M \in S_{\Exact}$, we show the highest $s_{\ell}(M,M')$ such that $M' \in S_{\MSlider}$ (resp.\ $M' \in S_{\SEQSlider}$) for parameter pair $(\ell=8,d=3)$. We omitted the results for S.\ cerevisiae I and D.\ melanogaster because $S_{\Exact}$ was fully covered by $S_{\SEQSlider}$.}
    \label{fig:rq2-density-l8d3}
\end{figure}

Moreover, for each $(\ell,d)$-motif pair $M \in S_{\Exact}$, we also computed the highest similarity $s_{\ell}(M,M')$, where $M' \in S_{\MSlider}$ (resp.\ $M' \in S_{\SEQSlider}$).
\Cref{fig:rq2-density-l8d5} shows the distribution of $(\ell,d)$-motif pairs from $S_{\Exact}$ for the parameter pair $(\ell=8,d=5)$ as a density plot.
We observe that $S_{\Exact}$ has a substantial amount of $(\ell,d)$-motif pairs with similarity $s_{\ell}$ below $50\%$ to their closest counterparts in $S_{\MSlider}$ and $S_{\SEQSlider}$ for the D.\ melanogaster, A.\ thaliana, and H.\ sapiens (I and II) datasets.
On the other hand, for the E.\ coli and S.\ cerevisiae (I and II) datasets, most $(\ell,d)$-motif pairs in $S_{\Exact}$ have similarity above $70\%$ to their closest counterparts in $S_{\MSlider}$ and $S_{\SEQSlider}$.

We arrive at the same conclusion for the parameter pair $(\ell=8,d=3)$, although SEQ-SLIDER had a better overall performance compared to $(\ell=8,d=5)$; see \Cref{fig:rq2-density-l8d3}.
In particular, we note that every motif pair in $S_{\Exact}$ has similarity $s_{\ell} \geq 70\%$ with some motif pair in $S_{\MSlider}$ or $S_{\SEQSlider}$ on the S.\ cerevisiae I and D.\ melanogaster datasets.

Together, these results establish that our exact algorithm mines a substantial amount of $(\ell,d)$-motif pairs that are missed by SEQ-SLIDER and M-SLIDER for the \emph{largest} datasets from the STRING database (A.\ thaliana and H.\ sapiens I) and from~\cite{DBLP:journals/tcbb/BoyenDNHD11} (H.\ sapiens II).

\begin{figure}[t]
  \centering
  \includegraphics[width=0.9\linewidth]{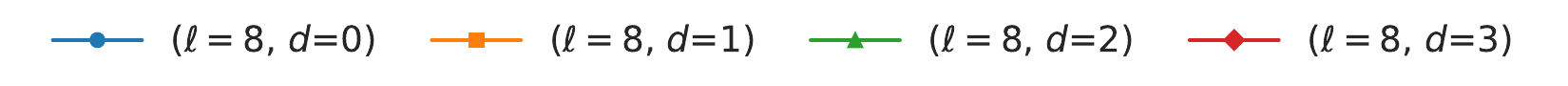}\\[2pt]
  \begin{subfigure}[t]{0.49\linewidth}
    \centering
    \includegraphics[width=\linewidth]{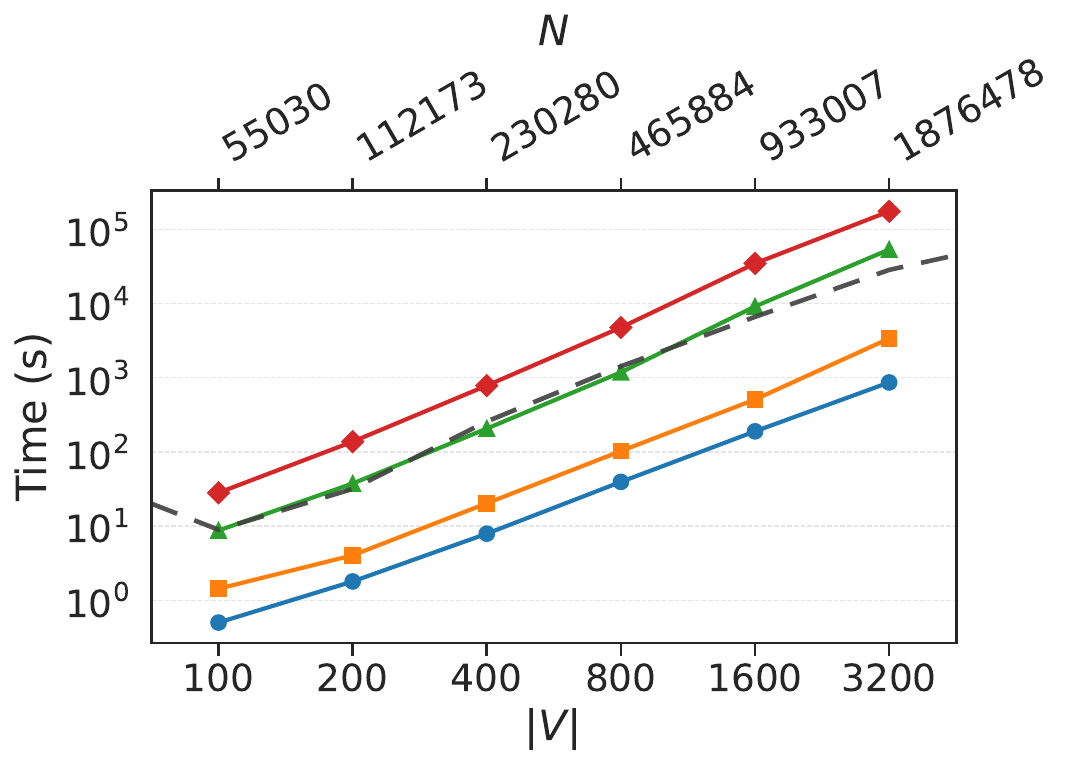}
  \end{subfigure}
  \hfill
  \begin{subfigure}[t]{0.49\linewidth}
    \centering
    \includegraphics[width=\linewidth]{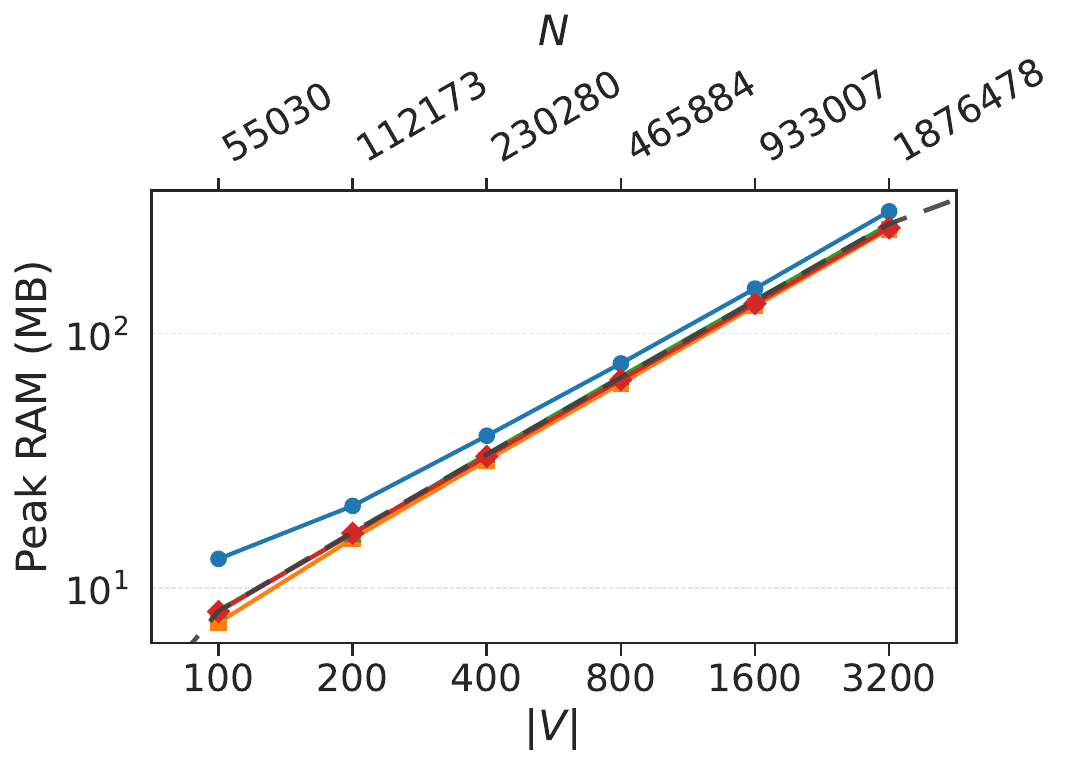}
  \end{subfigure}

  \vspace{0.25cm}

  \includegraphics[width=0.9\linewidth]{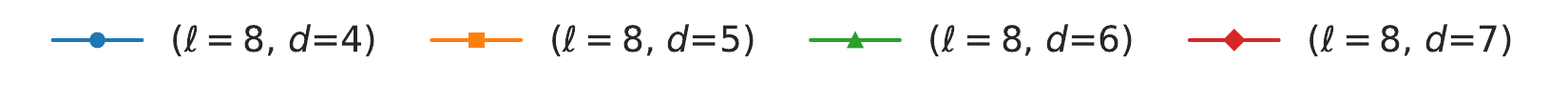}\\[2pt]
  \begin{subfigure}[t]{0.49\linewidth}
    \centering
    \includegraphics[width=\linewidth]{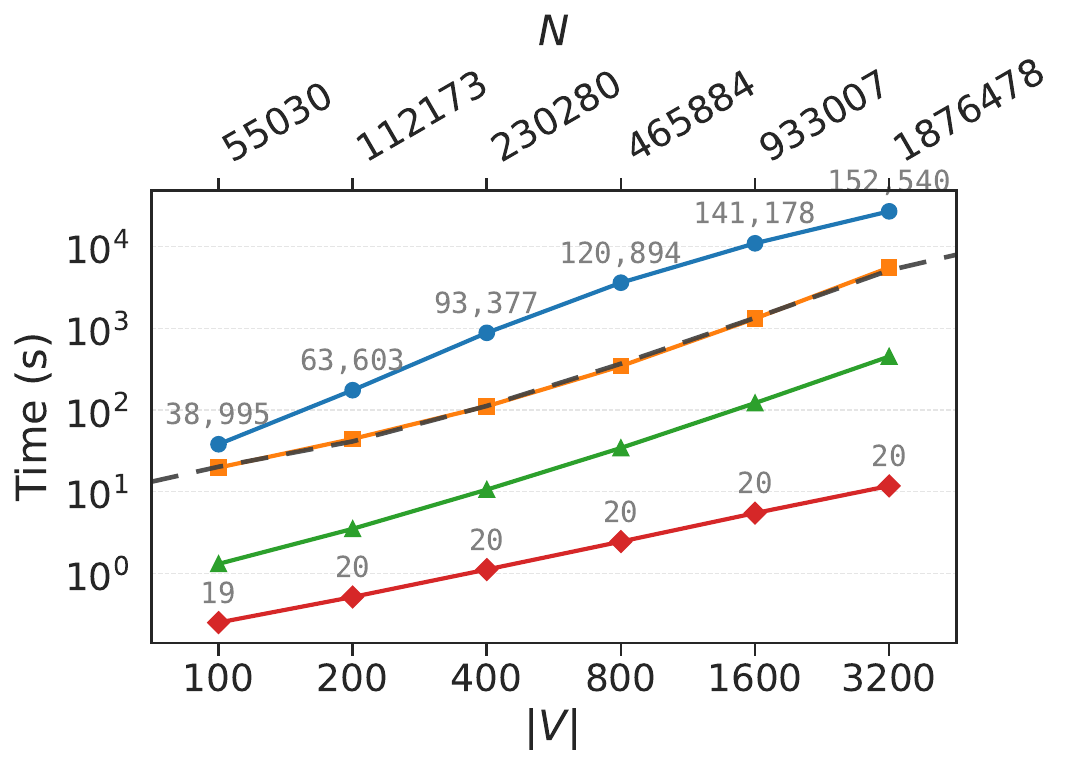}
  \end{subfigure}
  \hfill
  \begin{subfigure}[t]{0.49\linewidth}
    \centering
    \includegraphics[width=\linewidth]{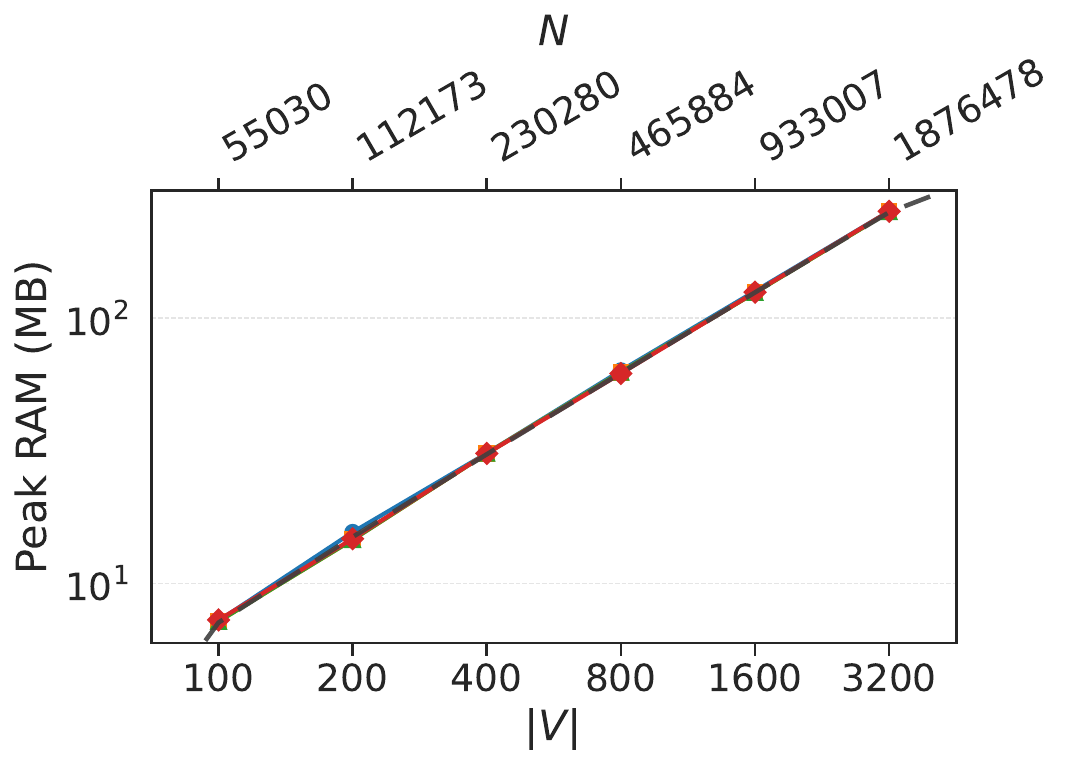}
  \end{subfigure}
  \caption{Wall-clock time and peak RAM usage per thread of exact algorithm for $\ell=8$, $d \in [0,  7]$, $\ed(G)=5\%$, and varying $|V|$ (and $N$). We use log-log axes with the dashed lines indicating a polynomial of degree 2 (resp.\ degree 1) fitting the data for time (resp.\ peak RAM). Top: $d \in [0 , 3]$ with dashed line on $d=2$. Bottom: $d \in [4 , 7]$ with dashed line on $d=5$, and lines for $d=4$ and $d=7$ labeled with $M_V$.}
  \label{fig:rq3_ell8}
\end{figure}

\begin{figure}[t]
  \centering
  \includegraphics[width=0.9\linewidth]{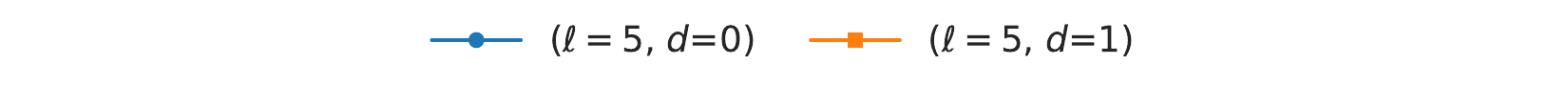}\\[2pt]
  \begin{subfigure}[t]{0.49\linewidth}
    \centering
    \includegraphics[width=\linewidth]{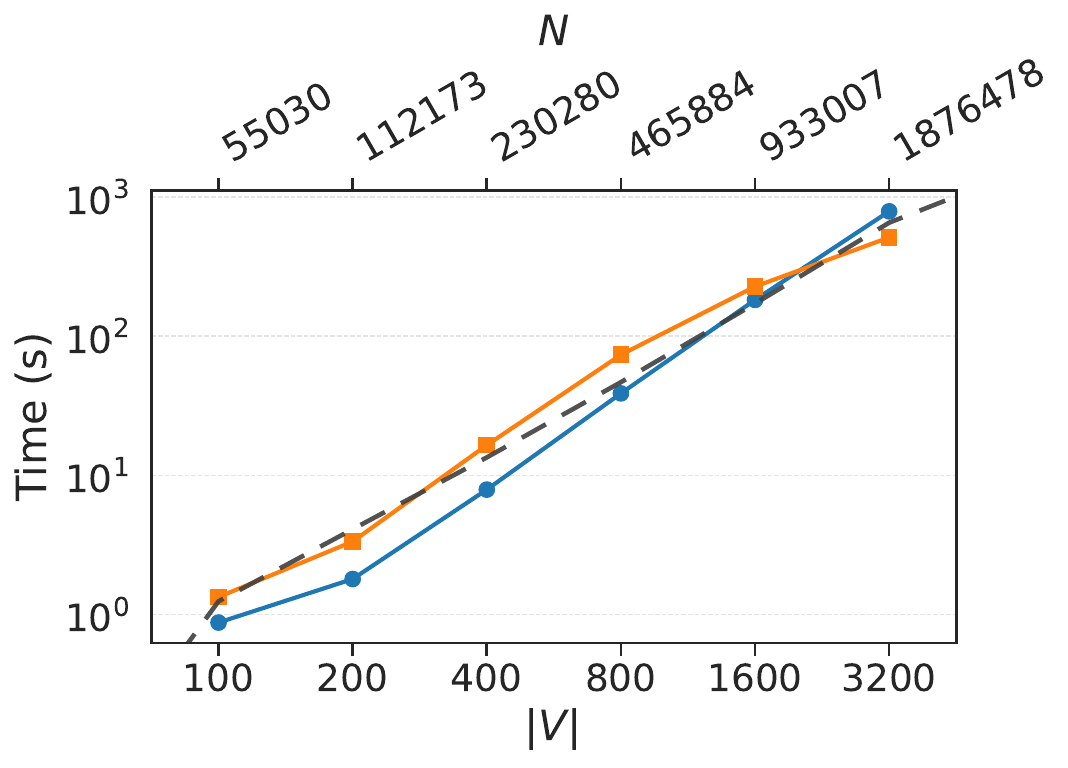}
  \end{subfigure}
  \hfill
  \begin{subfigure}[t]{0.49\linewidth}
    \centering
    \includegraphics[width=\linewidth]{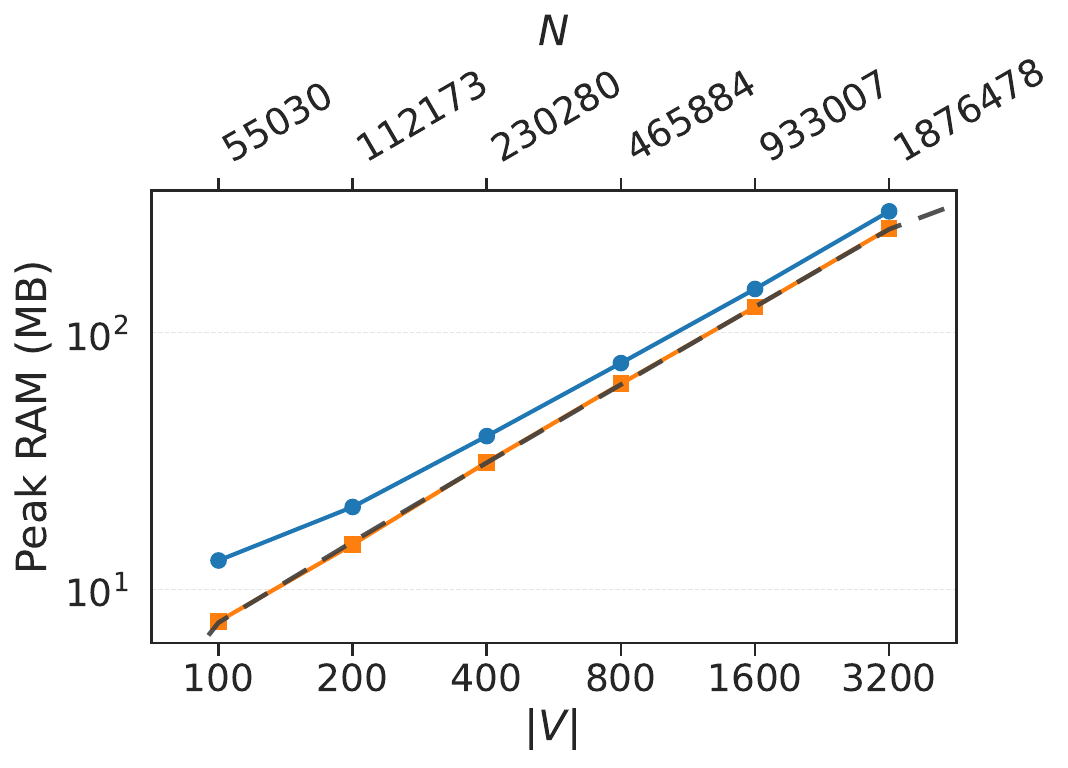}
  \end{subfigure}

  \vspace{0.25cm}
  \includegraphics[width=0.9\linewidth]{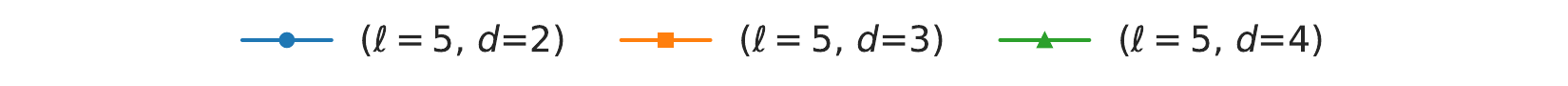}\\[2pt]
  \begin{subfigure}[t]{0.49\linewidth}
    \centering
    \includegraphics[width=\linewidth]{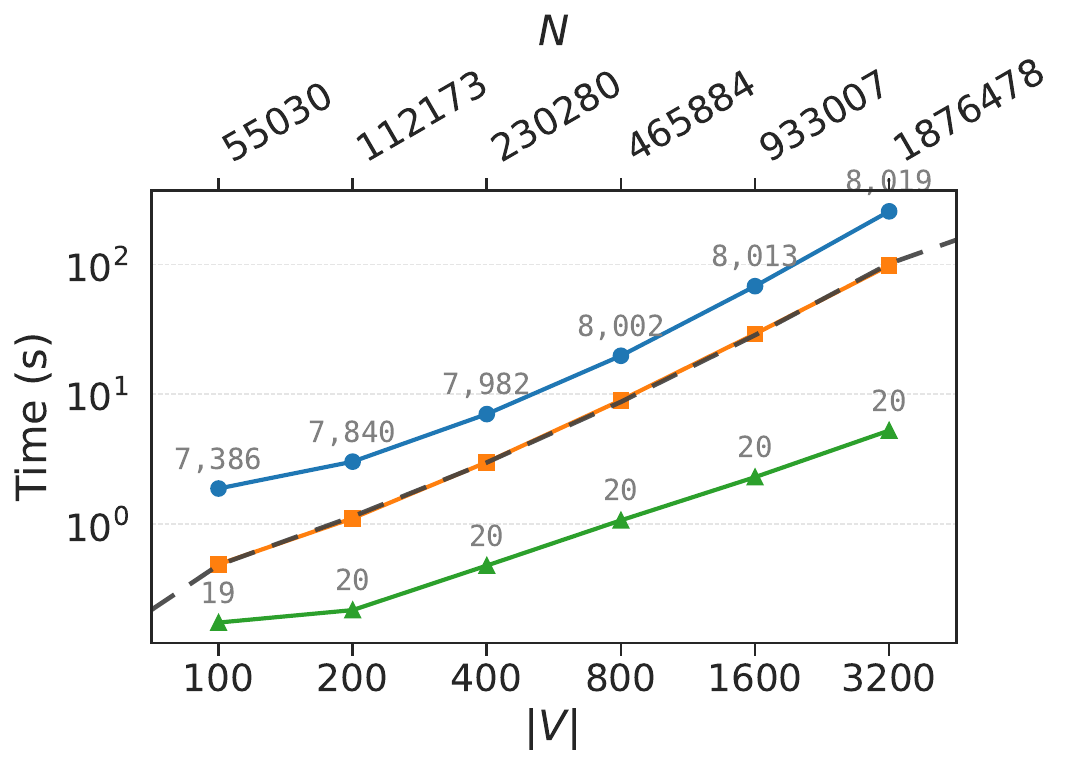}
  \end{subfigure}
  \hfill
  \begin{subfigure}[t]{0.49\linewidth}
    \centering
    \includegraphics[width=\linewidth]{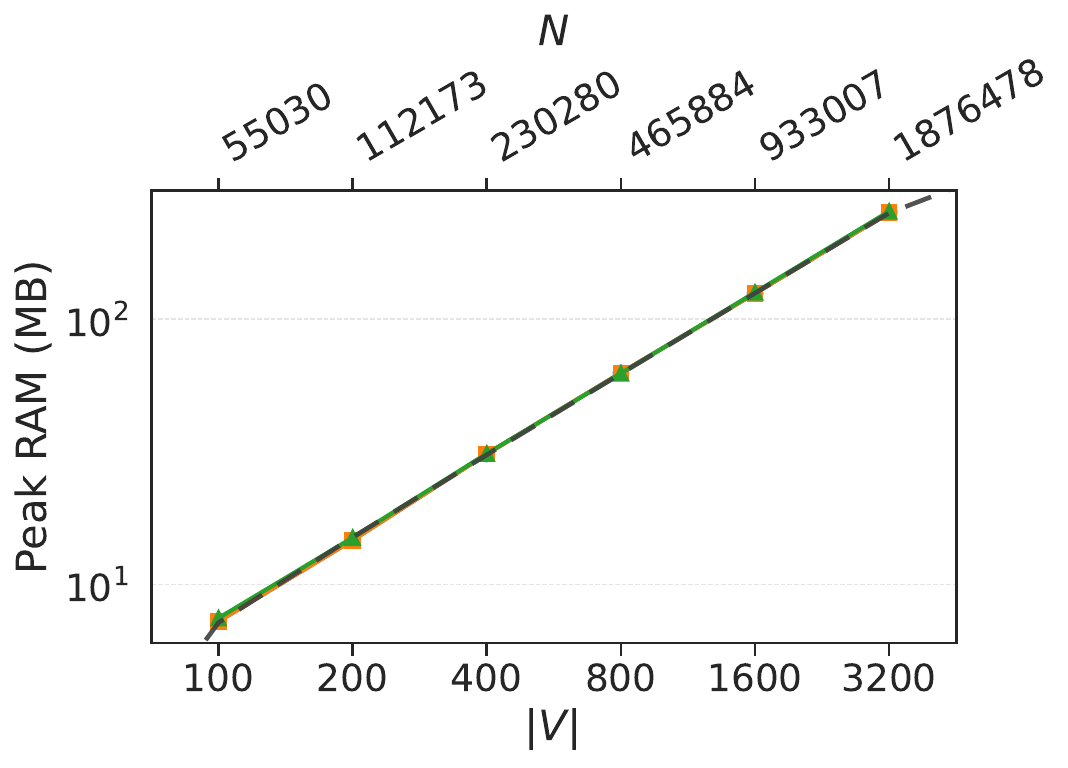}
  \end{subfigure}
  \caption{Wall-clock time and peak RAM usage per thread of exact algorithm for $\ell=5$, $d \in [0,  4]$, $\ed(G)=5\%$, and varying $|V|$ (and $N$). We use log-log axes with the dashed lines indicating a polynomial of degree 2 (resp.\ degree 1) fitting the data for time (resp.\ peak RAM). Top: $d \in [0 , 1]$ with dashed line on $d=1$. Bottom: $d \in [2 , 4]$ with dashed line on $d=3$, and lines for $d=2$ and $d=4$ labeled with $M_V$.}
  \label{fig:rq3_ell5}
\end{figure}

\subsection{Results for RQ3.}
We fixed the edge density $\ed(G)$ for all instances to $5\%$, and varied $|V|$ and $N$.
We used $t\coloneqq\min\{\binom{\ell}{d}\big(\binom{\ell}{d}+1\big)/2,P\}$ threads,
where by $P = 128$ we denote the number of threads available in our system.
For $\ell=8$, we varied $d$ from $0$ to $7$; the results are shown in \Cref{fig:rq3_ell8}.
We also performed this experiment for $\ell=5$ with $d$ varying from $0$ to $4$.
The results are shown in \Cref{fig:rq3_ell5}; they yield the same conclusion, but with smaller running times and peak RAM usage. We observe a \emph{quadratic} behavior in $N$ for the running times shown in \Cref{fig:rq3_ell8} (turned into straight lines with log-log axes for readability), for fixed values of $\ell$ and $d$.
The running times further highlight their dependence on the maximum number $M_V$ of distinct $(\ell,d)$-motifs in $V$ across all $\binom{\ell}{d}$ combinations of wildcard positions. For fixed $\ell$ and $N$, observe that $M_V$ decreases as $d$ grows, since more wildcard positions yield fewer distinct $(\ell,d)$-motifs in $V$.
Indeed, in practice, $M_V$ is significantly smaller than $N$ for larger $d$. (We label exemplary lines in \Cref{fig:rq3_ell8} with $M_V$.)
This reflects in faster running times for \emph{a fixed} wildcard combination.
However, when considering all wildcard combinations, the running times are non-decreasing for $d \in [0,\lfloor \ell/2 \rfloor)$ and non-increasing for $d \in [\lfloor \ell/2 \rfloor,\ell)$ because the term $\binom{\ell}{d}$ behaves the same way on these intervals.
The peak RAM usage shown in \Cref{fig:rq3_ell8} also confirms the \emph{linear} behavior in $N$ \emph{per thread}.

We also report the running time and peak RAM usage per thread of our exact algorithm on the datasets of \Cref{tab:string-data-preproc}; see \Cref{tab:string-data-preproc-l8} in the appendix.
We remark that for the dataset with the largest $N$, H.\ sapiens I, the exact algorithm took roughly 50h with peak RAM \num{1.29}GB per thread for parameter pair $(\ell=8,d=3)$ and only 2h with peak RAM \num{1.28}GB per thread for parameter pair $(\ell=8,d=5)$. We used $t\coloneqq P = 128$ threads for both.

\subsection{Results for RQ4.}
We fixed $\setsize{V}=200$ and varied the edge density $\ed(G)$.
The rest of the setup is analogous to RQ3: for $\ell=8$, we varied $d$ from $0$ to $7$ (\Cref{fig:rq4_ell8}); for $\ell=5$, we varied $d$ from $0$ to $4$ (\Cref{fig:rq4_ell5}); and we used $t\coloneqq\min\{\binom{\ell}{d}\big(\binom{\ell}{d}+1\big)/2,P\}$ threads. Here, our theoretical prediction for the time bound is $\cO(\binom{\ell}{d}^2|E|\max_{v \in V}|v|^2)$, which is confirmed by the running times shown in \Cref{fig:rq4_ell8}.
Indeed, by varying $\ed(G)$ and fixing the remaining parameters, the running time grows linearly with input size as $\ed(G)$ grows.
Moreover, the peak RAM usage for each thread remains constant in this regime, which is particularly appealing for sparse graphs.

\begin{figure}[t]
  \centering
  \includegraphics[width=0.9\linewidth]{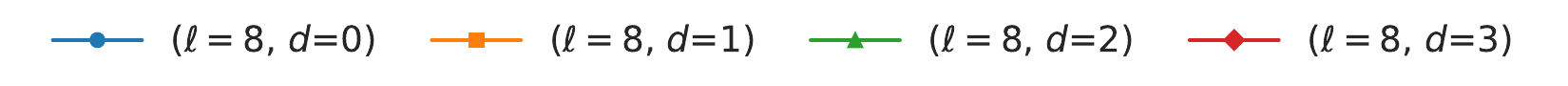}\\[2pt]
  \begin{subfigure}[t]{0.49\linewidth}
    \centering
    \includegraphics[width=\linewidth]{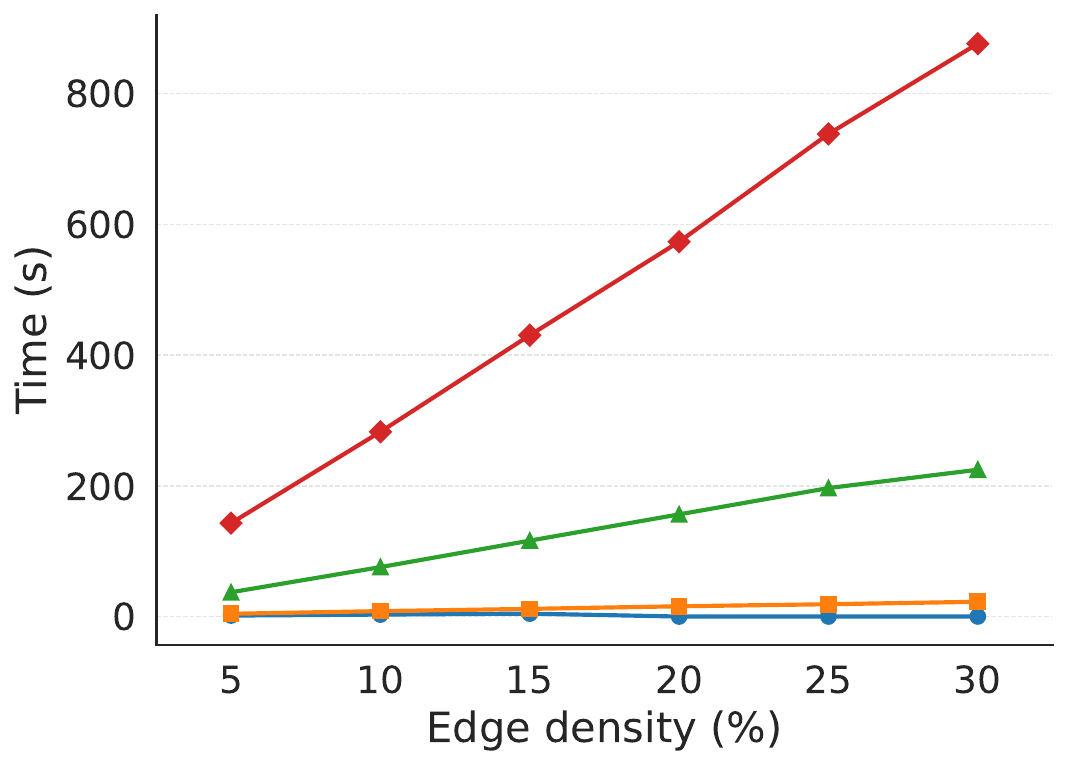}
  \end{subfigure}
  \hfill
  \begin{subfigure}[t]{0.49\linewidth}
    \centering
    \includegraphics[width=\linewidth]{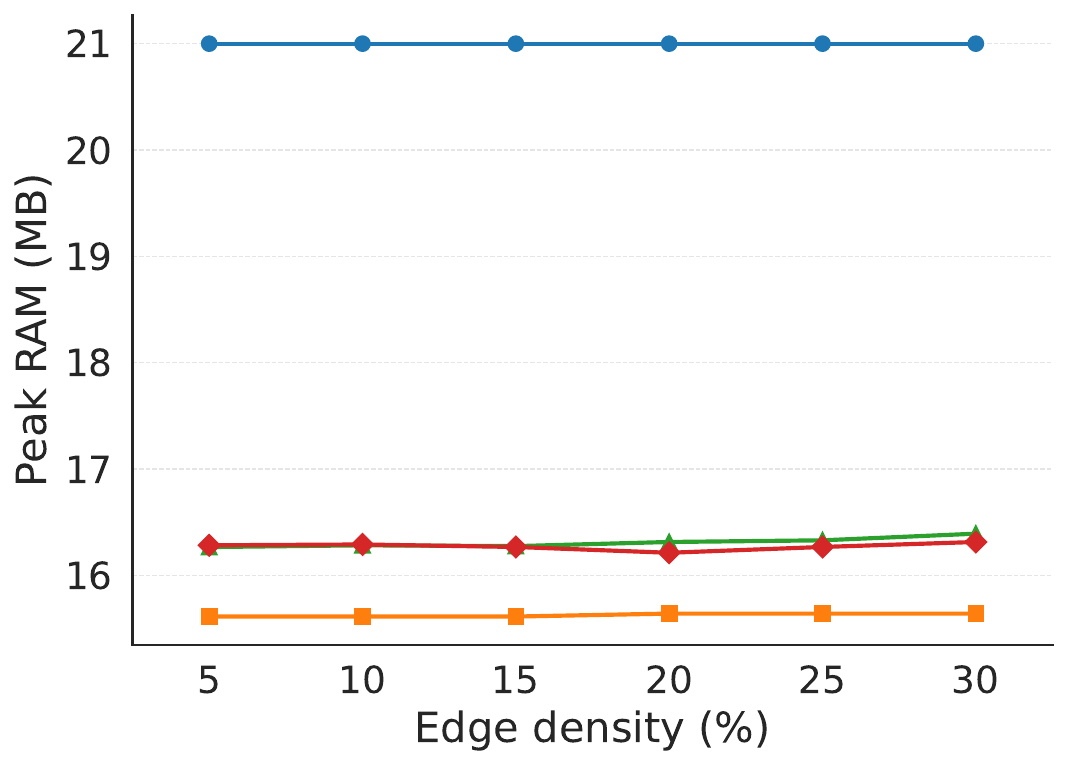}
  \end{subfigure}

  \vspace{0.25cm}
  \includegraphics[width=0.9\linewidth]{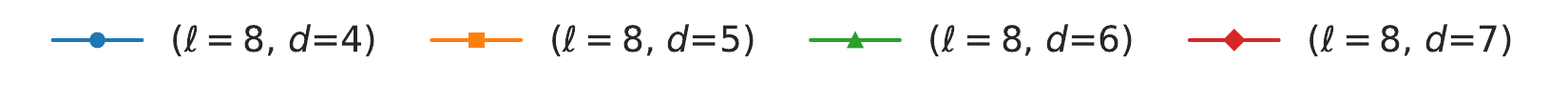}\\[2pt]
  \begin{subfigure}[t]{0.49\linewidth}
    \centering
    \includegraphics[width=\linewidth]{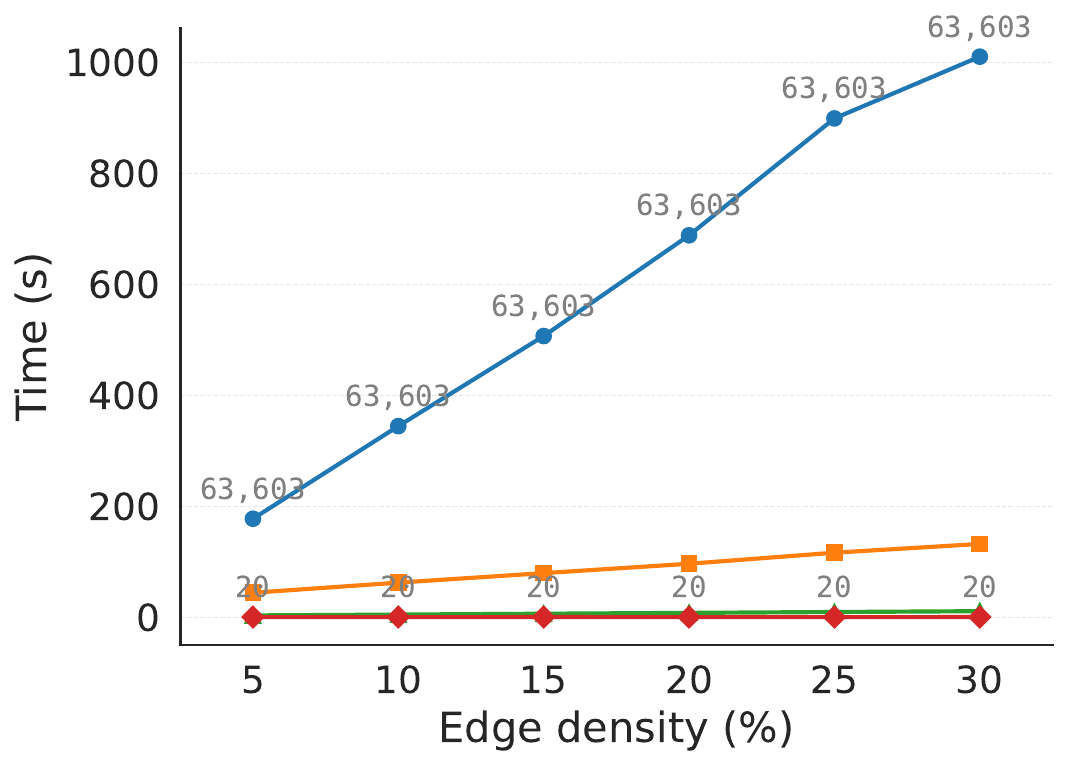}
  \end{subfigure}
  \hfill
  \begin{subfigure}[t]{0.49\linewidth}
    \centering
    \includegraphics[width=\linewidth]{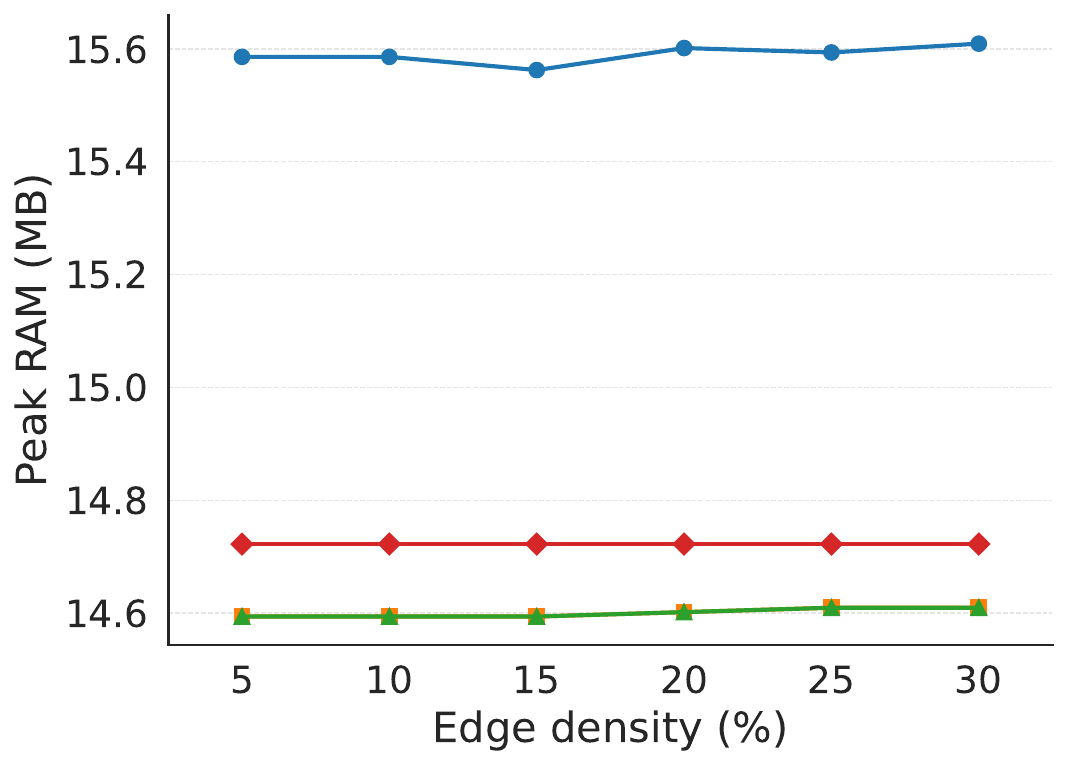}
  \end{subfigure}
  \caption{Wall-clock time and peak RAM usage per thread of exact algorithm for $\setsize{V}=200$, $\ell=8$, $d \in [0 , 7]$, and varying $\ed(G)$. Top: $d \in [0 , 3]$. Bottom: $d \in [4 , 7]$ and lines for $d=4$ and $d=7$ labeled with $M_V$.}
  \label{fig:rq4_ell8}
\end{figure}

\begin{figure}[!h]
  \centering
  \includegraphics[width=0.9\linewidth]{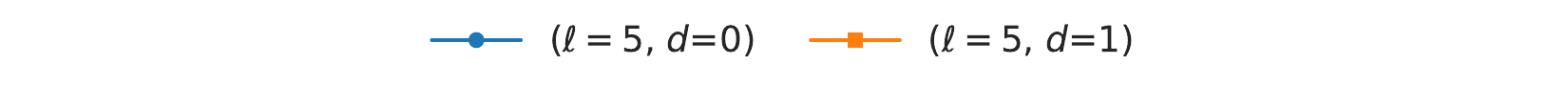}\\[2pt]
  \begin{subfigure}[t]{0.49\linewidth}
    \centering
    \includegraphics[width=\linewidth]{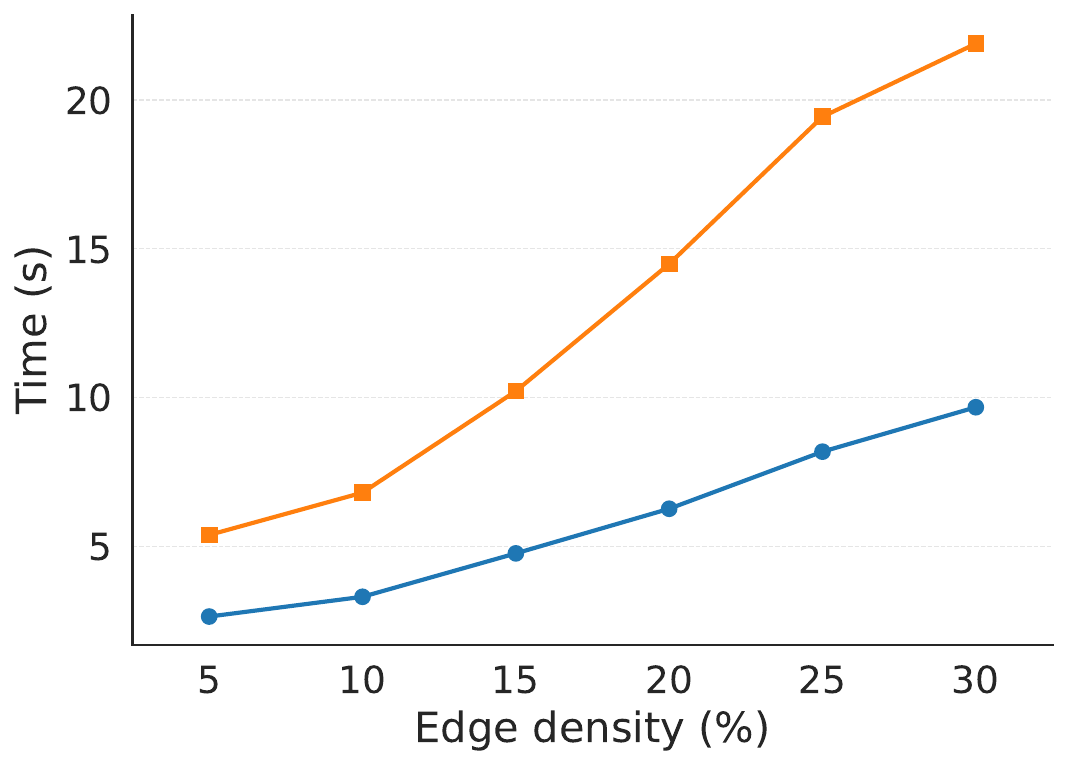}
  \end{subfigure}
  \hfill
  \begin{subfigure}[t]{0.49\linewidth}
    \centering
    \includegraphics[width=\linewidth]{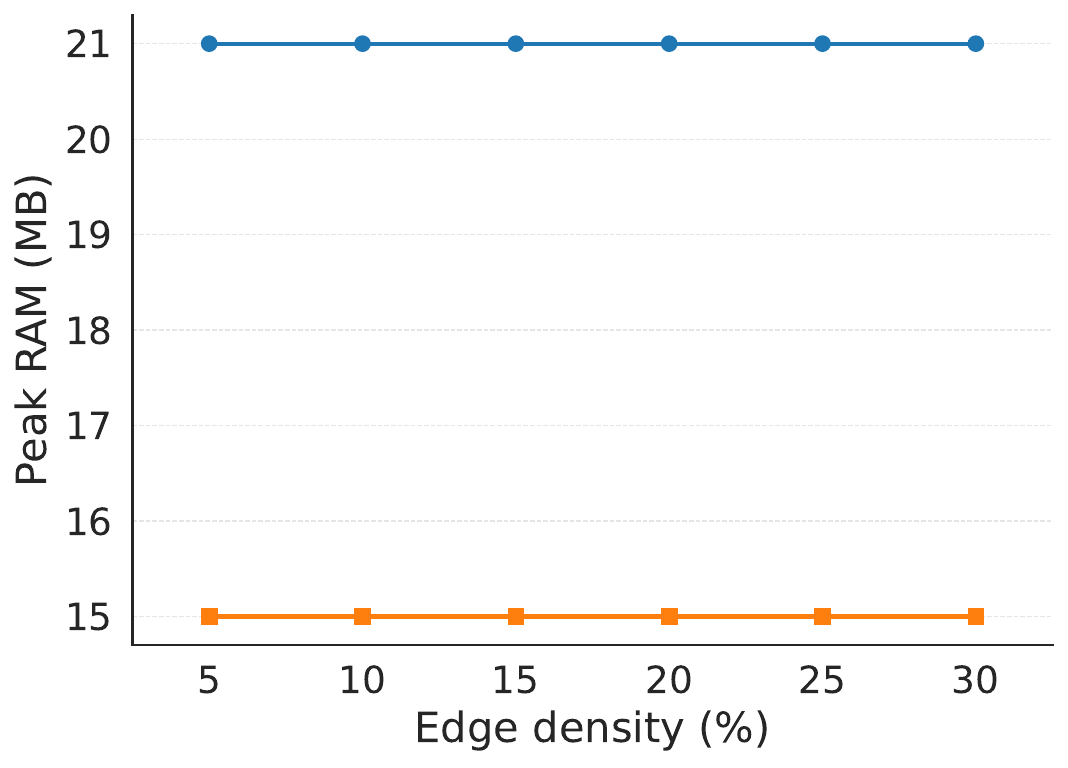}
  \end{subfigure}

  \vspace{0.25cm}
  \includegraphics[width=0.9\linewidth]{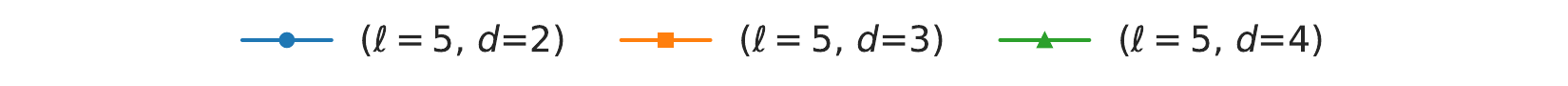}\\[2pt]
  \begin{subfigure}[t]{0.49\linewidth}
    \centering
    \includegraphics[width=\linewidth]{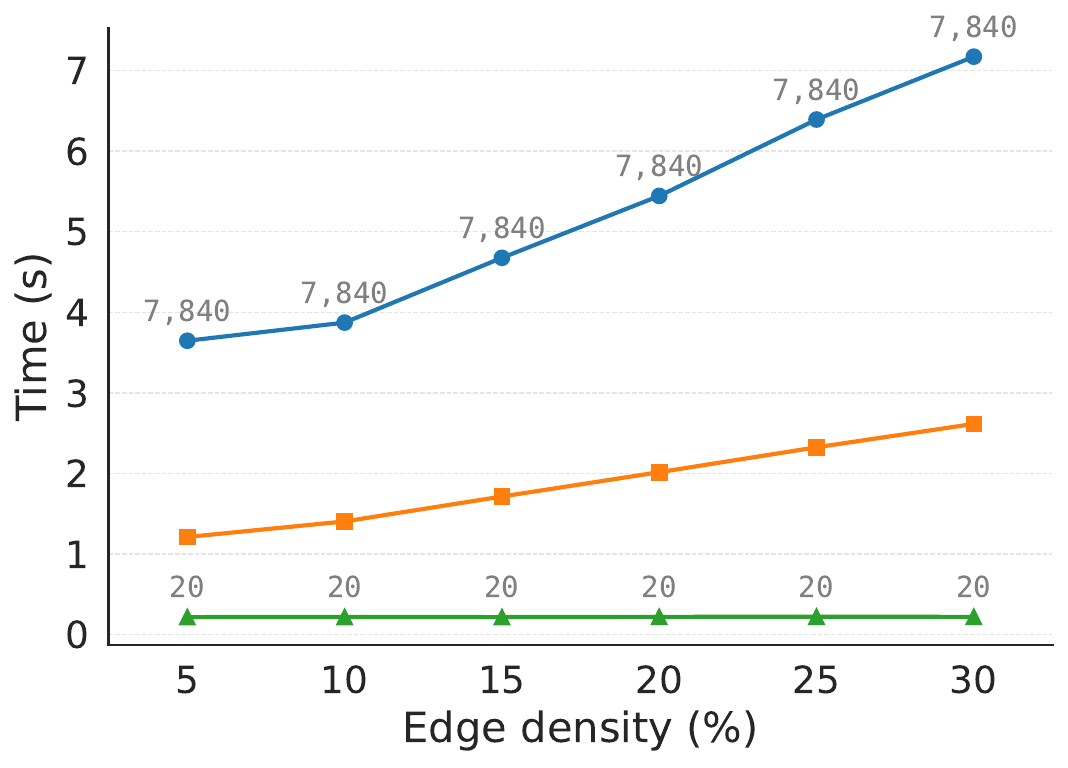}
  \end{subfigure}
  \hfill
  \begin{subfigure}[t]{0.49\linewidth}
    \centering
    \includegraphics[width=\linewidth]{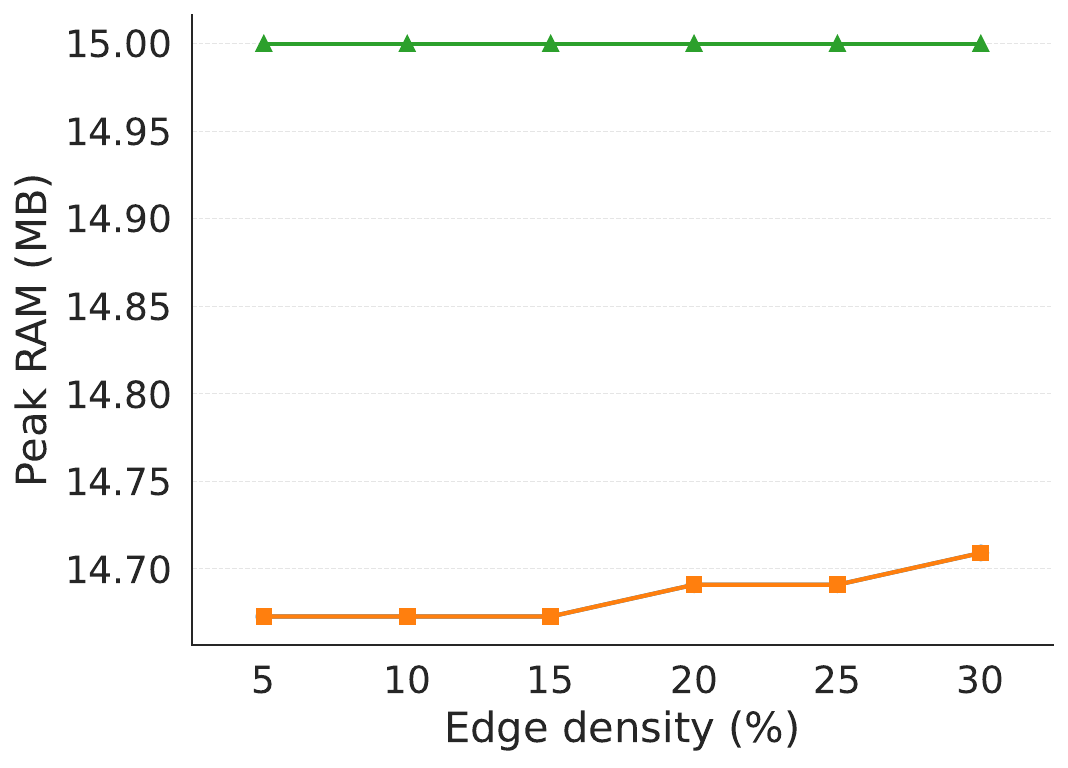}
  \end{subfigure}
  \caption{Wall-clock time and peak RAM usage per thread of exact algorithm for $\setsize{V}=200$, $\ell=5$, $d \in [0 , 4]$, and varying $\ed(G)$. Top: $d \in [0 , 1]$ and line for $(\ell=5,d=1)$ labeled with $M_V$. Bottom: $d \in [2 , 4]$, and lines for $d=2$ and $d=4$ labeled with $M_V$.}
  \label{fig:rq4_ell5}
\end{figure}

\subsection{Results for RQ5.}
\begin{figure}[t]
  \centering
  \begin{subfigure}[t]{0.49\linewidth}
    \centering
    \includegraphics[width=\linewidth]{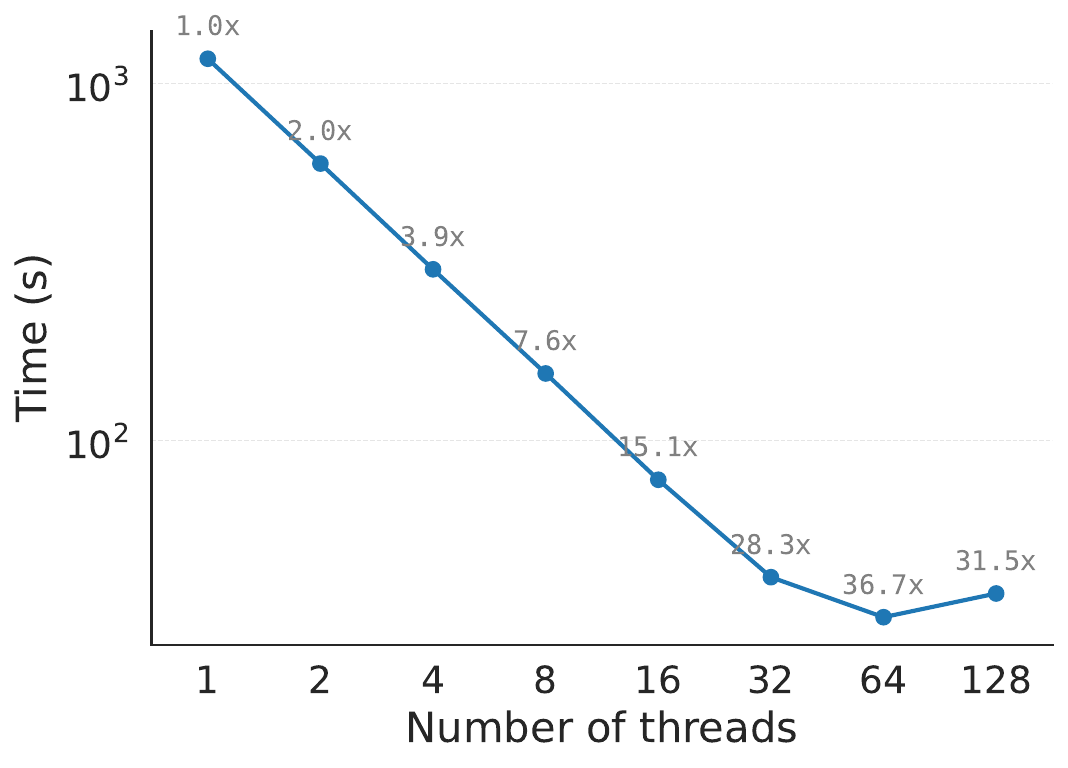}
  \end{subfigure}
  \hfill
  \begin{subfigure}[t]{0.49\linewidth}
    \centering
    \includegraphics[width=\linewidth]{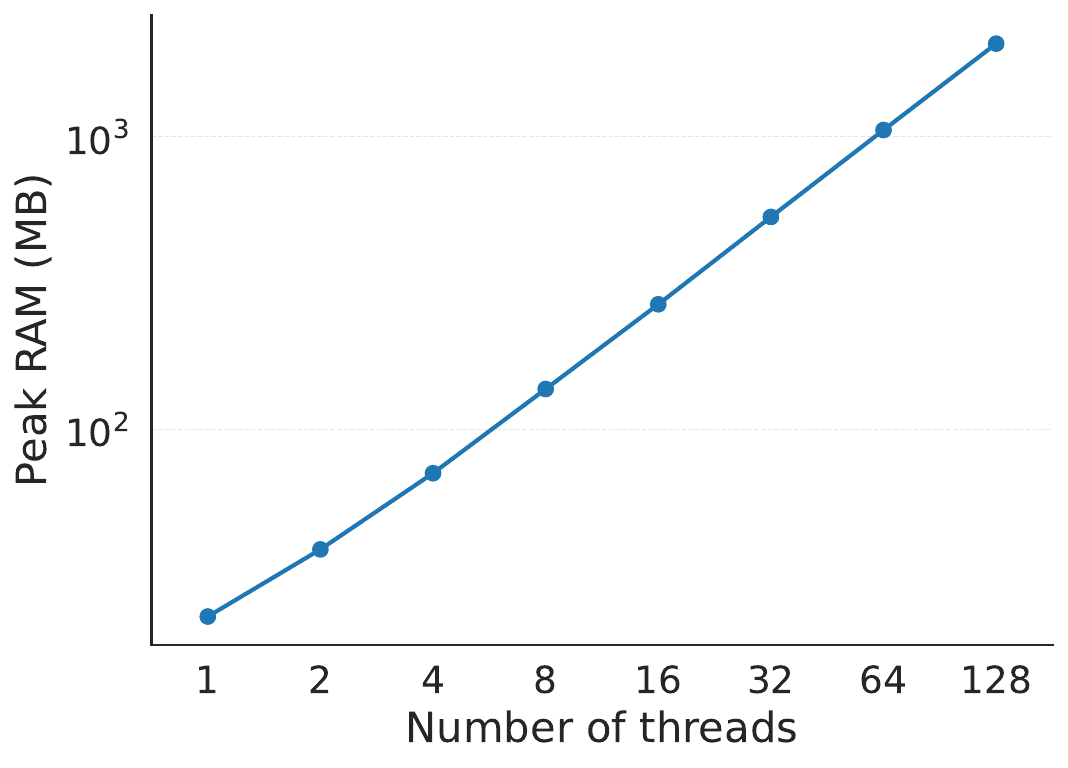}
  \end{subfigure}
  \caption{Wall-clock time and total peak RAM usage of exact algorithm for $\setsize{V}=200$, $\ed(G)=5\%$, $\ell=8$, and $d=2$ with varying number $t$ of threads. We use log-log axes and label the line with the speed-up obtained with the number of threads specified on the $x$-axis.}
  \label{fig:rq5}
\end{figure}
We fixed $\setsize{V}=200$, $\ed(G)=5\%$, $\ell=8$, $d=2$, and varied the number $t$ of threads from $1$ to $P=128$, the number available in our system, in order to assess the effectiveness of our parallelization.
\Cref{fig:rq5} shows near-linear speed-up up to 32 threads ($28.3\times$), after which returns diminish: the speed-up peaks at $36.7\times$ with 64 threads and does not improve further.
We conjecture that these diminishing returns are due to load imbalance: some wildcard combinations are more costly than others, and at 128 threads, an expensive combination dominates the wall-clock time.
Moreover, since the algorithm parallelizes over $\binom{\ell}{d}\big(\binom{\ell}{d}+1\big)/2$ wildcard combinations, at most that many threads can be productively used; since we test at most 128, this limit is never reached.
Finally, the space complexity becomes $\cO(tN)$ because each thread requires $\cO(N)$ working space, and we have $t$ parallel threads.

\section*{Reproducibility and source code.}
The source code and datasets utilized in this study are currently available via \url{https://github.com/lucper/cmm}. All experimental results presented in this paper are fully reproducible using these provided resources.

\section*{AI use disclosure.} All theoretical concepts and core proofs originated entirely from the authors' own work. The authors employed AI tools to polish language and style, support verification of proofs, produce LaTeX code for \Cref{fig:motifpairs}, \Cref{fig:maxip-construction}, and~\Cref{fig:chisup-construction}, and produce code for plotting data. Every AI-generated output was carefully reviewed, verified, and revised by the authors, who assume full responsibility for the paper's content.

\section*{Acknowledgments.}
Lucas P.~Oliveira is funded by grant~\#2025/08034-9, São Paulo Research Foundation (FAPESP).

\clearpage
\FloatBarrier
\bibliographystyle{siamplain}
\bibliography{references}

\appendix

\clearpage

\section{Materials omitted from~\Cref{sec:exact}.} \label{app:exact}

\subsection{The $f_{\chi^2}$ support measure.}
While the function $f_E$ lends itself well to our theoretical analysis, it is not ideal for the biological domain.
In particular, in a PPI network, maximizing the $f_E$ function finds a motif pair $\{X,Y\}$ that potentially explains many interactions \ie{there are many edges where $X$ appears on one side and $Y$ on the other}.
To make it more likely we find an actual interaction, we also want the property that there are few \emph{non}-edges where $X$ appears on one side and $Y$ on the other.
In other words, we want the presence of $X$ and $Y$ to make an interaction likely.
This is not captured by $f_E$.
In graph-theoretic terms, we want to ensure that the subgraph selected by $\{X,Y\}$ is \emph{dense}.
Here, this means denser than the average over the whole interaction network.
This is exactly what the $\chisup$ support function is designed to capture.
Boyen et al.~``experimentally establish the superiority of the Chi-square-based support measure over other support measures'' \cite{DBLP:journals/tcbb/BoyenDNHD11}.

Fortunately, the hard part in computing $\chisup$ turns out to be computing $f_E$.
In the following corollary, we show that the additional information needed to derive $\chisup(G,\{X,Y\})$ from $f_E(G, \{X,Y\}) = \setsize{E_{X,Y}}$ is easy to compute, lifting \Cref{alg:basic} to support $\chisup$.

\begin{corollary}[to \Cref{thm:basic}]
There is an algorithm for \CMM with $f_{\chi^2}$ running in time $\cO(N^2)$, where $N=\sum_{v \in V}|v|$.
The space used is $\cO(N)$.
\label{cor:basic-chisup}
\end{corollary}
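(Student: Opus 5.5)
The plan is to run \Cref{alg:basic} essentially unchanged. For each candidate pair $\{X,Y\}$ it already produces $f_E(G,\{X,Y\})=\setsize{E_{X,Y}}$ in the counter $C[Y]$. We then show that the remaining quantities in $\chisup$ can be supplied in $\cO(1)$ time per pair after $\cO(N)$ preprocessing. By definition, $\chisup(G,\{X,Y\})$ depends only on $\setsize{E_{X,Y}}$, on $\ed(G)=\setsize{E}/\binom{\setsize{V}}{2}$ (computed once from the input), and on the triple $(k_X,k_Y,k_{X,Y})$ through $E^{\max}_{k_X,k_Y,k_{X,Y}}$. Once these values are known, evaluating the formula and updating the maximum take constant time. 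The $\cO(N^2)$ time bound therefore follows from \Cref{thm:basic}, provided the three $k$-values can be obtained cheaply.

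For $k_X=\setsize{V_X}$ and $k_Y=\setsize{V_Y}$ we precompute an array $\mathsf{cnt}$ indexed by rank. In one linear pass over the per-node lists of distinct ranks, which are already built in the proof of \Cref{thm:basic}, we increment $\mathsf{cnt}[r]$ once for each node containing rank $r$. This takes $\cO(N)$ time and space.

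The main obstacle is $k_{X,Y}=\setsize{V_X\cap V_Y}$. It must be computed for every $Y$ that is ever touched for the current $X$, without exceeding the edge-driven budget. The idea is to keep a second dictionary $D$, indexed by rank and reset lazily just like $C$. For a fixed $X$, we iterate over all $u\in V_X$ and, for each distinct rank $Y$ occurring in $u$, increment $D[Y]$. Afterwards $D[Y]=k_{X,Y}$ for every $Y$. The cost of this loop is $\sum_{u\in V_X}\setsize{u}$ for the fixed $X$. Summed over all $X$, it is at most $\sum_{u}\setsize{u}^2\le N^2$, because each $u$ is processed once for each of its at most $\setsize{u}$ distinct substrings $X$.

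There is one subtlety to check. The maximum of $\chisup$ may be attained at a pair with $E_{X,Y}=\emptyset$. Such a pair is never touched by $C$, but it has value $0$, so restricting attention to touched $Y$ (those with $C[Y]>0$) is safe whenever some pair scores positively. Otherwise any pair is an optimal output.

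After both loops for a fixed $X$, we scan the touched entries of $C$ and evaluate $\chisup$ with $k_X=\mathsf{cnt}[X]$, $k_Y=\mathsf{cnt}[Y]$ and $k_{X,Y}=D[Y]$. We then reset $C$ and $D$. The only extra space is $\mathsf{cnt}$ and $D$, each of size $\cO(N)$, so the space bound is $\cO(N)$ and the time bound remains $\cO(N^2)$.
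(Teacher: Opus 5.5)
Your proposal is correct and matches the paper's argument. The counts $k_X,k_Y$ come from a linear counting pass, and $k_{X,Y}$ comes from your dictionary $D$, which is exactly what the paper gets by running \Cref{alg:basic} on the auxiliary graph $G'$ with edge set $\{\{v,v\}\mid v\in V\}$. Your per-$X$ lockstep computation of $C$ and $D$ also makes explicit how the two counts are combined within $\cO(N)$ space, a step the paper's sketch leaves implicit.
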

\begin{proof}
We sketch how to adapt \Cref{alg:basic}.
Given a graph $G=(V,E)$, where $V \subseteq \Sigma^*$, and a positive integer $\ell$, our goal is to find a pair $\{X,Y\}$ of motifs maximizing $\chisup(G,\{X,Y\})$.
Computing this quantity reduces to computing the following parameters: the number $k_X$ (resp.\ $k_Y$) of nodes in which $X$ (resp.\ $Y$) occurs, the number $k_{X,Y}$ of nodes in which both $X$ and $Y$ occur, and the number $|E_{X,Y}|$ of edges in which $X$ occurs in an endpoint and $Y$ occurs in the other.

The number $|E_{X,Y}|$ of edges can be easily obtained in $\cO(N^2)$ time using \Cref{thm:basic}.
For $k_X$ (resp.\ $k_Y$), iterate over the nodes and, for each distinct length-$\ell$ substring $X$ of the current node, increment a counter indexed by $\rank(X)$.
By handling duplicates locally at each node, this counting can be done in $\cO(N)$ time.
This is a simplification of Algorithm~\ref{alg:basic}.
Finally, for $k_{X,Y}$, construct a graph $G'$ from $G$ as follows. Let $V(G')=V$ and $E(G')=\{\{v,v\} \mid v \in V\}$. Applying Algorithm~\ref{alg:basic} to $G'$ yields the desired number $k_{X,Y}$ for every pair of length-$\ell$ substrings $\{X,Y\}$. (\Cref{alg:basic} and its analysis apply to graphs with self-loops).
\end{proof}

\subsection{Why we cannot do better.}
In this section, we use tools from fine-grained complexity to establish conditional hardness.
\SETH~\cite{DBLP:journals/jcss/ImpagliazzoPZ01} is a standard assumption about the complexity of deciding the satisfiability of conjunctive normal form formulas.
We show that a polynomially faster algorithm for \CMM (in $N$) yields a faster algorithm for the following problem, which in turn is known to refute \SETH~\cite{DBLP:journals/tcs/Williams05,DBLP:conf/coco/Chen18}.

\defproblem{Maximum Inner Product (\MaxIP)}
{Two sets $A,B\subseteq\{0,1\}^m$ of $n$ vectors each.}
{The value
$\max_{a\in A,\ b\in B}\langle a,b\rangle$.}

Recall that $\langle \cdot, \cdot \rangle$ denotes the inner product.
The hardness conjecture for the \MaxIP problem states that for every constant $\varepsilon>0$, there exists a constant $c(\varepsilon)>0$
such that \MaxIP with dimension $m=c(\varepsilon)\log n$ cannot be solved in time
$\cO(n^{2-\varepsilon})$.
Refuting this conjecture refutes \SETH.
With these tools, we are ready to prove our conditional hardness result.

\newcommand{\mipX}{\textcolor{VUBlue}{\mathtt{011}}}
\newcommand{\mipY}{\textcolor{VUOrange}{\mathtt{101}}}
\tikzset{
  miplab/.style={draw, fill=white, rounded corners=1pt, minimum width=2.6cm, minimum height=0.6cm, inner sep=2pt},
  miphit/.style={ultra thick, draw=VUBlue},
  mipmiss/.style={thick, draw=black!50},
  mipnote/.style={font=\footnotesize\itshape, text=black!60}
}
\newcommand{\mipcore}{%
  \node[miplab, label=left:$u_1$] (u1) at (0,1.6) {$\mathtt{001}\#_1\,\mipX\#_1$};
  \node[miplab, label=left:$u_2$] (u2) at (0,0.8) {$\mathtt{010}\#_2\,\mipX\#_2$};
  \node[miplab, label=left:$u_3$] (u3) at (0,0.0) {$\mathtt{001}\#_3\,\mathtt{010}\#_3$};
  \node[miplab, label=right:$w_1$] (w1) at (4.2,1.6) {$\mipY\#_1$};
  \node[miplab, label=right:$w_2$] (w2) at (4.2,0.8) {$\mipY\#_2\,\mathtt{111}\#_2$};
  \node[miplab, label=right:$w_3$] (w3) at (4.2,0.0) {$\mathtt{110}\#_3\,\mathtt{111}\#_3$};
  \draw[miphit]  (u1) -- (w1);
  \draw[miphit]  (u2) -- (w2);
  \draw[mipmiss] (u3) -- (w3);
}

\begin{figure}[t]
\centering
\begin{tikzpicture}[font=\small]
\mipcore
\end{tikzpicture}
\caption{The construction in the proof of \Cref{thm:maxip-lb} for $n=m=3$, $a_1=(1,0,1)$, $a_2=(0,1,1)$, $a_3=(1,1,0)$, $b_1=(1,1,0)$, $b_2=(0,0,1)$, and $b_3=(0,1,1)$.
Here, $\ell=3$, and $\mathtt{enc}(1)=\mathtt{01}$, $\mathtt{enc}(2)=\mathtt{10}$, $\mathtt{enc}(3)=\mathtt{11}$.
The motif pair $\{\textcolor{VUBlue}{\mathtt{011}},\textcolor{VUOrange}{\mathtt{101}}\}$, which encodes $a_3$ and $b_1$, occurs in the endpoints of the two highlighted edges, so $f_E(G,\{\mathtt{011},\mathtt{101}\})=2=\langle a_3,b_1\rangle$.}
\label{fig:maxip-construction}
\end{figure}

\begin{theorem}
\label{thm:maxip-lb}
There is no algorithm solving \CMM with $f_E$ in time $\cO(N^{2-\varepsilon} \poly(\setsize{E}, \setsize{V}))$, for any $\varepsilon > 0$, unless \SETH fails.
\end{theorem}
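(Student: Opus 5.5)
We reduce from \MaxIP with $m=c\log n$. Given $A=\{a_1,\dots,a_n\}$ and $B=\{b_1,\dots,b_n\}$, we build a graph with $m$ edges, one per coordinate $k\in[m]$. Each edge joins a node $u_k$ encoding $A$-vectors to a node $w_k$ encoding $B$-vectors. Following \Cref{fig:maxip-construction}, fix an injective binary encoding $\mathtt{enc}\colon[n]\to\{0,1\}^{L}$ with $L=\lceil\log_2(n+1)\rceil$ that avoids the all-zero word. Prefix a flag bit: $0$ for $A$-indices and $1$ for $B$-indices, so motifs of length $\ell=L+1$ name either an $A$-vector or a $B$-vector. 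The node $u_k$ is the concatenation of $0\,\mathtt{enc}(i)\#_k$ over all $i$ with $a_i[k]=1$. The node $w_k$ is the concatenation of $1\,\mathtt{enc}(j)\#_k$ over all $j$ with $b_j[k]=1$. The separators $\#_k$ are fresh characters, distinct per node, so no length-$\ell$ window that straddles a separator can equal a separator-free motif. If a node would be empty, we insert a dummy filler $\#_k$ alone.

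The key claim is that for a pair $\{X,Y\}$ with $X=0\,\mathtt{enc}(i)$ and $Y=1\,\mathtt{enc}(j)$, we have $f_E(G,\{X,Y\})=|\{k: a_i[k]=b_j[k]=1\}|=\langle a_i,b_j\rangle$. Indeed, $X$ occurs in $u_k$ exactly when $a_i[k]=1$; it never occurs in any $w_{k'}$, since the flag bit differs; and $Y$ behaves symmetrically. Since the nodes are distinct strings, $V_X$ and $V_Y$ are disjoint and the edges are counted without double counting. It remains to show that every other pair scores at most $\max_{i,j}\langle a_i,b_j\rangle$. For this, we note that every separator-free length-$\ell$ substring is a complete block, because blocks are delimited by separators and each has length exactly $\ell$. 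So every substring is some $0\,\mathtt{enc}(i)$ or $1\,\mathtt{enc}(j)$, and windows containing a separator are unique to a single node. Pairs of two $A$-motifs score $0$, because no edge joins two $u$-nodes. A motif containing a separator occurs in only one node and so lies in at most one edge; such a pair scores at most $1$. This case is handled by comparing with $\max(1,\cdot)$, or by adding one dummy coordinate with all ones, which shifts every inner product up by $1$. With that shift, the \CMM optimum minus $1$ equals the \MaxIP value.

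For the size analysis, $N=\cO(nm\log n)=\tilde\cO(n)$, $|E|=m$, and $|V|=2m$, all polylogarithmic apart from $N$. An algorithm running in time $\cO(N^{2-\varepsilon}\poly(|E|,|V|))$ would therefore solve \MaxIP in time $n^{2-\varepsilon}\poly\log n=\cO(n^{2-\varepsilon/2})$, contradicting the \MaxIP hypothesis and hence \SETH. The construction itself takes linear time.

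The main obstacle is the case analysis ruling out spurious motifs. These are windows spanning a separator, motifs shared between an $A$-node and a $B$-node, and pairs with both motifs on the same side. The flag bit together with per-node separators should dispose of all of them cleanly, but some care is needed with the alphabet size: separators add $m$ characters, which is fine, since $\Sigma$ is unrestricted.
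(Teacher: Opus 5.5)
Your proposal is correct and uses essentially the same reduction as the paper. The shared ingredients are one edge $\{u_k,w_k\}$ per coordinate, length-$(\ell+1)$ blocks $0\cdot\mathtt{enc}(i)\#_k$ and $1\cdot\mathtt{enc}(j)\#_k$ with $\ell=\lceil\log(n+1)\rceil+1$, the fact that every separator-free window is a block prefix, and the size bound $N\in\cO(mn\log n)$, $|V|,|E|\in\cO(m)$. The differences are cosmetic: the paper shares $\#_k$ between $u_k$ and $w_k$ (still giving support at most $1$ for special pairs), omits empty nodes, and in the decision version replaces a special pair by a plain pair $(X_k,Y_k)$ of inner product at least $1$, whereas you use the all-ones dummy coordinate to shift the optimum by one.
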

\begin{proof}
We reduce the decision version of \MaxIP to the decision version of \CMM; since the decision version is no harder than \CMM itself, the lower bound carries over.
Let $A=\{a_1,\ldots,a_n\}$ and $B=\{b_1,\ldots,b_n\}$ be sets of $m$-dimensional binary vectors, where $m$ and $n$ are positive integers.
Let $\mathtt{enc} : \mathbb{Z}_{\geq 1} \to \{0,1\}^+$ denote the usual  binary encoding of any positive integer and $a[k]$ (resp.\ $b[k]$) the $k$-th coordinate of vector $a$ (resp.\ $b$).
Our goal is to determine whether there are vectors $a \in A$ and $b \in B$ such that $\sum_{i=1}^{m}a[i]b[i] \geq p$, where $p$ is a non-negative integer.

We construct an instance $\langle G=(V,E),\ell \rangle$ of \CMM with $V = U \cup W$ as follows.
Let $U=\{u_1,\ldots,u_m\}$ and $W=\{w_1,\ldots,w_m\}$ be sets of strings over the alphabet $\Sigma=\{0,1\} \cup \{\#_j \mid 1 \leq j \leq m\}$, and $E=\{\{u_j,w_j\} \mid 1 \leq j \leq m\}$.
If some $u_j$ or $w_j$ is the empty string, we omit that node together with the edge $\{u_j,w_j\}$; such a coordinate $j$ contributes to no inner product, so this does not affect the argument below.
For each $1 \leq j \leq m$, let
\[
    u_j = \bigodot_{\substack{1 \leq i \leq n \\ a_i[j]=1}} 0 \cdot \mathtt{enc}(i)\#_j \quad\text{and}\quad w_j = \bigodot_{\substack{1 \leq i \leq n \\ b_i[j]=1}} 1 \cdot \mathtt{enc}(i)\#_j,
\]
where $\odot$ denotes string concatenation.
That is, the node corresponding to each index is labeled by an encoding of the vectors that have that index set to 1.
\Cref{fig:maxip-construction} illustrates the construction.
Notice that our encoding requires exactly $\lceil \log (n + 1) \rceil$ bits (after padding with zeros); hence, we set $\ell\coloneqq\lceil \log (n + 1) \rceil + 1$.

By construction, if there are vectors $a_i \in A$ and $b_j \in B$ such that $\sum_{k=1}^{m}a_i[k]b_j[k] \geq p$, then the strings $0 \cdot \mathtt{enc}(i)$ and $1 \cdot \mathtt{enc}(j)$ are substrings of endpoints of at least $p$ edges of $G$.
In particular, notice that, for any $j \in [m]$, we can choose a pair $(X_j,Y_j)$ of length-$\ell$ binary strings such that $X_j$ and $Y_j$ are substrings of nodes $u_j$ and $w_j$, respectively, and none of them contain the separator symbol $\#_j$ (unless the $j$-th bit is unset in all vectors in at least one of $A$ and $B$, in which case we can simply ignore it); hence, they encode vectors whose inner product equals at least 1.

Conversely, suppose that strings $X$ and $Y$ are length-$\ell$ substrings of endpoints of at least $p$ edges in an instance of \CMM obtained by our construction.
If either string contains a separator symbol $\#_k$, then it occurs only in $u_k$ and $w_k$, so $\{u_k,w_k\}$ is the only edge whose endpoints can contain both strings, and the support value of $\{X,Y\}$ is at most $1$; in that case we may replace $\{X,Y\}$ by a pair $(X_k,Y_k)$ as described in the previous paragraph.
Thus, we assume that $X$ and $Y$ are length-$\ell$ binary strings.
Since all edges connect one node from $U$ with one from $W$ and by the choice of the labels, we may also assume that $X$ starts with a $0$ (and thus occurs only in $U$) and $Y$ starts with a $1$ (and thus occurs only in $W$).
Since $\ell=\lceil \log (n+1) \rceil + 1$, the strings $X$ and $Y$ without their leading bit are the binary encodings of integers $i,j \in \{1,\ldots,n\}$; let $a_{i} \in A$ and $b_{j} \in B$ be the corresponding vectors.
By construction, $X$ occurs in $u_k$ if and only if $a_i[k]=1$, and $Y$ occurs in $w_k$ if and only if $b_j[k]=1$.
Hence, the edges whose endpoints contain $X$ and $Y$ are exactly the edges $\{u_k,w_k\}$ with $a_{i}[k]=b_{j}[k]=1$, and therefore $\sum_{k=1}^m a_{i}[k]\,b_{j}[k] = |E_{X,Y}| \geq p$.

It is easy to calculate that for this instance, $N \in \cO(m \cdot n \log n), \setsize{V} \in \cO(m)$, and $\setsize{E} \in \cO(m)$, thus giving the claimed lower bound under \SETH.
\end{proof}

As for the lower bound with $f_{\chi^2}$, some care is needed to carry the result over from $f_E$, since $\chisup(G,\{X,Y\})$ compares $|E_{X,Y}|$ with the expected number $\overline{E_{X,Y}}$ of edges, which depends on $k_X$, $k_Y$, and $k_{X,Y}$.
Call a length-$\ell$ string \emph{plain} if it is a string over $\{0,1\}$, and \emph{special} otherwise.
In the construction of \Cref{thm:maxip-lb}, the plain motifs are the encodings $0\cdot\mathtt{enc}(i)$ and $1\cdot\mathtt{enc}(i)$ of the vectors, whereas every special motif contains a separator symbol.
This construction has two shortcomings with respect to $\chisup$.
First, $X=0\cdot\mathtt{enc}(i)$ occurs in as many nodes as $a_i$ has ones (and analogously for $1\cdot\mathtt{enc}(i)$ and $b_i$), so the expected number of edges varies between pairs of plain motifs, and $\chisup$ does not rank them by their inner product alone.
Second, $\chisup$ favors motif pairs that occur in very few nodes, since their expected number of edges is tiny and a single edge already yields a large score; special motifs are of this kind.
We neutralize both effects by adding isolated nodes to the construction in the proof of \Cref{thm:maxip-lb}.

Both shortcomings, as well as our remedy, concern how $\chisup$ relates to the number $|E_{X,Y}|$ of edges and to the expected number $\overline{E_{X,Y}}$ of edges; by definition, $\chisup(G,\{X,Y\})$ depends only on these two quantities, so it is convenient to view it as a function of them.
We need two of its properties: if the expected number of edges is fixed and below~$1$, then more edges yield a higher score, which lets pairs of plain motifs be ranked by their inner product; and if there is a single edge, then a larger expected number of edges yields a lower score, which lets pairs of plain motifs beat pairs containing a special motif.
Formally, for an integer $t\ge0$ and a real number $x>0$, let
\[
  s(t,x)\coloneqq
  \begin{cases}
    \dfrac{(t-x)^2}{x} & \text{if $t>x$},\\[6pt]
    0 & \text{otherwise},
  \end{cases}
\]
so that $\chisup(G,\{X,Y\})=s(|E_{X,Y}|,\overline{E_{X,Y}})$ whenever $\overline{E_{X,Y}}>0$.
The two properties are then as follows.

\begin{observation}\label{obs:chi-score}
Let $0<x<1$. Then
\begin{enumerate}[label=(\alph*)]
  \item $s(0,x)=0$, and $s(t,x)$ is strictly increasing in~$t$ for integers $t\ge1$;
  \item $s(1,y)<s(1,x)$ for every real number $y>x$.
\end{enumerate}
\end{observation}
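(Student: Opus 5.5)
This is elementary calculus on the explicit formula for $s(t,x)$; we treat the two parts separately.

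For part (a), $s(0,x)=0$ holds because $t=0<x$ puts us in the second case of the definition. For integers $t\ge1$ we have $t\ge1>x$, so $s(t,x)=(t-x)^2/x$ throughout this range. Since $x>0$ is fixed, the map $t\mapsto(t-x)^2$ is strictly increasing on $t\ge x$, and every $t\ge1$ satisfies $t>x$. It follows that $s(t+1,x)>s(t,x)$ for all integers $t\ge1$. Alternatively, one can compute the difference directly:
\[
s(t+1,x)-s(t,x)=\frac{2(t-x)+1}{x}>0.
\]
We also get $s(1,x)>0=s(0,x)$, so monotonicity in fact holds for all integers $t\ge0$, which is slightly more than required.

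For part (b), fix $0<x<1$ and $y>x$, and split on the value of $y$. If $y\ge1$, then $1\le y$ puts $s(1,y)$ in the second case, so $s(1,y)=0$. On the other hand $s(1,x)=(1-x)^2/x>0$ because $x<1$, which settles this case. If $x<y<1$, both values use the first case, and we consider $g(z)=(1-z)^2/z=1/z-2+z$ on $(0,1)$. Its derivative is $g'(z)=1-1/z^2<0$ there, so $g$ is strictly decreasing on $(0,1)$ and hence $g(y)<g(x)$.

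The only point needing care is the case boundary at $t=x$ and $y=1$ in the piecewise definition. The hypothesis $x<1$ is exactly what guarantees that every integer $t\ge1$ falls in the "$t>x$" branch.
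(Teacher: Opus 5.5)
Your proof is correct and follows the paper's argument essentially verbatim. For (a), both use that $t\ge1>x$ places every integer $t\ge1$ in the first branch; for (b), both split on $y\ge1$ versus $x<y<1$ and work with $g(z)=1/z-2+z$. The only cosmetic difference is that you show $g$ is decreasing via $g'(z)=1-1/z^2<0$, whereas the paper factors $s(1,x)-s(1,y)=(y-x)\bigl(1/(xy)-1\bigr)>0$ directly.
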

\begin{proof}

For (a), we have $s(0,x)=0$ as $0<x$, and $s(t,x)=(t-x)^2/x$ for $t\ge1>x$, which is strictly increasing in $t\ge x$. For (b), if $y\ge1$, then $s(1,y)=0<s(1,x)$ because $x<1$; otherwise, we have $s(1,z)=g(z)$ for $z\in\{x,y\}$, where $g(z)\coloneqq(1-z)^2/z=1/z-2+z$.
Since $0<x<y<1$, we have $xy<1$, and thus $s(1,x)-s(1,y)=(1/x-1/y)+(x-y)=(y-x)\bigl(1/(xy)-1\bigr)>0$.
\end{proof}

With \Cref{obs:chi-score} at hand, we are ready to prove the lower bound with $f_{\chi^2}$.

\begin{figure}[t]
\centering
\begin{tikzpicture}[font=\small]
\mipcore
\node[mipnote] at (2.1,-0.6) {complement nodes};
\node[miplab, label=left:$u'_1$] at (0,-1.2) {$\mathtt{010}\#'_1$};
\node[miplab, label=left:$u'_2$] at (0,-2.0) {$\mathtt{001}\#'_2$};
\node[miplab, label=left:$u'_3$] at (0,-2.8) {$\mipX\#'_3$};
\node[miplab, label=right:$w'_1$] at (4.2,-1.2) {$\mathtt{110}\#'_1\,\mathtt{111}\#'_1$};
\node[miplab, label=right:$w'_2$] at (4.2,-2.0) {$\mathtt{110}\#'_2$};
\node[miplab, label=right:$w'_3$] at (4.2,-2.8) {$\mipY\#'_3$};
\node[mipnote] at (2.1,-3.6) {padding nodes};
\node[miplab, font=\footnotesize, label={[font=\small]left:$z^U_1$}] at (0,-4.2) {$\mathtt{01}\#_1\$^U_1\,\mathtt{1}\#_1\mathtt{0}\cdots$};
\node[inner sep=0pt] at (0,-4.76) {$\vdots$};
\node[miplab, font=\footnotesize, label={[font=\small]left:$z^U_9$}] at (0,-5.45) {$\mathtt{01}\#_1\$^U_9\,\mathtt{1}\#_1\mathtt{0}\cdots$};
\node[miplab, font=\footnotesize, label={[font=\small]right:$z^W_1$}] at (4.2,-4.2) {$\mathtt{01}\#_1\$^W_1\,\mathtt{1}\#_2\mathtt{1}\cdots$};
\node[inner sep=0pt] at (4.2,-4.76) {$\vdots$};
\node[miplab, font=\footnotesize, label={[font=\small]right:$z^W_9$}] at (4.2,-5.45) {$\mathtt{01}\#_1\$^W_9\,\mathtt{1}\#_2\mathtt{1}\cdots$};
\end{tikzpicture}
\caption{The construction in the proof of \Cref{cor:chisup-lb} for the vectors of \Cref{fig:maxip-construction}.
The complement nodes and the $m^2=9$ padding nodes on each side are isolated.
Each of $\textcolor{VUBlue}{\mathtt{011}}$ and $\textcolor{VUOrange}{\mathtt{101}}$ now occurs in exactly $m=3$ nodes, and each padding node $z^U_r$ (resp.\ $z^W_r$) contains every special length-$3$ substring of $u_1,u_2,u_3$ (resp.\ $w_1,w_2,w_3$), such as $\mathtt{1}\#_1\mathtt{0}$ (resp.\ $\mathtt{1}\#_2\mathtt{1}$).}
\label{fig:chisup-construction}
\end{figure}

\begin{corollary}\label{cor:chisup-lb}
There is no algorithm solving \CMM with $f_{\chi^2}$ in time $\cO(N^{2-\varepsilon} \poly(\setsize{E}, \setsize{V}))$, for any $\varepsilon > 0$, unless \SETH fails.
\end{corollary}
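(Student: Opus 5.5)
We reduce from the decision version of \MaxIP once more, starting from the instance $\langle G=(V,E),\ell\rangle$ built in the proof of \Cref{thm:maxip-lb} and adding only isolated nodes. Isolated nodes leave $|E_{X,Y}|$ unchanged for every pair $\{X,Y\}$, but they change $k_X,k_Y,k_{X,Y}$ and $\ed(G)$. We use them to make the expected number of edges $\overline{E_{X,Y}}$ identical and below $1$ for all pairs of plain motifs, and strictly larger for every pair involving a special motif. \Cref{obs:chi-score} then finishes the argument.

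\textbf{Construction.} For each $j\in[m]$ we add a \emph{complement node} $u'_j$ labeled by $\bigodot_{i:\,a_i[j]=0} 0\cdot\mathtt{enc}(i)\#'_j$, and symmetrically $w'_j$ using the vectors $b_i$ and leading bit $1$, with fresh separators $\#'_j$. As a result, every plain motif $0\cdot\mathtt{enc}(i)$ (resp.\ $1\cdot\mathtt{enc}(i)$) occurs in exactly $m$ nodes. No plain motif occurs on both sides, so $k_{X,Y}=0$ for a cross pair. Hence every plain pair with $X$ starting with $0$ and $Y$ with $1$ has $\overline{E_{X,Y}}=\ed(G)\,m^2$.

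We then add $M$ \emph{padding nodes} $z^U_r$ on the $U$-side (and likewise $z^W_r$ on the $W$-side). Each is labeled with every special length-$\ell$ substring of $u_1,\dots,u_m$, separated by unique fresh symbols $\$^U_r$ so that no new plain substrings appear. Every special motif then occurs in at least $M$ nodes. We choose $M=\poly(m)$ large, e.g.\ $m^2$ as in the figure, or larger if needed. Since $|E|\le m$ while $|V|\ge 2M$, we get $\ed(G)=|E|/\binom{|V|}{2}$. Taking $M$ polynomially large ensures $x\coloneqq \ed(G)m^2<1$.

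\textbf{Comparisons.} For plain cross pairs we have $\chisup=s(\langle a_i,b_j\rangle,x)$, which by \Cref{obs:chi-score}(a) is monotone in the inner product. Plain pairs on the same side, or the pair $\{X,X\}$, span no edges (all edges are $U$--$W$), so they score $0$. Any pair containing a special motif supports at most one edge, because the special substring contains some $\#_k$ (or a fresh symbol, which supports none). For such a pair we need $\overline{E_{X,Y}}>x$, i.e.\ $k_Xk_Y-\binom{k_{X,Y}}{2}-k_{X,Y}>m^2$. Then \Cref{obs:chi-score}(b) shows that it scores below $s(1,x)$. This in turn is at most the optimum whenever some inner product is positive; the trivial case, where all inner products are zero, is checked directly.

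\textbf{Main obstacle.} The step I expect to be delicate is the bound $\overline{E_{X,Y}}>x$ for special pairs, because of the subtraction of the $k_{X,Y}$ terms. If both $X$ and $Y$ are special $U$-side substrings, they co-occur in all padding nodes, so $k_{X,Y}\ge M$ and the formula could shrink. The saving grace is that such a pair supports an edge only if one of the motifs occurs in some $w_k$, i.e.\ is also a $W$-side substring. I would therefore put every special substring of both sides into both families of padding nodes, or else argue by cases on which side each motif lives. In either case I would aim to show $k_Xk_Y-\binom{k_{X,Y}}{2}-k_{X,Y}\ge M+1-\text{small}>m^2$, using that one of the two motifs sits in extra nodes outside the intersection. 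Pairs consisting of one special motif and one plain motif have $k_X\ge M$, $k_Y=m$, and $k_{X,Y}\le m$, which gives the bound easily.

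\textbf{Size.} Finally, the padding nodes have total length $\cO(M\cdot mn\log n)=\poly(m)\,n\log n$, and $m=c\log n$. So $N=\tilde{\cO}(n)$ and $|V|,|E|=\poly\log n$, and an $\cO(N^{2-\varepsilon}\poly(|E|,|V|))$ algorithm would refute the \MaxIP conjecture and hence \SETH.
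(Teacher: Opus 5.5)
Your construction (complement nodes, then $m^2$ padding nodes per side built from the special substrings with fresh separators) matches the paper's. So does your plan of comparing scores through \Cref{obs:chi-score}, and your treatment of plain pairs is correct. The gap is the step you flag as the main obstacle: you never prove $\overline{E_{X,Y}}>\ed(G)\,m^2$ for pairs containing a special motif, and the mechanism you propose does not work in general. You want to use that ``one of the two motifs sits in extra nodes outside the intersection'', but this fails when $V_X=V_Y$. For instance, $X=\mathtt{enc}(i)\#_j$ with $a_i[j]=b_i[j]=1$ occurs in both $u_j$ and $w_j$. So the pair $\{X,X\}$ supports the edge $\{u_j,w_j\}$ while nothing lies outside $V_X\cap V_Y$. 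Your worry that a large $k_{X,Y}$ might make the formula shrink is also left unresolved.

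The missing idea is that $E^{\max}_{k_X,k_Y,k_{X,Y}}=k_Xk_Y-\binom{k_{X,Y}}{2}-k_{X,Y}$ is exactly the number of unordered pairs $\{x,y\}$ of distinct nodes with $x\in V_X$ and $y\in V_Y$. To see this, take the $k_Xk_Y$ ordered pairs, drop the $k_{X,Y}$ with $x=y$, and merge the $\binom{k_{X,Y}}{2}$ pairs inside $V_X\cap V_Y$ that were counted twice. Hence the quantity is monotone under enlarging $V_X$ or $V_Y$, and a large intersection can only help; if $V_X=V_Y$ it equals $\binom{k_X}{2}$.

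With this you need neither to change the padding nor to split into cases. Assume $X$ occurs in $u_j$ and $Y$ in $w_j$. For $X$, take $\{u_j,z^U_1,\dots,z^U_{m^2}\}\subseteq V_X$ if $X$ is special, or the $m$ nodes of $V_X$ among the $u_k,u'_k$ if $X$ is plain. Symmetrically, for $Y$ take $\{w_j,z^W_1,\dots,z^W_{m^2}\}$ or the $m$ nodes among the $w_k,w'_k$. These two sets are disjoint, so $E^{\max}\ge(m^2+1)m>m^2$, and \Cref{obs:chi-score}(b) gives a score below $s(1,\ed(G)m^2)$. This is exactly how the paper closes the argument.

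Two minor points. First, a padding node can have length $\cO(mn\log^2 n)$ rather than $\cO(mn\log n)$, which is harmless. Second, your direct check of the all-zero case is legitimate: some inner product is positive iff some coordinate is set both in a vector of $A$ and in a vector of $B$, which can be checked in $\cO(nm)$ time.
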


\begin{proof}
We reduce \MaxIP to \CMM with $\chisup$.
Let $A$, $B$, $m$, $n$, $\mathtt{enc}$, $\ell=\lceil\log(n+1)\rceil+1$, and the strings $u_j$, $w_j$ ($1\le j\le m$) be as in the proof of \Cref{thm:maxip-lb}.
If all vectors in $A$ or all vectors in $B$ are zero, the maximum inner product is $0$, so we assume that both sets contain a non-zero vector.

Every block $0\cdot\mathtt{enc}(i)\#_j$ of $u_j$ has length $\ell+1$, so consecutive occurrences of $\#_j$ in $u_j$ are $\ell+1$ positions apart.
Hence, a length-$\ell$ substring of $u_j$ contains at most one occurrence of $\#_j$, and if it contains none, it is the first $\ell$ symbols of a block, that is, $0\cdot\mathtt{enc}(i)$ for some $i$ with $a_i[j]=1$.
The same holds for $w_j$ with $1\cdot\mathtt{enc}(i)$ and $b_i[j]=1$.

We construct $\langle G=(V,E),\ell\rangle$ as follows; all nodes other than the $u_j$ and $w_j$ are isolated.
For each $1\le j\le m$, we add the \emph{complement nodes}
\[
    u'_j = \bigodot_{\substack{1 \leq i \leq n \\ a_i[j]=0}} 0 \cdot \mathtt{enc}(i)\#'_j \quad\text{and}\quad w'_j = \bigodot_{\substack{1 \leq i \leq n \\ b_i[j]=0}} 1 \cdot \mathtt{enc}(i)\#'_j,
\]
where $\#'_1,\ldots,\#'_m$ are fresh symbols; thus $u'_j$ (resp.\ $w'_j$) contains the encoding of $i$ if and only if $u_j$ (resp.\ $w_j$) does not.
Let $\mathcal{Z}_U$ (resp.\ $\mathcal{Z}_W$) be the set of all special length-$\ell$ substrings of $u_1,\ldots,u_m$ (resp.\ $w_1,\ldots,w_m$); by our assumption, both are non-empty.
For each $1\le r\le m^2$, we add the \emph{padding nodes}
\[
    z^U_r = \bigodot_{Z\in\mathcal{Z}_U} Z\,\$^U_r \quad\text{and}\quad z^W_r = \bigodot_{Z\in\mathcal{Z}_W} Z\,\$^W_r,
\]
where the $\$^U_r$ and $\$^W_r$ are fresh, pairwise distinct symbols.
Since each block $Z\,\$^U_r$ has length $\ell+1$, the argument above shows that every length-$\ell$ substring of $z^U_r$ contains $\$^U_r$ or belongs to $\mathcal{Z}_U$; in particular, $z^U_r$ contains no plain substring.
The same holds for $z^W_r$.
Finally, if some $u_j$, $w_j$, $u'_j$, or $w'_j$ is the empty string, we omit that node, together with the edge $\{u_j,w_j\}$ if $u_j$ or $w_j$ is omitted.
Thus $E$ consists of the edges $\{u_j,w_j\}$ with both endpoints non-empty, so $|E|\le m$.
For each $j$, at least one of $u_j,u'_j$ and at least one of $w_j,w'_j$ is non-empty, and all padding nodes are non-empty; hence $|V|\ge 2m^2+2m$.
\Cref{fig:chisup-construction} illustrates the construction.

Recall that, for a motif pair $\{X,Y\}$, $V_X$ (resp.\ $V_Y$) is the set of nodes in which $X$ (resp.\ $Y$) occurs, $k_X=|V_X|$, $k_Y=|V_Y|$, $k_{X,Y}=|V_X\cap V_Y|$, and $\overline{E_{X,Y}}=\ed(G)\cdot E^{\max}_{k_X,k_Y,k_{X,Y}}$, where $E^{\max}_{k_X,k_Y,k_{X,Y}}$ is the number of unordered pairs $\{x,y\}$ of distinct nodes with $x\in V_X$ and $y\in V_Y$.
Every motif pair $\{X,Y\}$ with $|E_{X,Y}|=0$ has score~$0$, so it remains to consider pairs with $|E_{X,Y}|\ge1$.
To compare their scores, let $\mu\coloneqq\ed(G)\cdot m^2$; we show that $\overline{E_{X,Y}}=\mu$ whenever $X$ and $Y$ are plain motifs occurring in $G$ such that $X$ starts with~$0$ and $Y$ starts with~$1$, and $\overline{E_{X,Y}}>\mu$ whenever $X$ or $Y$ is special.
Since $|E|\le m$ and $|V|\ge 2m^2+2m$, note that
\[
  0 < \mu = \frac{|E|\cdot m^2}{\binom{|V|}{2}} \le \frac{m^3}{\binom{2m^2+2m}{2}} < 1,
\]
where $\mu>0$ as $E\neq\emptyset$ (otherwise, no pair remains to be considered).
We first consider pairs of two plain motifs and then pairs in which at least one motif is special.

Every plain substring of a node is the first $\ell$ symbols of a block, hence of the form $0\cdot\mathtt{enc}(i)$ or $1\cdot\mathtt{enc}(i)$ with $1\le i\le n$.
The motif $0\cdot\mathtt{enc}(i)$ occurs in $u_j$ if $a_i[j]=1$, in $u'_j$ if $a_i[j]=0$, and in no other node, as the padding nodes contain no plain substring; analogously for $1\cdot\mathtt{enc}(i)$ with $w_j$, $w'_j$, and $b_i$.
We call a motif pair $\{X,Y\}$ a \emph{plain pair} if $X$ and $Y$ are plain motifs occurring in $G$ such that $X$ starts with $0$ and $Y$ starts with $1$.
By the above, $X=0\cdot\mathtt{enc}(i)$ and $Y=1\cdot\mathtt{enc}(i')$ for some $i,i'\in\{1,\ldots,n\}$ (possibly $i=i'$).
For a plain pair, we have $k_X=k_Y=m$ and $k_{X,Y}=0$, so $\overline{E_{X,Y}}=\mu$.
Moreover, $E_{X,Y}$ consists of the edges $\{u_j,w_j\}$ with $a_i[j]=b_{i'}[j]=1$, so $|E_{X,Y}|=\langle a_i,b_{i'}\rangle$.
Consequently, $\chisup(G,\{X,Y\})=s(\langle a_i,b_{i'}\rangle,\mu)$, which, by \Cref{obs:chi-score}(a), is $0$ if $\langle a_i,b_{i'}\rangle=0$ and strictly increasing in $\langle a_i,b_{i'}\rangle$ otherwise.
Every other pair $\{X,Y\}$ of plain motifs has $|E_{X,Y}|=0$: either $X$ and $Y$ occur in no node, or both start with the same symbol, say $0$, and thus occur only among the $u_j$ and their complement nodes, so the endpoint $w_j$ of each edge $\{u_j,w_j\}$ contains neither motif; the case for symbol $1$ is symmetric.

Now let $\{X,Y\}$, possibly with $X=Y$, be a motif pair with $|E_{X,Y}|\ge1$ such that $X$ or $Y$ is special.
Since every edge has the form $\{u_j,w_j\}$, we may assume by symmetry that $X$ occurs in $u_j$ and $Y$ occurs in $w_j$ for some edge $\{u_j,w_j\}\in E_{X,Y}$.
A special substring of $u_j$ (resp.\ $w_j$) contains $\#_j$, so it occurs in $u_j$ (resp.\ $w_j$) and in every padding node $z^U_r$ (resp.\ $z^W_r$), and in no node other than $u_j$, $w_j$, and the padding nodes.
Thus, if $X$ is special, then $\{u_j,w_j\}$ is the only edge with an endpoint in $V_X$, and likewise for $Y$; hence $|E_{X,Y}|=1$.
Moreover, $V_X$ and $V_Y$ contain disjoint subsets of sizes at least $m^2+1$ and $m$, or vice versa: $\{u_j,z^U_1,\ldots,z^U_{m^2}\}\subseteq V_X$ if $X$ is special, $\{w_j,z^W_1,\ldots,z^W_{m^2}\}\subseteq V_Y$ if $Y$ is special, and, by the case of plain motifs, $V_X$ consists of $m$ nodes among the $u_k$ and their complement nodes if $X$ is plain, and $V_Y$ consists of $m$ nodes among the $w_k$ and their complement nodes if $Y$ is plain.
Hence, $E^{\max}_{k_X,k_Y,k_{X,Y}}\ge(m^2+1)m>m^2$ and thus $\overline{E_{X,Y}}>\mu$.
Therefore, by \Cref{obs:chi-score}(b), $\chisup(G,\{X,Y\})=s(1,\overline{E_{X,Y}})<s(1,\mu)$, which is the score of a plain pair with inner product~$1$.

We now combine both cases.
Let $p^\star\coloneqq\max_{a\in A,\,b\in B}\langle a,b\rangle$, and suppose first that $p^\star\ge1$.
A plain pair $\{0\cdot\mathtt{enc}(i),1\cdot\mathtt{enc}(i')\}$ with $\langle a_i,b_{i'}\rangle=p^\star$ scores $s(p^\star,\mu)\ge s(1,\mu)>0$, whereas, by \Cref{obs:chi-score}(a), every plain pair with a smaller inner product scores less, every pair containing a special motif scores less than $s(1,\mu)$, and every remaining pair scores~$0$.
Hence, every motif pair maximizing $\chisup(G,\cdot)$ is a plain pair whose vectors have inner product~$p^\star$.
To solve \MaxIP, we compute a maximizing pair $\{X^\star,Y^\star\}$ and output $\langle a_i,b_{i'}\rangle$ if it is a plain pair $\{0\cdot\mathtt{enc}(i),1\cdot\mathtt{enc}(i')\}$, and $0$ otherwise.
If $p^\star\ge1$, the output is $p^\star$ by the above.
If $p^\star=0$, the output is $0$ either way: if the returned pair is a plain pair, its vectors have inner product~$0$, and otherwise we output~$0$ by definition.

Each of $u_j,w_j,u'_j,w'_j$ has length at most $n(\ell+1)\in\cO(n\log n)$, so $|\mathcal{Z}_U|,|\mathcal{Z}_W|\le mn(\ell+1)$ and each padding node has length at most $mn(\ell+1)^2$.
Hence, we have $N\in\cO(m^3n\log^2 n)$, $|V|\in\cO(m^2)$, and $|\Sigma|\le 2+2m+2m^2$.
The instance can be built in $\cO(n\,\mathrm{polylog}\,n)$ time, e.g., by computing $\mathcal{Z}_U$ and $\mathcal{Z}_W$ through sorting the $\cO(mn\ell)$ length-$\ell$ substrings of the $u_j$ and $w_j$.
Together with the construction, the $\cO(nm)$-time check for zero vectors, and the $\cO(m)$-time decoding, an $\cO(N^{2-\varepsilon}\poly(|E|,|V|))$-time algorithm for \CMM with $\chisup$ would thus solve \MaxIP with $m=c(\varepsilon/2)\log n$ in $\cO(n^{2-\varepsilon}\,\mathrm{polylog}\,n)\subseteq\cO(n^{2-\varepsilon/2})$ time, contradicting the hardness conjecture for \MaxIP and hence \SETH.
\end{proof}

\section{Materials omitted from~\Cref{sec:wild}.} \label{app:wild}

\subsection{Why we cannot hope for a running time polynomial in $\ell$.}

The result of \Cref{sec:wild} can be couched in parameterized complexity terms to say that $(\ell, d)$-\CMM is fixed-parameter tractable (in \fpt) by $\ell$ (since $\binom{\ell}{d} \le 2^\ell$).
Since $\ell$ is a larger parameter than $d$ (except in trivial instances), it is natural to ask if \ldCMM is in \fpt by $d$, too.
That is, whether there is an algorithm with a running time of the form $\cO(f(d) \cdot \poly(N))$, where $f$ is any computable function.
Note that we do not need to allow a separate polynomial dependence on $\ell$ as for all non-trivial instances $\ell \le N$.

\begin{observation}
$(\ell,d)$-\CMM is W[1]-hard by $d$ for both $f_E$ and $\chisup$.
\end{observation}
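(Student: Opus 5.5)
We would prove W[1]-hardness by a parameterized reduction from a W[1]-hard problem whose parameter maps to the number $d$ of wildcards. The natural candidate is \textsc{Multicolored Clique}, or equivalently a W[1]-hard string problem with wildcards/mismatches parameterized by the number of mismatches. A cleaner route is \textsc{Closest Substring} or \textsc{Consensus Patterns}. It is known that \textsc{Closest Substring} is W[1]-hard parameterized by the distance $d$ even for binary alphabets (Fellows, Gramm, Niedermeier), and a wildcard formulation matches our setting best. We would therefore use the variant ``\textsc{Common Substring with Wildcards}'': given strings $s_1,\dots,s_k$ and integers $\ell,d$, decide whether there is a length-$\ell$ string with exactly $d$ wildcards occurring in every $s_i$. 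This is essentially \textsc{Closest Substring} with fixed mismatch positions shared by all occurrences. Its W[1]-hardness by $d$ would follow from the Multicolored Clique reduction pattern in which a wildcard encodes each selected vertex or block. If this exact variant is not citable, we would redo the standard Clique reduction directly.

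\emph{Step 1 (gadget on the $X$ side).} Build a graph consisting of a star, or a matching of $k$ edges $\{u_i,w_i\}$. Set $u_i\coloneqq 0^{\ell}\#_i s_i$ with a marker, so that any motif $X$ with $d$ wildcards can occur in all $u_i$ only via the encoded instance. Make each $w_i$ contain a fixed string $Y_0$ that occurs in no $u_j$, for example using a fresh symbol $\$$ repeated, so that $Y=\$^{\ell-d}\letx^{d}$ is available and occurs exactly in the $w_i$. Then $f_E(G,\{X,Y\})=k$ if and only if $X$ occurs in all $u_i$. Otherwise the value is at most $k-1$, provided we rule out pairs in which $Y$ or $X$ exploits separators or the $0^\ell$ padding. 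We handle this as in the proof of \Cref{thm:maxip-lb}: distinct separators $\#_i$ confine every special motif to one edge. The parameter is preserved exactly ($d\mapsto d$), and $\ell$ and $N$ are polynomial, so this is a valid parameterized reduction for $f_E$.

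\emph{Step 2 ($\chisup$).} Here we mirror \Cref{cor:chisup-lb}. We add isolated complement and padding nodes so that all ``honest'' candidate pairs have the same $k_X,k_Y,k_{X,Y}$, hence equal expected edge count $\mu<1$. Every pair involving separator-containing motifs then gets a strictly larger expectation with only one edge. \Cref{obs:chi-score} then shows that the maximizer has $|E_{X,Y}|=k$ exactly when a common wildcard motif exists.

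The main obstacle is the normalization in Step 2, because wildcard motifs can now match many more substrings, including the padding and complement nodes. Equalizing $k_X$ across all candidate wildcard motifs is therefore harder than in the plain case. We would first try to make every node other than the $u_i,w_i$ use an alphabet disjoint from the source alphabet on non-separator positions. With $d<\ell$, any motif containing a source character then avoids those nodes, and we would instead pad $V$ with many isolated nodes to force $\mu<1$ while bounding $k_X$ by counting.
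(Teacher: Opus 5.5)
There is a genuine gap. Your matching gadget is only a thin wrapper, so all of the W[1]-hardness rests on the intermediate problem (a common $(\ell,d)$-motif of $s_1,\dots,s_k$). That hardness is neither available from what you cite nor proved. The binary-alphabet hardness of Fellows, Gramm and Niedermeier for \textsc{Closest Substring} is with respect to the number of strings, not $d$. More importantly, in \textsc{Closest Substring} each string may place its mismatches at different positions, whereas your variant needs one set $H$ of $d$ positions that works for every string, so nothing transfers automatically. Your variant is trivially in XP (try all $\binom{\ell}{d}$ masks), and proving it W[1]-hard is exactly the Clique reduction you postpone.

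The gadget is also wrong as written. Every $u_i=0^{\ell}\#_i s_i$ contains $0^{\ell}$, so $\{0^{\ell-d}\letx^{d},\$^{\ell-d}\letx^{d}\}$ attains $f_E=k$ on every instance. Nor does the separator argument of \Cref{thm:maxip-lb} survive wildcards: a wildcard can sit on $\#_i$, so a separator-free motif can match a window straddling $\#_i$, producing occurrences that do not come from the encoded instance (for $d=0$ this cannot happen). Step~2 is, by your own account, unresolved, because masked motifs occur in widely varying numbers of nodes.

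The missing idea is to let the wildcard positions themselves encode the chosen vertices, and to use the input graph itself as the \CMM graph. The paper simply observes that the NP-hardness reduction of Boyen et al.~\cite{DBLP:journals/tcbb/BoyenDNHD11} from balanced biclique does exactly this and is parameter-preserving. Take $\ell=|V|$ and label the $i$-th node by $0^{i-1}10^{|V|-i}$. Then the motif with $0$s and wildcards at $H$ selects exactly the nodes in $H$, so every $d$-set is selectable, while a motif containing a non-$0$ symbol selects at most one node. Hence $k_X,k_Y\le d$ for every pair, which gives $f_E\le E^{\max}_{k_X,k_Y,k_{X,Y}}\le d^2$ and $\chisup\le E^{\max}_{k_X,k_Y,k_{X,Y}}(1-\ed(G))^2/\ed(G)$. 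Both bounds reach their largest possible value (with $E^{\max}=d^2$) exactly when the pair selects two disjoint $d$-sets spanning a $K_{d,d}$, and finding $K_{d,d}$ is W[1]-hard by $d$~\cite{DBLP:conf/soda/Lin15}. No complement or padding nodes are needed for $\chisup$, which removes precisely the normalization problem you could not solve in Step~2.
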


\begin{proof}
The result follows by checking that the reduction from~\cite{DBLP:journals/tcbb/BoyenDNHD11} is a parameterized reduction from balanced biclique, parameterized by solution size, to $(\ell,d)$-\CMM parameterized by $d$.

An outline of the reduction strategy is as follows. The reduction takes a graph as input and produces a node-labeled graph such that a motif consisting of $0$s and $d$ wildcards selects exactly the $d$ nodes at its wildcard positions, so every set of $d$ nodes can be selected (a motif containing a $1$ selects at most one node). A motif pair selecting two disjoint sets of $d$ nodes each maximizes the support function if and only if the selected subgraph is a biclique. Thus finding a motif pair with $d$ wildcards and the best possible support value is equivalent to finding a biclique $K_{d,d}$, a problem that is known to be W[1]-hard by solution size \cite{DBLP:conf/soda/Lin15}.
\end{proof}

In fact, the authors of~\cite{DBLP:journals/tcbb/BoyenDNHD11} show NP-hardness for an even larger class of support functions (which they call \emph{compliant biclique-maximal} support functions). We omit that discussion, because it is quite technical, and refer the interested reader to~\cite{DBLP:journals/tcbb/BoyenDNHD11}.

\section{Materials omitted from~\Cref{sec:exp}.} \label{app:exp}

\subsection{Datasets.}\label{a:datasets}
\paragraph{Real networks.}
We selected the datasets from the experimental evaluation of SLIDER~\cite{DBLP:journals/tcbb/BoyenDNHD11} (S.\ cerevisiae II and H.\ sapiens II) and from the STRING database~\cite{mering2003,szklarczyk2023} (E.\ coli, S.\ cerevisiae I, D.\ melanogaster, A.\ thaliana, and H.\ sapiens I).
The datasets from~\cite{DBLP:journals/tcbb/BoyenDNHD11} represent physical protein--protein interactions obtained using affinity purification and mass spectrometry~\cite{huttlin2021_human,collins2007_yeast}.
The datasets from the STRING database are obtained by aggregating different databases, so they include other types of interaction in addition to the physical ones~\cite{mering2003}.
For this reason, the edges are weighted with a confidence score ranging from $15\%$ to $99.9\%$.
Following their general guideline, we chose a conservative threshold and kept only the edges with a high confidence score, namely $70\%$ or above (STRING's high confidence threshold).
\Cref{tab:string-data} shows the input parameters for the selected datasets before edge filtering.
\begin{table}[tbhp]
\footnotesize
\caption{Top: Parameters of datasets from STRING database. Bottom: Parameters of datasets from~\cite{DBLP:journals/tcbb/BoyenDNHD11}.}
\label{tab:string-data}
\begin{center}
\begin{tabular}{lccc}
\toprule
Dataset & $|V|$ & $|E|$ & $N$ \\
\midrule
E.\ coli          & \num{4140}  & \num{492380}  & \num{1312010} \\
S.\ cerevisiae I  & \num{6538}  & \num{1412421} & \num{2906015} \\
D.\ melanogaster  & \num{13577} & \num{2516980} & \num{7423206} \\
A.\ thaliana      & \num{25739} & \num{6278747} & \num{10843854} \\
H.\ sapiens I     & \num{19622} & \num{6857702} & \num{11283623} \\
\midrule
S.\ cerevisiae II & \num{1620}  & \num{9060}    & \num{862281} \\
H.\ sapiens II    & \num{10381} & \num{69542}   & \num{5608861} \\
\bottomrule
\end{tabular}
\end{center}
\end{table}

We selected the datasets from the STRING database because they are publicly available and are relatively big for modern standards in this application domain, making them well suited for our experiments.

For all datasets in \Cref{tab:string-data}, we removed nodes containing sequences with non-amino-acid letters (it is customary to denote an unknown amino acid by the capital letter \texttt{X}) along with their incident edges and isolated nodes.
The resulting datasets of this procedure are shown in \Cref{tab:string-data-preproc}, and these were the datasets used in our experiments.
Note that all networks in \Cref{tab:string-data} and \Cref{tab:string-data-preproc} are sparse.

\paragraph{Artificial networks.}
For RQ3, RQ4, and RQ5, we generated smaller artificial networks by sampling sequences from the H.\ sapiens I network, which contains the largest sequences on average.
We then added edges uniformly at random until an edge-density threshold was satisfied and ensured that the constructed instances had no isolated nodes.

\subsection{Results for RQ2.}
\Cref{tab:rq2} shows the coverage of $S_{\Exact}$ by $S_{\SEQSlider}$ and $S_{\MSlider}$ across all datasets.
We omitted the density plots for S.\ cerevisiae I and D.\ melanogaster in \Cref{fig:rq2-density-l8d3} because $S_{\Exact}$ was fully covered by $S_{\SEQSlider}$ for these datasets (which can be verified in \Cref{tab:rq2}).

\begin{table*}[tbhp]
\centering
\footnotesize
\caption{Coverage of $S_{\Exact}$ by $S_{\SEQSlider}$ and $S_{\MSlider}$ for $h=8$ and threshold $\tau=70\%$. Values for $S_{\Exact}$ are omitted as they are (trivially) $100\%$.}\label{tab:rq2}
\begin{tabular}{ll ccc ccc}
\toprule
& & \multicolumn{3}{c}{$(\ell=8,d=3)$} & \multicolumn{3}{c}{$(\ell=8,d=5)$} \\
\cmidrule(lr){3-5}\cmidrule(lr){6-8}
Dataset & Method & \makecell{Reduced\\solution size} & \makecell{Full\\solution size} & Coverage & \makecell{Reduced\\solution size} & \makecell{Full\\solution size} & Coverage \\
\midrule
\multirow{3}{*}{E.\ coli}          & Exact      & \num{6200}  & \num{10000} & ---       & \num{8521}  & \num{10000} & ---       \\
                                   & M-SLIDER   & \num{9916}  & \num{10000} & $75.01\%$ & \num{2301}  & \num{2302} & $45.33\%$ \\
                                   & SEQ-SLIDER & \num{694}   & \num{10000} & $\mathbf{85.61\%}$ & \num{9959}  & \num{10000} & $\mathbf{93.70\%}$ \\
\midrule
\multirow{3}{*}{S.\ cerevisiae I}  & Exact      & \num{231}   & \num{10000} & ---       & \num{10000} & \num{10000} & ---       \\
                                   & M-SLIDER   & \num{9980}  & \num{10000} & $100.00\%$   & \num{363}   & \num{363} & $9.79\%$  \\
                                   & SEQ-SLIDER & \num{1337}  & \num{10000} & $\mathbf{100.00\%}$   & \num{10000} & \num{10000} & $\mathbf{77.20\%}$ \\
\midrule
\multirow{3}{*}{D.\ melanogaster}  & Exact      & \num{893}   & \num{10000} & ---       & \num{10000} & \num{10000} & ---       \\
                                   & M-SLIDER   & \num{9890}  & \num{10000} & $85.55\%$ & \num{149}   & \num{149} & $3.19\%$  \\
                                   & SEQ-SLIDER & \num{1311}  & \num{10000} & $\mathbf{100.00\%}$   & \num{10000} & \num{10000} & $\mathbf{45.17\%}$ \\
\midrule
\multirow{3}{*}{A.\ thaliana}      & Exact      & \num{3460}  & \num{10000} & ---       & \num{10000} & \num{10000} & ---       \\
                                   & M-SLIDER   & \num{10000} & \num{10000} & $4.65\%$  & \num{168}   & \num{168} & $5.76\%$  \\
                                   & SEQ-SLIDER & \num{2384}  & \num{10000} & $\mathbf{8.12\%}$  & \num{10000} & \num{10000} & $\mathbf{21.38\%}$ \\
\midrule
\multirow{3}{*}{H.\ sapiens I}     & Exact      & \num{1836}  & \num{10000} & ---       & \num{9992}  & \num{10000} & ---       \\
                                   & M-SLIDER   & \num{9996}  & \num{10000} & $\mathbf{58.27\%}$ & \num{187}   & \num{187} & $3.71\%$  \\
                                   & SEQ-SLIDER & \num{3057}  & \num{10000} & $52.34\%$ & \num{10000} & \num{10000} & $\mathbf{11.67\%}$ \\
\midrule
\multirow{3}{*}{S.\ cerevisiae II} & Exact      & \num{5989}  & \num{10000} & ---       & \num{9996}  & \num{10000} & ---       \\
                                   & M-SLIDER   & \num{9909}  & \num{10000} & $28.63\%$ & \num{9988}  & \num{10000} & $38.56\%$ \\
                                   & SEQ-SLIDER & \num{5792}  & \num{10000} & $\mathbf{78.56\%}$ & \num{9993}  & \num{10000} & $\mathbf{91.79\%}$ \\
\midrule
\multirow{3}{*}{H.\ sapiens II}    & Exact      & \num{6197}  & \num{10000} & ---       & \num{10000} & \num{10000} & ---       \\
                                   & M-SLIDER   & \num{10000} & \num{10000} & $9.95\%$  & \num{472}   & \num{472} & $4.97\%$  \\
                                   & SEQ-SLIDER & \num{4558}  & \num{10000} & $\mathbf{17.50\%}$ & \num{10000} & \num{10000} & $\mathbf{28.97\%}$ \\
\bottomrule
\end{tabular}
\end{table*}

\subsection{Results for RQ3.}
\Cref{tab:string-data-preproc-l8} shows the running time and peak RAM usage per thread of our exact algorithm on the datasets of \Cref{tab:string-data-preproc}.

\begin{table*}[t]
\footnotesize
\caption{Wall-clock time and peak RAM usage per thread for parameter pairs $(\ell=8,d=3)$ and $(\ell=8,d=5)$ run on $t=128$ threads.}\label{tab:string-data-preproc-l8}
\begin{center}
\begin{tabular}{lcccccccc}
\toprule
& & & & \multicolumn{2}{c}{$(\ell=8,d=3)$} & \multicolumn{2}{c}{$(\ell=8,d=5)$} \\
\cmidrule(lr){5-6} \cmidrule(lr){7-8}
Dataset & $|V|$ & $|E|$ & $N$ & Total time & \makecell{Peak RAM\\per thread} & Total time & \makecell{Peak RAM\\per thread} \\
\midrule
E.\ coli          & \num{4006}  & \num{32244}  & \num{1277734} & 01h:01m:40s & \num{183.8}MB  & 00h:05m:42s & \num{177.3}MB \\
S.\ cerevisiae I  & \num{5791}  & \num{104188} & \num{2744228} & 10h:28m:04s & \num{391.2}MB  & 00h:29m:11s & \num{381.3}MB \\
D.\ melanogaster  & \num{10522} & \num{118698} & \num{6043033} & 09h:03m:58s & \num{782.0}MB    & 00h:44m:58s & \num{766.7}MB \\
A.\ thaliana      & \num{20239} & \num{331525} & \num{8903236} & 35h:15m:09s & \num{1200.0}MB   & 01h:31m:07s & \num{1183.4}MB \\
H.\ sapiens I     & \num{16201} & \num{236930} & \num{9605435} & 49h:54m:31s & \num{1294.5}MB & 01h:55m:56s & \num{1277.3}MB \\
\midrule
S.\ cerevisiae II & \num{1620}  & \num{9060}   & \num{862281}  & 00h:32m:17s & \num{125.7}MB  & 00h:03m:11s & \num{120.5}MB \\
H.\ sapiens II    & \num{10380} & \num{69515}  & \num{5607838} & 11h:20m:41s & \num{761.6}MB  & 00h:27m:42s & \num{746.9}MB \\
\bottomrule
\end{tabular}
\end{center}
\end{table*}

\end{document}

%% file: curve_legend.tex
\begingroup
  \makeatletter
  \providecommand\color[2][]{%
    \GenericError{(gnuplot) \space\space\space\@spaces}{%
      Package color not loaded in conjunction with
      terminal option `colourtext'%
    }{See the gnuplot documentation for explanation.%
    }{Either use 'blacktext' in gnuplot or load the package
      color.sty in LaTeX.}%
    \renewcommand\color[2][]{}%
  }%
  \providecommand\includegraphics[2][]{%
    \GenericError{(gnuplot) \space\space\space\@spaces}{%
      Package graphicx or graphics not loaded%
    }{See the gnuplot documentation for explanation.%
    }{The gnuplot epslatex terminal needs graphicx.sty or graphics.sty.}%
    \renewcommand\includegraphics[2][]{}%
  }%
  \providecommand\rotatebox[2]{#2}%
  \@ifundefined{ifGPcolor}{%
    \newif\ifGPcolor
    \GPcolortrue
  }{}%
  \@ifundefined{ifGPblacktext}{%
    \newif\ifGPblacktext
    \GPblacktexttrue
  }{}%
  \let\gplgaddtomacro\g@addto@macro
  \gdef\gplbacktext{}%
  \gdef\gplfronttext{}%
  \makeatother
  \ifGPblacktext
    \def\colorrgb#1{}%
    \def\colorgray#1{}%
  \else
    \ifGPcolor
      \def\colorrgb#1{\color[rgb]{#1}}%
      \def\colorgray#1{\color[gray]{#1}}%
      \expandafter\def\csname LTw\endcsname{\color{white}}%
      \expandafter\def\csname LTb\endcsname{\color{black}}%
      \expandafter\def\csname LTa\endcsname{\color{black}}%
      \expandafter\def\csname LT0\endcsname{\color[rgb]{1,0,0}}%
      \expandafter\def\csname LT1\endcsname{\color[rgb]{0,1,0}}%
      \expandafter\def\csname LT2\endcsname{\color[rgb]{0,0,1}}%
      \expandafter\def\csname LT3\endcsname{\color[rgb]{1,0,1}}%
      \expandafter\def\csname LT4\endcsname{\color[rgb]{0,1,1}}%
      \expandafter\def\csname LT5\endcsname{\color[rgb]{1,1,0}}%
      \expandafter\def\csname LT6\endcsname{\color[rgb]{0,0,0}}%
      \expandafter\def\csname LT7\endcsname{\color[rgb]{1,0.3,0}}%
      \expandafter\def\csname LT8\endcsname{\color[rgb]{0.5,0.5,0.5}}%
    \else
      \def\colorrgb#1{\color{black}}%
      \def\colorgray#1{\color[gray]{#1}}%
      \expandafter\def\csname LTw\endcsname{\color{white}}%
      \expandafter\def\csname LTb\endcsname{\color{black}}%
      \expandafter\def\csname LTa\endcsname{\color{black}}%
      \expandafter\def\csname LT0\endcsname{\color{black}}%
      \expandafter\def\csname LT1\endcsname{\color{black}}%
      \expandafter\def\csname LT2\endcsname{\color{black}}%
      \expandafter\def\csname LT3\endcsname{\color{black}}%
      \expandafter\def\csname LT4\endcsname{\color{black}}%
      \expandafter\def\csname LT5\endcsname{\color{black}}%
      \expandafter\def\csname LT6\endcsname{\color{black}}%
      \expandafter\def\csname LT7\endcsname{\color{black}}%
      \expandafter\def\csname LT8\endcsname{\color{black}}%
    \fi
  \fi
    \setlength{\unitlength}{0.0500bp}%
    \ifx\gptboxheight\undefined%
      \newlength{\gptboxheight}%
      \newlength{\gptboxwidth}%
      \newsavebox{\gptboxtext}%
    \fi%
    \setlength{\fboxrule}{0.5pt}%
    \setlength{\fboxsep}{1pt}%
    \definecolor{tbcol}{rgb}{1,1,1}%
\begin{picture}(3960.00,560.00)%
    \gplgaddtomacro\gplbacktext{%
    }%
    \gplgaddtomacro\gplfronttext{%
      \csname LTb\endcsname
      \put(1541,401){\makebox(0,0)[r]{\strut{}M-SLIDER}}%
      \csname LTb\endcsname
      \put(1541,107){\makebox(0,0)[r]{\strut{}SEQ-SLIDER}}%
      \csname LTb\endcsname
      \put(2855,401){\makebox(0,0)[r]{\strut{}Exact}}%
    }%
    \gplbacktext
    \put(0,0){\includegraphics[width={198.00bp},height={28.00bp}]{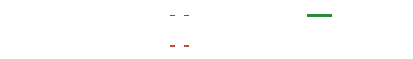}}%
    \gplfronttext
  \end{picture}%
\endgroup

%% file: curve.ecoli.l8d5.dedup.tex
\begingroup
  \makeatletter
  \providecommand\color[2][]{%
    \GenericError{(gnuplot) \space\space\space\@spaces}{%
      Package color not loaded in conjunction with
      terminal option `colourtext'%
    }{See the gnuplot documentation for explanation.%
    }{Either use 'blacktext' in gnuplot or load the package
      color.sty in LaTeX.}%
    \renewcommand\color[2][]{}%
  }%
  \providecommand\includegraphics[2][]{%
    \GenericError{(gnuplot) \space\space\space\@spaces}{%
      Package graphicx or graphics not loaded%
    }{See the gnuplot documentation for explanation.%
    }{The gnuplot epslatex terminal needs graphicx.sty or graphics.sty.}%
    \renewcommand\includegraphics[2][]{}%
  }%
  \providecommand\rotatebox[2]{#2}%
  \@ifundefined{ifGPcolor}{%
    \newif\ifGPcolor
    \GPcolortrue
  }{}%
  \@ifundefined{ifGPblacktext}{%
    \newif\ifGPblacktext
    \GPblacktexttrue
  }{}%
  \let\gplgaddtomacro\g@addto@macro
  \gdef\gplbacktext{}%
  \gdef\gplfronttext{}%
  \makeatother
  \ifGPblacktext
    \def\colorrgb#1{}%
    \def\colorgray#1{}%
  \else
    \ifGPcolor
      \def\colorrgb#1{\color[rgb]{#1}}%
      \def\colorgray#1{\color[gray]{#1}}%
      \expandafter\def\csname LTw\endcsname{\color{white}}%
      \expandafter\def\csname LTb\endcsname{\color{black}}%
      \expandafter\def\csname LTa\endcsname{\color{black}}%
      \expandafter\def\csname LT0\endcsname{\color[rgb]{1,0,0}}%
      \expandafter\def\csname LT1\endcsname{\color[rgb]{0,1,0}}%
      \expandafter\def\csname LT2\endcsname{\color[rgb]{0,0,1}}%
      \expandafter\def\csname LT3\endcsname{\color[rgb]{1,0,1}}%
      \expandafter\def\csname LT4\endcsname{\color[rgb]{0,1,1}}%
      \expandafter\def\csname LT5\endcsname{\color[rgb]{1,1,0}}%
      \expandafter\def\csname LT6\endcsname{\color[rgb]{0,0,0}}%
      \expandafter\def\csname LT7\endcsname{\color[rgb]{1,0.3,0}}%
      \expandafter\def\csname LT8\endcsname{\color[rgb]{0.5,0.5,0.5}}%
    \else
      \def\colorrgb#1{\color{black}}%
      \def\colorgray#1{\color[gray]{#1}}%
      \expandafter\def\csname LTw\endcsname{\color{white}}%
      \expandafter\def\csname LTb\endcsname{\color{black}}%
      \expandafter\def\csname LTa\endcsname{\color{black}}%
      \expandafter\def\csname LT0\endcsname{\color{black}}%
      \expandafter\def\csname LT1\endcsname{\color{black}}%
      \expandafter\def\csname LT2\endcsname{\color{black}}%
      \expandafter\def\csname LT3\endcsname{\color{black}}%
      \expandafter\def\csname LT4\endcsname{\color{black}}%
      \expandafter\def\csname LT5\endcsname{\color{black}}%
      \expandafter\def\csname LT6\endcsname{\color{black}}%
      \expandafter\def\csname LT7\endcsname{\color{black}}%
      \expandafter\def\csname LT8\endcsname{\color{black}}%
    \fi
  \fi
    \setlength{\unitlength}{0.0500bp}%
    \ifx\gptboxheight\undefined%
      \newlength{\gptboxheight}%
      \newlength{\gptboxwidth}%
      \newsavebox{\gptboxtext}%
    \fi%
    \setlength{\fboxrule}{0.5pt}%
    \setlength{\fboxsep}{1pt}%
    \definecolor{tbcol}{rgb}{1,1,1}%
\begin{picture}(1980.00,1700.00)%
    \gplgaddtomacro\gplbacktext{%
      \csname LTb\endcsname
      \put(566,627){\makebox(0,0)[r]{\strut{}$0$}}%
      \csname LTb\endcsname
      \put(566,1043){\makebox(0,0)[r]{\strut{}$0.5$}}%
      \csname LTb\endcsname
      \put(566,1458){\makebox(0,0)[r]{\strut{}$1$}}%
      \csname LTb\endcsname
      \put(649,431){\makebox(0,0){\strut{}$1$}}%
      \csname LTb\endcsname
      \put(1712,431){\makebox(0,0){\strut{}$2301$}}%
    }%
    \gplgaddtomacro\gplfronttext{%
      \csname LTb\endcsname
      \put(159,1055){\rotatebox{-270.00}{\makebox(0,0){\strut{}\normalsize Norm. $f_{\chi^2}$}}}%
      \csname LTb\endcsname
      \put(1181,137){\makebox(0,0){\strut{}\normalsize Score rank}}%
    }%
    \gplbacktext
    \put(0,0){\includegraphics[width={99.00bp},height={85.00bp}]{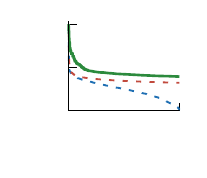}}%
    \gplfronttext
  \end{picture}%
\endgroup

%% file: curve.yeast.l8d5.dedup.tex
\begingroup
  \makeatletter
  \providecommand\color[2][]{%
    \GenericError{(gnuplot) \space\space\space\@spaces}{%
      Package color not loaded in conjunction with
      terminal option `colourtext'%
    }{See the gnuplot documentation for explanation.%
    }{Either use 'blacktext' in gnuplot or load the package
      color.sty in LaTeX.}%
    \renewcommand\color[2][]{}%
  }%
  \providecommand\includegraphics[2][]{%
    \GenericError{(gnuplot) \space\space\space\@spaces}{%
      Package graphicx or graphics not loaded%
    }{See the gnuplot documentation for explanation.%
    }{The gnuplot epslatex terminal needs graphicx.sty or graphics.sty.}%
    \renewcommand\includegraphics[2][]{}%
  }%
  \providecommand\rotatebox[2]{#2}%
  \@ifundefined{ifGPcolor}{%
    \newif\ifGPcolor
    \GPcolortrue
  }{}%
  \@ifundefined{ifGPblacktext}{%
    \newif\ifGPblacktext
    \GPblacktexttrue
  }{}%
  \let\gplgaddtomacro\g@addto@macro
  \gdef\gplbacktext{}%
  \gdef\gplfronttext{}%
  \makeatother
  \ifGPblacktext
    \def\colorrgb#1{}%
    \def\colorgray#1{}%
  \else
    \ifGPcolor
      \def\colorrgb#1{\color[rgb]{#1}}%
      \def\colorgray#1{\color[gray]{#1}}%
      \expandafter\def\csname LTw\endcsname{\color{white}}%
      \expandafter\def\csname LTb\endcsname{\color{black}}%
      \expandafter\def\csname LTa\endcsname{\color{black}}%
      \expandafter\def\csname LT0\endcsname{\color[rgb]{1,0,0}}%
      \expandafter\def\csname LT1\endcsname{\color[rgb]{0,1,0}}%
      \expandafter\def\csname LT2\endcsname{\color[rgb]{0,0,1}}%
      \expandafter\def\csname LT3\endcsname{\color[rgb]{1,0,1}}%
      \expandafter\def\csname LT4\endcsname{\color[rgb]{0,1,1}}%
      \expandafter\def\csname LT5\endcsname{\color[rgb]{1,1,0}}%
      \expandafter\def\csname LT6\endcsname{\color[rgb]{0,0,0}}%
      \expandafter\def\csname LT7\endcsname{\color[rgb]{1,0.3,0}}%
      \expandafter\def\csname LT8\endcsname{\color[rgb]{0.5,0.5,0.5}}%
    \else
      \def\colorrgb#1{\color{black}}%
      \def\colorgray#1{\color[gray]{#1}}%
      \expandafter\def\csname LTw\endcsname{\color{white}}%
      \expandafter\def\csname LTb\endcsname{\color{black}}%
      \expandafter\def\csname LTa\endcsname{\color{black}}%
      \expandafter\def\csname LT0\endcsname{\color{black}}%
      \expandafter\def\csname LT1\endcsname{\color{black}}%
      \expandafter\def\csname LT2\endcsname{\color{black}}%
      \expandafter\def\csname LT3\endcsname{\color{black}}%
      \expandafter\def\csname LT4\endcsname{\color{black}}%
      \expandafter\def\csname LT5\endcsname{\color{black}}%
      \expandafter\def\csname LT6\endcsname{\color{black}}%
      \expandafter\def\csname LT7\endcsname{\color{black}}%
      \expandafter\def\csname LT8\endcsname{\color{black}}%
    \fi
  \fi
    \setlength{\unitlength}{0.0500bp}%
    \ifx\gptboxheight\undefined%
      \newlength{\gptboxheight}%
      \newlength{\gptboxwidth}%
      \newsavebox{\gptboxtext}%
    \fi%
    \setlength{\fboxrule}{0.5pt}%
    \setlength{\fboxsep}{1pt}%
    \definecolor{tbcol}{rgb}{1,1,1}%
\begin{picture}(1980.00,1700.00)%
    \gplgaddtomacro\gplbacktext{%
      \csname LTb\endcsname
      \put(566,627){\makebox(0,0)[r]{\strut{}$0$}}%
      \csname LTb\endcsname
      \put(566,1043){\makebox(0,0)[r]{\strut{}$0.5$}}%
      \csname LTb\endcsname
      \put(566,1458){\makebox(0,0)[r]{\strut{}$1$}}%
      \csname LTb\endcsname
      \put(649,431){\makebox(0,0){\strut{}$1$}}%
      \csname LTb\endcsname
      \put(1712,431){\makebox(0,0){\strut{}$363$}}%
    }%
    \gplgaddtomacro\gplfronttext{%
      \csname LTb\endcsname
      \put(159,1055){\rotatebox{-270.00}{\makebox(0,0){\strut{}\normalsize Norm. $f_{\chi^2}$}}}%
      \csname LTb\endcsname
      \put(1181,137){\makebox(0,0){\strut{}\normalsize Score rank}}%
    }%
    \gplbacktext
    \put(0,0){\includegraphics[width={99.00bp},height={85.00bp}]{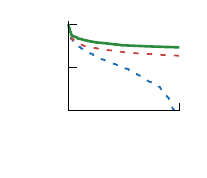}}%
    \gplfronttext
  \end{picture}%
\endgroup

%% file: curve.fly.l8d5.dedup.tex
\begingroup
  \makeatletter
  \providecommand\color[2][]{%
    \GenericError{(gnuplot) \space\space\space\@spaces}{%
      Package color not loaded in conjunction with
      terminal option `colourtext'%
    }{See the gnuplot documentation for explanation.%
    }{Either use 'blacktext' in gnuplot or load the package
      color.sty in LaTeX.}%
    \renewcommand\color[2][]{}%
  }%
  \providecommand\includegraphics[2][]{%
    \GenericError{(gnuplot) \space\space\space\@spaces}{%
      Package graphicx or graphics not loaded%
    }{See the gnuplot documentation for explanation.%
    }{The gnuplot epslatex terminal needs graphicx.sty or graphics.sty.}%
    \renewcommand\includegraphics[2][]{}%
  }%
  \providecommand\rotatebox[2]{#2}%
  \@ifundefined{ifGPcolor}{%
    \newif\ifGPcolor
    \GPcolortrue
  }{}%
  \@ifundefined{ifGPblacktext}{%
    \newif\ifGPblacktext
    \GPblacktexttrue
  }{}%
  \let\gplgaddtomacro\g@addto@macro
  \gdef\gplbacktext{}%
  \gdef\gplfronttext{}%
  \makeatother
  \ifGPblacktext
    \def\colorrgb#1{}%
    \def\colorgray#1{}%
  \else
    \ifGPcolor
      \def\colorrgb#1{\color[rgb]{#1}}%
      \def\colorgray#1{\color[gray]{#1}}%
      \expandafter\def\csname LTw\endcsname{\color{white}}%
      \expandafter\def\csname LTb\endcsname{\color{black}}%
      \expandafter\def\csname LTa\endcsname{\color{black}}%
      \expandafter\def\csname LT0\endcsname{\color[rgb]{1,0,0}}%
      \expandafter\def\csname LT1\endcsname{\color[rgb]{0,1,0}}%
      \expandafter\def\csname LT2\endcsname{\color[rgb]{0,0,1}}%
      \expandafter\def\csname LT3\endcsname{\color[rgb]{1,0,1}}%
      \expandafter\def\csname LT4\endcsname{\color[rgb]{0,1,1}}%
      \expandafter\def\csname LT5\endcsname{\color[rgb]{1,1,0}}%
      \expandafter\def\csname LT6\endcsname{\color[rgb]{0,0,0}}%
      \expandafter\def\csname LT7\endcsname{\color[rgb]{1,0.3,0}}%
      \expandafter\def\csname LT8\endcsname{\color[rgb]{0.5,0.5,0.5}}%
    \else
      \def\colorrgb#1{\color{black}}%
      \def\colorgray#1{\color[gray]{#1}}%
      \expandafter\def\csname LTw\endcsname{\color{white}}%
      \expandafter\def\csname LTb\endcsname{\color{black}}%
      \expandafter\def\csname LTa\endcsname{\color{black}}%
      \expandafter\def\csname LT0\endcsname{\color{black}}%
      \expandafter\def\csname LT1\endcsname{\color{black}}%
      \expandafter\def\csname LT2\endcsname{\color{black}}%
      \expandafter\def\csname LT3\endcsname{\color{black}}%
      \expandafter\def\csname LT4\endcsname{\color{black}}%
      \expandafter\def\csname LT5\endcsname{\color{black}}%
      \expandafter\def\csname LT6\endcsname{\color{black}}%
      \expandafter\def\csname LT7\endcsname{\color{black}}%
      \expandafter\def\csname LT8\endcsname{\color{black}}%
    \fi
  \fi
    \setlength{\unitlength}{0.0500bp}%
    \ifx\gptboxheight\undefined%
      \newlength{\gptboxheight}%
      \newlength{\gptboxwidth}%
      \newsavebox{\gptboxtext}%
    \fi%
    \setlength{\fboxrule}{0.5pt}%
    \setlength{\fboxsep}{1pt}%
    \definecolor{tbcol}{rgb}{1,1,1}%
\begin{picture}(1980.00,1700.00)%
    \gplgaddtomacro\gplbacktext{%
      \csname LTb\endcsname
      \put(566,627){\makebox(0,0)[r]{\strut{}$0$}}%
      \csname LTb\endcsname
      \put(566,1043){\makebox(0,0)[r]{\strut{}$0.5$}}%
      \csname LTb\endcsname
      \put(566,1458){\makebox(0,0)[r]{\strut{}$1$}}%
      \csname LTb\endcsname
      \put(649,431){\makebox(0,0){\strut{}$1$}}%
      \csname LTb\endcsname
      \put(1712,431){\makebox(0,0){\strut{}$149$}}%
    }%
    \gplgaddtomacro\gplfronttext{%
      \csname LTb\endcsname
      \put(159,1055){\rotatebox{-270.00}{\makebox(0,0){\strut{}\normalsize Norm. $f_{\chi^2}$}}}%
      \csname LTb\endcsname
      \put(1181,137){\makebox(0,0){\strut{}\normalsize Score rank}}%
    }%
    \gplbacktext
    \put(0,0){\includegraphics[width={99.00bp},height={85.00bp}]{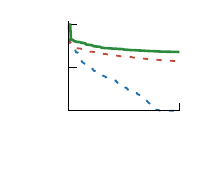}}%
    \gplfronttext
  \end{picture}%
\endgroup

%% file: curve.plant.l8d5.dedup.tex
\begingroup
  \makeatletter
  \providecommand\color[2][]{%
    \GenericError{(gnuplot) \space\space\space\@spaces}{%
      Package color not loaded in conjunction with
      terminal option `colourtext'%
    }{See the gnuplot documentation for explanation.%
    }{Either use 'blacktext' in gnuplot or load the package
      color.sty in LaTeX.}%
    \renewcommand\color[2][]{}%
  }%
  \providecommand\includegraphics[2][]{%
    \GenericError{(gnuplot) \space\space\space\@spaces}{%
      Package graphicx or graphics not loaded%
    }{See the gnuplot documentation for explanation.%
    }{The gnuplot epslatex terminal needs graphicx.sty or graphics.sty.}%
    \renewcommand\includegraphics[2][]{}%
  }%
  \providecommand\rotatebox[2]{#2}%
  \@ifundefined{ifGPcolor}{%
    \newif\ifGPcolor
    \GPcolortrue
  }{}%
  \@ifundefined{ifGPblacktext}{%
    \newif\ifGPblacktext
    \GPblacktexttrue
  }{}%
  \let\gplgaddtomacro\g@addto@macro
  \gdef\gplbacktext{}%
  \gdef\gplfronttext{}%
  \makeatother
  \ifGPblacktext
    \def\colorrgb#1{}%
    \def\colorgray#1{}%
  \else
    \ifGPcolor
      \def\colorrgb#1{\color[rgb]{#1}}%
      \def\colorgray#1{\color[gray]{#1}}%
      \expandafter\def\csname LTw\endcsname{\color{white}}%
      \expandafter\def\csname LTb\endcsname{\color{black}}%
      \expandafter\def\csname LTa\endcsname{\color{black}}%
      \expandafter\def\csname LT0\endcsname{\color[rgb]{1,0,0}}%
      \expandafter\def\csname LT1\endcsname{\color[rgb]{0,1,0}}%
      \expandafter\def\csname LT2\endcsname{\color[rgb]{0,0,1}}%
      \expandafter\def\csname LT3\endcsname{\color[rgb]{1,0,1}}%
      \expandafter\def\csname LT4\endcsname{\color[rgb]{0,1,1}}%
      \expandafter\def\csname LT5\endcsname{\color[rgb]{1,1,0}}%
      \expandafter\def\csname LT6\endcsname{\color[rgb]{0,0,0}}%
      \expandafter\def\csname LT7\endcsname{\color[rgb]{1,0.3,0}}%
      \expandafter\def\csname LT8\endcsname{\color[rgb]{0.5,0.5,0.5}}%
    \else
      \def\colorrgb#1{\color{black}}%
      \def\colorgray#1{\color[gray]{#1}}%
      \expandafter\def\csname LTw\endcsname{\color{white}}%
      \expandafter\def\csname LTb\endcsname{\color{black}}%
      \expandafter\def\csname LTa\endcsname{\color{black}}%
      \expandafter\def\csname LT0\endcsname{\color{black}}%
      \expandafter\def\csname LT1\endcsname{\color{black}}%
      \expandafter\def\csname LT2\endcsname{\color{black}}%
      \expandafter\def\csname LT3\endcsname{\color{black}}%
      \expandafter\def\csname LT4\endcsname{\color{black}}%
      \expandafter\def\csname LT5\endcsname{\color{black}}%
      \expandafter\def\csname LT6\endcsname{\color{black}}%
      \expandafter\def\csname LT7\endcsname{\color{black}}%
      \expandafter\def\csname LT8\endcsname{\color{black}}%
    \fi
  \fi
    \setlength{\unitlength}{0.0500bp}%
    \ifx\gptboxheight\undefined%
      \newlength{\gptboxheight}%
      \newlength{\gptboxwidth}%
      \newsavebox{\gptboxtext}%
    \fi%
    \setlength{\fboxrule}{0.5pt}%
    \setlength{\fboxsep}{1pt}%
    \definecolor{tbcol}{rgb}{1,1,1}%
\begin{picture}(1980.00,1700.00)%
    \gplgaddtomacro\gplbacktext{%
      \csname LTb\endcsname
      \put(566,627){\makebox(0,0)[r]{\strut{}$0$}}%
      \csname LTb\endcsname
      \put(566,1043){\makebox(0,0)[r]{\strut{}$0.5$}}%
      \csname LTb\endcsname
      \put(566,1458){\makebox(0,0)[r]{\strut{}$1$}}%
      \csname LTb\endcsname
      \put(649,431){\makebox(0,0){\strut{}$1$}}%
      \csname LTb\endcsname
      \put(1712,431){\makebox(0,0){\strut{}$168$}}%
    }%
    \gplgaddtomacro\gplfronttext{%
      \csname LTb\endcsname
      \put(159,1055){\rotatebox{-270.00}{\makebox(0,0){\strut{}\normalsize Norm. $f_{\chi^2}$}}}%
      \csname LTb\endcsname
      \put(1181,137){\makebox(0,0){\strut{}\normalsize Score rank}}%
    }%
    \gplbacktext
    \put(0,0){\includegraphics[width={99.00bp},height={85.00bp}]{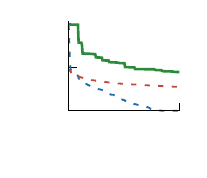}}%
    \gplfronttext
  \end{picture}%
\endgroup

%% file: curve.human.l8d5.dedup.tex
\begingroup
  \makeatletter
  \providecommand\color[2][]{%
    \GenericError{(gnuplot) \space\space\space\@spaces}{%
      Package color not loaded in conjunction with
      terminal option `colourtext'%
    }{See the gnuplot documentation for explanation.%
    }{Either use 'blacktext' in gnuplot or load the package
      color.sty in LaTeX.}%
    \renewcommand\color[2][]{}%
  }%
  \providecommand\includegraphics[2][]{%
    \GenericError{(gnuplot) \space\space\space\@spaces}{%
      Package graphicx or graphics not loaded%
    }{See the gnuplot documentation for explanation.%
    }{The gnuplot epslatex terminal needs graphicx.sty or graphics.sty.}%
    \renewcommand\includegraphics[2][]{}%
  }%
  \providecommand\rotatebox[2]{#2}%
  \@ifundefined{ifGPcolor}{%
    \newif\ifGPcolor
    \GPcolortrue
  }{}%
  \@ifundefined{ifGPblacktext}{%
    \newif\ifGPblacktext
    \GPblacktexttrue
  }{}%
  \let\gplgaddtomacro\g@addto@macro
  \gdef\gplbacktext{}%
  \gdef\gplfronttext{}%
  \makeatother
  \ifGPblacktext
    \def\colorrgb#1{}%
    \def\colorgray#1{}%
  \else
    \ifGPcolor
      \def\colorrgb#1{\color[rgb]{#1}}%
      \def\colorgray#1{\color[gray]{#1}}%
      \expandafter\def\csname LTw\endcsname{\color{white}}%
      \expandafter\def\csname LTb\endcsname{\color{black}}%
      \expandafter\def\csname LTa\endcsname{\color{black}}%
      \expandafter\def\csname LT0\endcsname{\color[rgb]{1,0,0}}%
      \expandafter\def\csname LT1\endcsname{\color[rgb]{0,1,0}}%
      \expandafter\def\csname LT2\endcsname{\color[rgb]{0,0,1}}%
      \expandafter\def\csname LT3\endcsname{\color[rgb]{1,0,1}}%
      \expandafter\def\csname LT4\endcsname{\color[rgb]{0,1,1}}%
      \expandafter\def\csname LT5\endcsname{\color[rgb]{1,1,0}}%
      \expandafter\def\csname LT6\endcsname{\color[rgb]{0,0,0}}%
      \expandafter\def\csname LT7\endcsname{\color[rgb]{1,0.3,0}}%
      \expandafter\def\csname LT8\endcsname{\color[rgb]{0.5,0.5,0.5}}%
    \else
      \def\colorrgb#1{\color{black}}%
      \def\colorgray#1{\color[gray]{#1}}%
      \expandafter\def\csname LTw\endcsname{\color{white}}%
      \expandafter\def\csname LTb\endcsname{\color{black}}%
      \expandafter\def\csname LTa\endcsname{\color{black}}%
      \expandafter\def\csname LT0\endcsname{\color{black}}%
      \expandafter\def\csname LT1\endcsname{\color{black}}%
      \expandafter\def\csname LT2\endcsname{\color{black}}%
      \expandafter\def\csname LT3\endcsname{\color{black}}%
      \expandafter\def\csname LT4\endcsname{\color{black}}%
      \expandafter\def\csname LT5\endcsname{\color{black}}%
      \expandafter\def\csname LT6\endcsname{\color{black}}%
      \expandafter\def\csname LT7\endcsname{\color{black}}%
      \expandafter\def\csname LT8\endcsname{\color{black}}%
    \fi
  \fi
    \setlength{\unitlength}{0.0500bp}%
    \ifx\gptboxheight\undefined%
      \newlength{\gptboxheight}%
      \newlength{\gptboxwidth}%
      \newsavebox{\gptboxtext}%
    \fi%
    \setlength{\fboxrule}{0.5pt}%
    \setlength{\fboxsep}{1pt}%
    \definecolor{tbcol}{rgb}{1,1,1}%
\begin{picture}(1980.00,1700.00)%
    \gplgaddtomacro\gplbacktext{%
      \csname LTb\endcsname
      \put(566,627){\makebox(0,0)[r]{\strut{}$0$}}%
      \csname LTb\endcsname
      \put(566,1043){\makebox(0,0)[r]{\strut{}$0.5$}}%
      \csname LTb\endcsname
      \put(566,1458){\makebox(0,0)[r]{\strut{}$1$}}%
      \csname LTb\endcsname
      \put(649,431){\makebox(0,0){\strut{}$1$}}%
      \csname LTb\endcsname
      \put(1712,431){\makebox(0,0){\strut{}$187$}}%
    }%
    \gplgaddtomacro\gplfronttext{%
      \csname LTb\endcsname
      \put(159,1055){\rotatebox{-270.00}{\makebox(0,0){\strut{}\normalsize Norm. $f_{\chi^2}$}}}%
      \csname LTb\endcsname
      \put(1181,137){\makebox(0,0){\strut{}\normalsize Score rank}}%
    }%
    \gplbacktext
    \put(0,0){\includegraphics[width={99.00bp},height={85.00bp}]{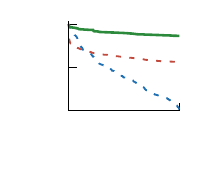}}%
    \gplfronttext
  \end{picture}%
\endgroup

%% file: curve.yeast_aj.l8d5.dedup.tex
\begingroup
  \makeatletter
  \providecommand\color[2][]{%
    \GenericError{(gnuplot) \space\space\space\@spaces}{%
      Package color not loaded in conjunction with
      terminal option `colourtext'%
    }{See the gnuplot documentation for explanation.%
    }{Either use 'blacktext' in gnuplot or load the package
      color.sty in LaTeX.}%
    \renewcommand\color[2][]{}%
  }%
  \providecommand\includegraphics[2][]{%
    \GenericError{(gnuplot) \space\space\space\@spaces}{%
      Package graphicx or graphics not loaded%
    }{See the gnuplot documentation for explanation.%
    }{The gnuplot epslatex terminal needs graphicx.sty or graphics.sty.}%
    \renewcommand\includegraphics[2][]{}%
  }%
  \providecommand\rotatebox[2]{#2}%
  \@ifundefined{ifGPcolor}{%
    \newif\ifGPcolor
    \GPcolortrue
  }{}%
  \@ifundefined{ifGPblacktext}{%
    \newif\ifGPblacktext
    \GPblacktexttrue
  }{}%
  \let\gplgaddtomacro\g@addto@macro
  \gdef\gplbacktext{}%
  \gdef\gplfronttext{}%
  \makeatother
  \ifGPblacktext
    \def\colorrgb#1{}%
    \def\colorgray#1{}%
  \else
    \ifGPcolor
      \def\colorrgb#1{\color[rgb]{#1}}%
      \def\colorgray#1{\color[gray]{#1}}%
      \expandafter\def\csname LTw\endcsname{\color{white}}%
      \expandafter\def\csname LTb\endcsname{\color{black}}%
      \expandafter\def\csname LTa\endcsname{\color{black}}%
      \expandafter\def\csname LT0\endcsname{\color[rgb]{1,0,0}}%
      \expandafter\def\csname LT1\endcsname{\color[rgb]{0,1,0}}%
      \expandafter\def\csname LT2\endcsname{\color[rgb]{0,0,1}}%
      \expandafter\def\csname LT3\endcsname{\color[rgb]{1,0,1}}%
      \expandafter\def\csname LT4\endcsname{\color[rgb]{0,1,1}}%
      \expandafter\def\csname LT5\endcsname{\color[rgb]{1,1,0}}%
      \expandafter\def\csname LT6\endcsname{\color[rgb]{0,0,0}}%
      \expandafter\def\csname LT7\endcsname{\color[rgb]{1,0.3,0}}%
      \expandafter\def\csname LT8\endcsname{\color[rgb]{0.5,0.5,0.5}}%
    \else
      \def\colorrgb#1{\color{black}}%
      \def\colorgray#1{\color[gray]{#1}}%
      \expandafter\def\csname LTw\endcsname{\color{white}}%
      \expandafter\def\csname LTb\endcsname{\color{black}}%
      \expandafter\def\csname LTa\endcsname{\color{black}}%
      \expandafter\def\csname LT0\endcsname{\color{black}}%
      \expandafter\def\csname LT1\endcsname{\color{black}}%
      \expandafter\def\csname LT2\endcsname{\color{black}}%
      \expandafter\def\csname LT3\endcsname{\color{black}}%
      \expandafter\def\csname LT4\endcsname{\color{black}}%
      \expandafter\def\csname LT5\endcsname{\color{black}}%
      \expandafter\def\csname LT6\endcsname{\color{black}}%
      \expandafter\def\csname LT7\endcsname{\color{black}}%
      \expandafter\def\csname LT8\endcsname{\color{black}}%
    \fi
  \fi
    \setlength{\unitlength}{0.0500bp}%
    \ifx\gptboxheight\undefined%
      \newlength{\gptboxheight}%
      \newlength{\gptboxwidth}%
      \newsavebox{\gptboxtext}%
    \fi%
    \setlength{\fboxrule}{0.5pt}%
    \setlength{\fboxsep}{1pt}%
    \definecolor{tbcol}{rgb}{1,1,1}%
\begin{picture}(1980.00,1700.00)%
    \gplgaddtomacro\gplbacktext{%
      \csname LTb\endcsname
      \put(566,627){\makebox(0,0)[r]{\strut{}$0$}}%
      \csname LTb\endcsname
      \put(566,1043){\makebox(0,0)[r]{\strut{}$0.5$}}%
      \csname LTb\endcsname
      \put(566,1458){\makebox(0,0)[r]{\strut{}$1$}}%
      \csname LTb\endcsname
      \put(649,431){\makebox(0,0){\strut{}$1$}}%
      \csname LTb\endcsname
      \put(1712,431){\makebox(0,0){\strut{}$9988$}}%
    }%
    \gplgaddtomacro\gplfronttext{%
      \csname LTb\endcsname
      \put(159,1055){\rotatebox{-270.00}{\makebox(0,0){\strut{}\normalsize Norm. $f_{\chi^2}$}}}%
      \csname LTb\endcsname
      \put(1181,137){\makebox(0,0){\strut{}\normalsize Score rank}}%
    }%
    \gplbacktext
    \put(0,0){\includegraphics[width={99.00bp},height={85.00bp}]{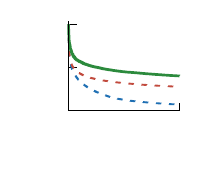}}%
    \gplfronttext
  \end{picture}%
\endgroup

%% file: curve.human_aj.l8d5.dedup.tex
\begingroup
  \makeatletter
  \providecommand\color[2][]{%
    \GenericError{(gnuplot) \space\space\space\@spaces}{%
      Package color not loaded in conjunction with
      terminal option `colourtext'%
    }{See the gnuplot documentation for explanation.%
    }{Either use 'blacktext' in gnuplot or load the package
      color.sty in LaTeX.}%
    \renewcommand\color[2][]{}%
  }%
  \providecommand\includegraphics[2][]{%
    \GenericError{(gnuplot) \space\space\space\@spaces}{%
      Package graphicx or graphics not loaded%
    }{See the gnuplot documentation for explanation.%
    }{The gnuplot epslatex terminal needs graphicx.sty or graphics.sty.}%
    \renewcommand\includegraphics[2][]{}%
  }%
  \providecommand\rotatebox[2]{#2}%
  \@ifundefined{ifGPcolor}{%
    \newif\ifGPcolor
    \GPcolortrue
  }{}%
  \@ifundefined{ifGPblacktext}{%
    \newif\ifGPblacktext
    \GPblacktexttrue
  }{}%
  \let\gplgaddtomacro\g@addto@macro
  \gdef\gplbacktext{}%
  \gdef\gplfronttext{}%
  \makeatother
  \ifGPblacktext
    \def\colorrgb#1{}%
    \def\colorgray#1{}%
  \else
    \ifGPcolor
      \def\colorrgb#1{\color[rgb]{#1}}%
      \def\colorgray#1{\color[gray]{#1}}%
      \expandafter\def\csname LTw\endcsname{\color{white}}%
      \expandafter\def\csname LTb\endcsname{\color{black}}%
      \expandafter\def\csname LTa\endcsname{\color{black}}%
      \expandafter\def\csname LT0\endcsname{\color[rgb]{1,0,0}}%
      \expandafter\def\csname LT1\endcsname{\color[rgb]{0,1,0}}%
      \expandafter\def\csname LT2\endcsname{\color[rgb]{0,0,1}}%
      \expandafter\def\csname LT3\endcsname{\color[rgb]{1,0,1}}%
      \expandafter\def\csname LT4\endcsname{\color[rgb]{0,1,1}}%
      \expandafter\def\csname LT5\endcsname{\color[rgb]{1,1,0}}%
      \expandafter\def\csname LT6\endcsname{\color[rgb]{0,0,0}}%
      \expandafter\def\csname LT7\endcsname{\color[rgb]{1,0.3,0}}%
      \expandafter\def\csname LT8\endcsname{\color[rgb]{0.5,0.5,0.5}}%
    \else
      \def\colorrgb#1{\color{black}}%
      \def\colorgray#1{\color[gray]{#1}}%
      \expandafter\def\csname LTw\endcsname{\color{white}}%
      \expandafter\def\csname LTb\endcsname{\color{black}}%
      \expandafter\def\csname LTa\endcsname{\color{black}}%
      \expandafter\def\csname LT0\endcsname{\color{black}}%
      \expandafter\def\csname LT1\endcsname{\color{black}}%
      \expandafter\def\csname LT2\endcsname{\color{black}}%
      \expandafter\def\csname LT3\endcsname{\color{black}}%
      \expandafter\def\csname LT4\endcsname{\color{black}}%
      \expandafter\def\csname LT5\endcsname{\color{black}}%
      \expandafter\def\csname LT6\endcsname{\color{black}}%
      \expandafter\def\csname LT7\endcsname{\color{black}}%
      \expandafter\def\csname LT8\endcsname{\color{black}}%
    \fi
  \fi
    \setlength{\unitlength}{0.0500bp}%
    \ifx\gptboxheight\undefined%
      \newlength{\gptboxheight}%
      \newlength{\gptboxwidth}%
      \newsavebox{\gptboxtext}%
    \fi%
    \setlength{\fboxrule}{0.5pt}%
    \setlength{\fboxsep}{1pt}%
    \definecolor{tbcol}{rgb}{1,1,1}%
\begin{picture}(1980.00,1700.00)%
    \gplgaddtomacro\gplbacktext{%
      \csname LTb\endcsname
      \put(566,627){\makebox(0,0)[r]{\strut{}$0$}}%
      \csname LTb\endcsname
      \put(566,1043){\makebox(0,0)[r]{\strut{}$0.5$}}%
      \csname LTb\endcsname
      \put(566,1458){\makebox(0,0)[r]{\strut{}$1$}}%
      \csname LTb\endcsname
      \put(649,431){\makebox(0,0){\strut{}$1$}}%
      \csname LTb\endcsname
      \put(1712,431){\makebox(0,0){\strut{}$472$}}%
    }%
    \gplgaddtomacro\gplfronttext{%
      \csname LTb\endcsname
      \put(159,1055){\rotatebox{-270.00}{\makebox(0,0){\strut{}\normalsize Norm. $f_{\chi^2}$}}}%
      \csname LTb\endcsname
      \put(1181,137){\makebox(0,0){\strut{}\normalsize Score rank}}%
    }%
    \gplbacktext
    \put(0,0){\includegraphics[width={99.00bp},height={85.00bp}]{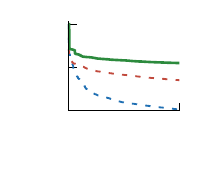}}%
    \gplfronttext
  \end{picture}%
\endgroup

%% file: curve.ecoli.l8d3.dedup.tex
\begingroup
  \makeatletter
  \providecommand\color[2][]{%
    \GenericError{(gnuplot) \space\space\space\@spaces}{%
      Package color not loaded in conjunction with
      terminal option `colourtext'%
    }{See the gnuplot documentation for explanation.%
    }{Either use 'blacktext' in gnuplot or load the package
      color.sty in LaTeX.}%
    \renewcommand\color[2][]{}%
  }%
  \providecommand\includegraphics[2][]{%
    \GenericError{(gnuplot) \space\space\space\@spaces}{%
      Package graphicx or graphics not loaded%
    }{See the gnuplot documentation for explanation.%
    }{The gnuplot epslatex terminal needs graphicx.sty or graphics.sty.}%
    \renewcommand\includegraphics[2][]{}%
  }%
  \providecommand\rotatebox[2]{#2}%
  \@ifundefined{ifGPcolor}{%
    \newif\ifGPcolor
    \GPcolortrue
  }{}%
  \@ifundefined{ifGPblacktext}{%
    \newif\ifGPblacktext
    \GPblacktexttrue
  }{}%
  \let\gplgaddtomacro\g@addto@macro
  \gdef\gplbacktext{}%
  \gdef\gplfronttext{}%
  \makeatother
  \ifGPblacktext
    \def\colorrgb#1{}%
    \def\colorgray#1{}%
  \else
    \ifGPcolor
      \def\colorrgb#1{\color[rgb]{#1}}%
      \def\colorgray#1{\color[gray]{#1}}%
      \expandafter\def\csname LTw\endcsname{\color{white}}%
      \expandafter\def\csname LTb\endcsname{\color{black}}%
      \expandafter\def\csname LTa\endcsname{\color{black}}%
      \expandafter\def\csname LT0\endcsname{\color[rgb]{1,0,0}}%
      \expandafter\def\csname LT1\endcsname{\color[rgb]{0,1,0}}%
      \expandafter\def\csname LT2\endcsname{\color[rgb]{0,0,1}}%
      \expandafter\def\csname LT3\endcsname{\color[rgb]{1,0,1}}%
      \expandafter\def\csname LT4\endcsname{\color[rgb]{0,1,1}}%
      \expandafter\def\csname LT5\endcsname{\color[rgb]{1,1,0}}%
      \expandafter\def\csname LT6\endcsname{\color[rgb]{0,0,0}}%
      \expandafter\def\csname LT7\endcsname{\color[rgb]{1,0.3,0}}%
      \expandafter\def\csname LT8\endcsname{\color[rgb]{0.5,0.5,0.5}}%
    \else
      \def\colorrgb#1{\color{black}}%
      \def\colorgray#1{\color[gray]{#1}}%
      \expandafter\def\csname LTw\endcsname{\color{white}}%
      \expandafter\def\csname LTb\endcsname{\color{black}}%
      \expandafter\def\csname LTa\endcsname{\color{black}}%
      \expandafter\def\csname LT0\endcsname{\color{black}}%
      \expandafter\def\csname LT1\endcsname{\color{black}}%
      \expandafter\def\csname LT2\endcsname{\color{black}}%
      \expandafter\def\csname LT3\endcsname{\color{black}}%
      \expandafter\def\csname LT4\endcsname{\color{black}}%
      \expandafter\def\csname LT5\endcsname{\color{black}}%
      \expandafter\def\csname LT6\endcsname{\color{black}}%
      \expandafter\def\csname LT7\endcsname{\color{black}}%
      \expandafter\def\csname LT8\endcsname{\color{black}}%
    \fi
  \fi
    \setlength{\unitlength}{0.0500bp}%
    \ifx\gptboxheight\undefined%
      \newlength{\gptboxheight}%
      \newlength{\gptboxwidth}%
      \newsavebox{\gptboxtext}%
    \fi%
    \setlength{\fboxrule}{0.5pt}%
    \setlength{\fboxsep}{1pt}%
    \definecolor{tbcol}{rgb}{1,1,1}%
\begin{picture}(1980.00,1700.00)%
    \gplgaddtomacro\gplbacktext{%
      \csname LTb\endcsname
      \put(566,627){\makebox(0,0)[r]{\strut{}$0$}}%
      \csname LTb\endcsname
      \put(566,1043){\makebox(0,0)[r]{\strut{}$0.5$}}%
      \csname LTb\endcsname
      \put(566,1458){\makebox(0,0)[r]{\strut{}$1$}}%
      \csname LTb\endcsname
      \put(649,431){\makebox(0,0){\strut{}$1$}}%
      \csname LTb\endcsname
      \put(1712,431){\makebox(0,0){\strut{}$694$}}%
    }%
    \gplgaddtomacro\gplfronttext{%
      \csname LTb\endcsname
      \put(159,1055){\rotatebox{-270.00}{\makebox(0,0){\strut{}\normalsize Norm. $f_{\chi^2}$}}}%
      \csname LTb\endcsname
      \put(1181,137){\makebox(0,0){\strut{}\normalsize Score rank}}%
    }%
    \gplbacktext
    \put(0,0){\includegraphics[width={99.00bp},height={85.00bp}]{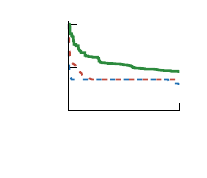}}%
    \gplfronttext
  \end{picture}%
\endgroup

%% file: curve.yeast.l8d3.dedup.tex
\begingroup
  \makeatletter
  \providecommand\color[2][]{%
    \GenericError{(gnuplot) \space\space\space\@spaces}{%
      Package color not loaded in conjunction with
      terminal option `colourtext'%
    }{See the gnuplot documentation for explanation.%
    }{Either use 'blacktext' in gnuplot or load the package
      color.sty in LaTeX.}%
    \renewcommand\color[2][]{}%
  }%
  \providecommand\includegraphics[2][]{%
    \GenericError{(gnuplot) \space\space\space\@spaces}{%
      Package graphicx or graphics not loaded%
    }{See the gnuplot documentation for explanation.%
    }{The gnuplot epslatex terminal needs graphicx.sty or graphics.sty.}%
    \renewcommand\includegraphics[2][]{}%
  }%
  \providecommand\rotatebox[2]{#2}%
  \@ifundefined{ifGPcolor}{%
    \newif\ifGPcolor
    \GPcolortrue
  }{}%
  \@ifundefined{ifGPblacktext}{%
    \newif\ifGPblacktext
    \GPblacktexttrue
  }{}%
  \let\gplgaddtomacro\g@addto@macro
  \gdef\gplbacktext{}%
  \gdef\gplfronttext{}%
  \makeatother
  \ifGPblacktext
    \def\colorrgb#1{}%
    \def\colorgray#1{}%
  \else
    \ifGPcolor
      \def\colorrgb#1{\color[rgb]{#1}}%
      \def\colorgray#1{\color[gray]{#1}}%
      \expandafter\def\csname LTw\endcsname{\color{white}}%
      \expandafter\def\csname LTb\endcsname{\color{black}}%
      \expandafter\def\csname LTa\endcsname{\color{black}}%
      \expandafter\def\csname LT0\endcsname{\color[rgb]{1,0,0}}%
      \expandafter\def\csname LT1\endcsname{\color[rgb]{0,1,0}}%
      \expandafter\def\csname LT2\endcsname{\color[rgb]{0,0,1}}%
      \expandafter\def\csname LT3\endcsname{\color[rgb]{1,0,1}}%
      \expandafter\def\csname LT4\endcsname{\color[rgb]{0,1,1}}%
      \expandafter\def\csname LT5\endcsname{\color[rgb]{1,1,0}}%
      \expandafter\def\csname LT6\endcsname{\color[rgb]{0,0,0}}%
      \expandafter\def\csname LT7\endcsname{\color[rgb]{1,0.3,0}}%
      \expandafter\def\csname LT8\endcsname{\color[rgb]{0.5,0.5,0.5}}%
    \else
      \def\colorrgb#1{\color{black}}%
      \def\colorgray#1{\color[gray]{#1}}%
      \expandafter\def\csname LTw\endcsname{\color{white}}%
      \expandafter\def\csname LTb\endcsname{\color{black}}%
      \expandafter\def\csname LTa\endcsname{\color{black}}%
      \expandafter\def\csname LT0\endcsname{\color{black}}%
      \expandafter\def\csname LT1\endcsname{\color{black}}%
      \expandafter\def\csname LT2\endcsname{\color{black}}%
      \expandafter\def\csname LT3\endcsname{\color{black}}%
      \expandafter\def\csname LT4\endcsname{\color{black}}%
      \expandafter\def\csname LT5\endcsname{\color{black}}%
      \expandafter\def\csname LT6\endcsname{\color{black}}%
      \expandafter\def\csname LT7\endcsname{\color{black}}%
      \expandafter\def\csname LT8\endcsname{\color{black}}%
    \fi
  \fi
    \setlength{\unitlength}{0.0500bp}%
    \ifx\gptboxheight\undefined%
      \newlength{\gptboxheight}%
      \newlength{\gptboxwidth}%
      \newsavebox{\gptboxtext}%
    \fi%
    \setlength{\fboxrule}{0.5pt}%
    \setlength{\fboxsep}{1pt}%
    \definecolor{tbcol}{rgb}{1,1,1}%
\begin{picture}(1980.00,1700.00)%
    \gplgaddtomacro\gplbacktext{%
      \csname LTb\endcsname
      \put(566,627){\makebox(0,0)[r]{\strut{}$0$}}%
      \csname LTb\endcsname
      \put(566,1043){\makebox(0,0)[r]{\strut{}$0.5$}}%
      \csname LTb\endcsname
      \put(566,1458){\makebox(0,0)[r]{\strut{}$1$}}%
      \csname LTb\endcsname
      \put(649,431){\makebox(0,0){\strut{}$1$}}%
      \csname LTb\endcsname
      \put(1712,431){\makebox(0,0){\strut{}$231$}}%
    }%
    \gplgaddtomacro\gplfronttext{%
      \csname LTb\endcsname
      \put(159,1055){\rotatebox{-270.00}{\makebox(0,0){\strut{}\normalsize Norm. $f_{\chi^2}$}}}%
      \csname LTb\endcsname
      \put(1181,137){\makebox(0,0){\strut{}\normalsize Score rank}}%
    }%
    \gplbacktext
    \put(0,0){\includegraphics[width={99.00bp},height={85.00bp}]{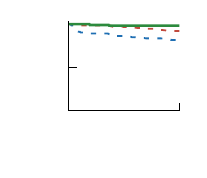}}%
    \gplfronttext
  \end{picture}%
\endgroup

%% file: curve.fly.l8d3.dedup.tex
\begingroup
  \makeatletter
  \providecommand\color[2][]{%
    \GenericError{(gnuplot) \space\space\space\@spaces}{%
      Package color not loaded in conjunction with
      terminal option `colourtext'%
    }{See the gnuplot documentation for explanation.%
    }{Either use 'blacktext' in gnuplot or load the package
      color.sty in LaTeX.}%
    \renewcommand\color[2][]{}%
  }%
  \providecommand\includegraphics[2][]{%
    \GenericError{(gnuplot) \space\space\space\@spaces}{%
      Package graphicx or graphics not loaded%
    }{See the gnuplot documentation for explanation.%
    }{The gnuplot epslatex terminal needs graphicx.sty or graphics.sty.}%
    \renewcommand\includegraphics[2][]{}%
  }%
  \providecommand\rotatebox[2]{#2}%
  \@ifundefined{ifGPcolor}{%
    \newif\ifGPcolor
    \GPcolortrue
  }{}%
  \@ifundefined{ifGPblacktext}{%
    \newif\ifGPblacktext
    \GPblacktexttrue
  }{}%
  \let\gplgaddtomacro\g@addto@macro
  \gdef\gplbacktext{}%
  \gdef\gplfronttext{}%
  \makeatother
  \ifGPblacktext
    \def\colorrgb#1{}%
    \def\colorgray#1{}%
  \else
    \ifGPcolor
      \def\colorrgb#1{\color[rgb]{#1}}%
      \def\colorgray#1{\color[gray]{#1}}%
      \expandafter\def\csname LTw\endcsname{\color{white}}%
      \expandafter\def\csname LTb\endcsname{\color{black}}%
      \expandafter\def\csname LTa\endcsname{\color{black}}%
      \expandafter\def\csname LT0\endcsname{\color[rgb]{1,0,0}}%
      \expandafter\def\csname LT1\endcsname{\color[rgb]{0,1,0}}%
      \expandafter\def\csname LT2\endcsname{\color[rgb]{0,0,1}}%
      \expandafter\def\csname LT3\endcsname{\color[rgb]{1,0,1}}%
      \expandafter\def\csname LT4\endcsname{\color[rgb]{0,1,1}}%
      \expandafter\def\csname LT5\endcsname{\color[rgb]{1,1,0}}%
      \expandafter\def\csname LT6\endcsname{\color[rgb]{0,0,0}}%
      \expandafter\def\csname LT7\endcsname{\color[rgb]{1,0.3,0}}%
      \expandafter\def\csname LT8\endcsname{\color[rgb]{0.5,0.5,0.5}}%
    \else
      \def\colorrgb#1{\color{black}}%
      \def\colorgray#1{\color[gray]{#1}}%
      \expandafter\def\csname LTw\endcsname{\color{white}}%
      \expandafter\def\csname LTb\endcsname{\color{black}}%
      \expandafter\def\csname LTa\endcsname{\color{black}}%
      \expandafter\def\csname LT0\endcsname{\color{black}}%
      \expandafter\def\csname LT1\endcsname{\color{black}}%
      \expandafter\def\csname LT2\endcsname{\color{black}}%
      \expandafter\def\csname LT3\endcsname{\color{black}}%
      \expandafter\def\csname LT4\endcsname{\color{black}}%
      \expandafter\def\csname LT5\endcsname{\color{black}}%
      \expandafter\def\csname LT6\endcsname{\color{black}}%
      \expandafter\def\csname LT7\endcsname{\color{black}}%
      \expandafter\def\csname LT8\endcsname{\color{black}}%
    \fi
  \fi
    \setlength{\unitlength}{0.0500bp}%
    \ifx\gptboxheight\undefined%
      \newlength{\gptboxheight}%
      \newlength{\gptboxwidth}%
      \newsavebox{\gptboxtext}%
    \fi%
    \setlength{\fboxrule}{0.5pt}%
    \setlength{\fboxsep}{1pt}%
    \definecolor{tbcol}{rgb}{1,1,1}%
\begin{picture}(1980.00,1700.00)%
    \gplgaddtomacro\gplbacktext{%
      \csname LTb\endcsname
      \put(566,627){\makebox(0,0)[r]{\strut{}$0$}}%
      \csname LTb\endcsname
      \put(566,1043){\makebox(0,0)[r]{\strut{}$0.5$}}%
      \csname LTb\endcsname
      \put(566,1458){\makebox(0,0)[r]{\strut{}$1$}}%
      \csname LTb\endcsname
      \put(649,431){\makebox(0,0){\strut{}$1$}}%
      \csname LTb\endcsname
      \put(1712,431){\makebox(0,0){\strut{}$893$}}%
    }%
    \gplgaddtomacro\gplfronttext{%
      \csname LTb\endcsname
      \put(159,1055){\rotatebox{-270.00}{\makebox(0,0){\strut{}\normalsize Norm. $f_{\chi^2}$}}}%
      \csname LTb\endcsname
      \put(1181,137){\makebox(0,0){\strut{}\normalsize Score rank}}%
    }%
    \gplbacktext
    \put(0,0){\includegraphics[width={99.00bp},height={85.00bp}]{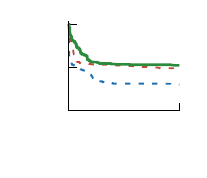}}%
    \gplfronttext
  \end{picture}%
\endgroup

%% file: curve.plant.l8d3.dedup.tex
\begingroup
  \makeatletter
  \providecommand\color[2][]{%
    \GenericError{(gnuplot) \space\space\space\@spaces}{%
      Package color not loaded in conjunction with
      terminal option `colourtext'%
    }{See the gnuplot documentation for explanation.%
    }{Either use 'blacktext' in gnuplot or load the package
      color.sty in LaTeX.}%
    \renewcommand\color[2][]{}%
  }%
  \providecommand\includegraphics[2][]{%
    \GenericError{(gnuplot) \space\space\space\@spaces}{%
      Package graphicx or graphics not loaded%
    }{See the gnuplot documentation for explanation.%
    }{The gnuplot epslatex terminal needs graphicx.sty or graphics.sty.}%
    \renewcommand\includegraphics[2][]{}%
  }%
  \providecommand\rotatebox[2]{#2}%
  \@ifundefined{ifGPcolor}{%
    \newif\ifGPcolor
    \GPcolortrue
  }{}%
  \@ifundefined{ifGPblacktext}{%
    \newif\ifGPblacktext
    \GPblacktexttrue
  }{}%
  \let\gplgaddtomacro\g@addto@macro
  \gdef\gplbacktext{}%
  \gdef\gplfronttext{}%
  \makeatother
  \ifGPblacktext
    \def\colorrgb#1{}%
    \def\colorgray#1{}%
  \else
    \ifGPcolor
      \def\colorrgb#1{\color[rgb]{#1}}%
      \def\colorgray#1{\color[gray]{#1}}%
      \expandafter\def\csname LTw\endcsname{\color{white}}%
      \expandafter\def\csname LTb\endcsname{\color{black}}%
      \expandafter\def\csname LTa\endcsname{\color{black}}%
      \expandafter\def\csname LT0\endcsname{\color[rgb]{1,0,0}}%
      \expandafter\def\csname LT1\endcsname{\color[rgb]{0,1,0}}%
      \expandafter\def\csname LT2\endcsname{\color[rgb]{0,0,1}}%
      \expandafter\def\csname LT3\endcsname{\color[rgb]{1,0,1}}%
      \expandafter\def\csname LT4\endcsname{\color[rgb]{0,1,1}}%
      \expandafter\def\csname LT5\endcsname{\color[rgb]{1,1,0}}%
      \expandafter\def\csname LT6\endcsname{\color[rgb]{0,0,0}}%
      \expandafter\def\csname LT7\endcsname{\color[rgb]{1,0.3,0}}%
      \expandafter\def\csname LT8\endcsname{\color[rgb]{0.5,0.5,0.5}}%
    \else
      \def\colorrgb#1{\color{black}}%
      \def\colorgray#1{\color[gray]{#1}}%
      \expandafter\def\csname LTw\endcsname{\color{white}}%
      \expandafter\def\csname LTb\endcsname{\color{black}}%
      \expandafter\def\csname LTa\endcsname{\color{black}}%
      \expandafter\def\csname LT0\endcsname{\color{black}}%
      \expandafter\def\csname LT1\endcsname{\color{black}}%
      \expandafter\def\csname LT2\endcsname{\color{black}}%
      \expandafter\def\csname LT3\endcsname{\color{black}}%
      \expandafter\def\csname LT4\endcsname{\color{black}}%
      \expandafter\def\csname LT5\endcsname{\color{black}}%
      \expandafter\def\csname LT6\endcsname{\color{black}}%
      \expandafter\def\csname LT7\endcsname{\color{black}}%
      \expandafter\def\csname LT8\endcsname{\color{black}}%
    \fi
  \fi
    \setlength{\unitlength}{0.0500bp}%
    \ifx\gptboxheight\undefined%
      \newlength{\gptboxheight}%
      \newlength{\gptboxwidth}%
      \newsavebox{\gptboxtext}%
    \fi%
    \setlength{\fboxrule}{0.5pt}%
    \setlength{\fboxsep}{1pt}%
    \definecolor{tbcol}{rgb}{1,1,1}%
\begin{picture}(1980.00,1700.00)%
    \gplgaddtomacro\gplbacktext{%
      \csname LTb\endcsname
      \put(566,627){\makebox(0,0)[r]{\strut{}$0$}}%
      \csname LTb\endcsname
      \put(566,1043){\makebox(0,0)[r]{\strut{}$0.5$}}%
      \csname LTb\endcsname
      \put(566,1458){\makebox(0,0)[r]{\strut{}$1$}}%
      \csname LTb\endcsname
      \put(649,431){\makebox(0,0){\strut{}$1$}}%
      \csname LTb\endcsname
      \put(1712,431){\makebox(0,0){\strut{}$2384$}}%
    }%
    \gplgaddtomacro\gplfronttext{%
      \csname LTb\endcsname
      \put(159,1055){\rotatebox{-270.00}{\makebox(0,0){\strut{}\normalsize Norm. $f_{\chi^2}$}}}%
      \csname LTb\endcsname
      \put(1181,137){\makebox(0,0){\strut{}\normalsize Score rank}}%
    }%
    \gplbacktext
    \put(0,0){\includegraphics[width={99.00bp},height={85.00bp}]{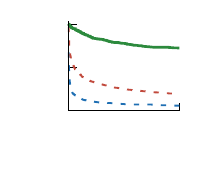}}%
    \gplfronttext
  \end{picture}%
\endgroup

%% file: curve.human.l8d3.dedup.tex
\begingroup
  \makeatletter
  \providecommand\color[2][]{%
    \GenericError{(gnuplot) \space\space\space\@spaces}{%
      Package color not loaded in conjunction with
      terminal option `colourtext'%
    }{See the gnuplot documentation for explanation.%
    }{Either use 'blacktext' in gnuplot or load the package
      color.sty in LaTeX.}%
    \renewcommand\color[2][]{}%
  }%
  \providecommand\includegraphics[2][]{%
    \GenericError{(gnuplot) \space\space\space\@spaces}{%
      Package graphicx or graphics not loaded%
    }{See the gnuplot documentation for explanation.%
    }{The gnuplot epslatex terminal needs graphicx.sty or graphics.sty.}%
    \renewcommand\includegraphics[2][]{}%
  }%
  \providecommand\rotatebox[2]{#2}%
  \@ifundefined{ifGPcolor}{%
    \newif\ifGPcolor
    \GPcolortrue
  }{}%
  \@ifundefined{ifGPblacktext}{%
    \newif\ifGPblacktext
    \GPblacktexttrue
  }{}%
  \let\gplgaddtomacro\g@addto@macro
  \gdef\gplbacktext{}%
  \gdef\gplfronttext{}%
  \makeatother
  \ifGPblacktext
    \def\colorrgb#1{}%
    \def\colorgray#1{}%
  \else
    \ifGPcolor
      \def\colorrgb#1{\color[rgb]{#1}}%
      \def\colorgray#1{\color[gray]{#1}}%
      \expandafter\def\csname LTw\endcsname{\color{white}}%
      \expandafter\def\csname LTb\endcsname{\color{black}}%
      \expandafter\def\csname LTa\endcsname{\color{black}}%
      \expandafter\def\csname LT0\endcsname{\color[rgb]{1,0,0}}%
      \expandafter\def\csname LT1\endcsname{\color[rgb]{0,1,0}}%
      \expandafter\def\csname LT2\endcsname{\color[rgb]{0,0,1}}%
      \expandafter\def\csname LT3\endcsname{\color[rgb]{1,0,1}}%
      \expandafter\def\csname LT4\endcsname{\color[rgb]{0,1,1}}%
      \expandafter\def\csname LT5\endcsname{\color[rgb]{1,1,0}}%
      \expandafter\def\csname LT6\endcsname{\color[rgb]{0,0,0}}%
      \expandafter\def\csname LT7\endcsname{\color[rgb]{1,0.3,0}}%
      \expandafter\def\csname LT8\endcsname{\color[rgb]{0.5,0.5,0.5}}%
    \else
      \def\colorrgb#1{\color{black}}%
      \def\colorgray#1{\color[gray]{#1}}%
      \expandafter\def\csname LTw\endcsname{\color{white}}%
      \expandafter\def\csname LTb\endcsname{\color{black}}%
      \expandafter\def\csname LTa\endcsname{\color{black}}%
      \expandafter\def\csname LT0\endcsname{\color{black}}%
      \expandafter\def\csname LT1\endcsname{\color{black}}%
      \expandafter\def\csname LT2\endcsname{\color{black}}%
      \expandafter\def\csname LT3\endcsname{\color{black}}%
      \expandafter\def\csname LT4\endcsname{\color{black}}%
      \expandafter\def\csname LT5\endcsname{\color{black}}%
      \expandafter\def\csname LT6\endcsname{\color{black}}%
      \expandafter\def\csname LT7\endcsname{\color{black}}%
      \expandafter\def\csname LT8\endcsname{\color{black}}%
    \fi
  \fi
    \setlength{\unitlength}{0.0500bp}%
    \ifx\gptboxheight\undefined%
      \newlength{\gptboxheight}%
      \newlength{\gptboxwidth}%
      \newsavebox{\gptboxtext}%
    \fi%
    \setlength{\fboxrule}{0.5pt}%
    \setlength{\fboxsep}{1pt}%
    \definecolor{tbcol}{rgb}{1,1,1}%
\begin{picture}(1980.00,1700.00)%
    \gplgaddtomacro\gplbacktext{%
      \csname LTb\endcsname
      \put(566,627){\makebox(0,0)[r]{\strut{}$0$}}%
      \csname LTb\endcsname
      \put(566,1043){\makebox(0,0)[r]{\strut{}$0.5$}}%
      \csname LTb\endcsname
      \put(566,1458){\makebox(0,0)[r]{\strut{}$1$}}%
      \csname LTb\endcsname
      \put(649,431){\makebox(0,0){\strut{}$1$}}%
      \csname LTb\endcsname
      \put(1712,431){\makebox(0,0){\strut{}$1836$}}%
    }%
    \gplgaddtomacro\gplfronttext{%
      \csname LTb\endcsname
      \put(159,1055){\rotatebox{-270.00}{\makebox(0,0){\strut{}\normalsize Norm. $f_{\chi^2}$}}}%
      \csname LTb\endcsname
      \put(1181,137){\makebox(0,0){\strut{}\normalsize Score rank}}%
    }%
    \gplbacktext
    \put(0,0){\includegraphics[width={99.00bp},height={85.00bp}]{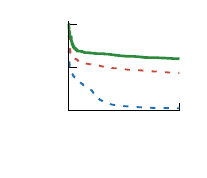}}%
    \gplfronttext
  \end{picture}%
\endgroup

%% file: curve.yeast_aj.l8d3.dedup.tex
\begingroup
  \makeatletter
  \providecommand\color[2][]{%
    \GenericError{(gnuplot) \space\space\space\@spaces}{%
      Package color not loaded in conjunction with
      terminal option `colourtext'%
    }{See the gnuplot documentation for explanation.%
    }{Either use 'blacktext' in gnuplot or load the package
      color.sty in LaTeX.}%
    \renewcommand\color[2][]{}%
  }%
  \providecommand\includegraphics[2][]{%
    \GenericError{(gnuplot) \space\space\space\@spaces}{%
      Package graphicx or graphics not loaded%
    }{See the gnuplot documentation for explanation.%
    }{The gnuplot epslatex terminal needs graphicx.sty or graphics.sty.}%
    \renewcommand\includegraphics[2][]{}%
  }%
  \providecommand\rotatebox[2]{#2}%
  \@ifundefined{ifGPcolor}{%
    \newif\ifGPcolor
    \GPcolortrue
  }{}%
  \@ifundefined{ifGPblacktext}{%
    \newif\ifGPblacktext
    \GPblacktexttrue
  }{}%
  \let\gplgaddtomacro\g@addto@macro
  \gdef\gplbacktext{}%
  \gdef\gplfronttext{}%
  \makeatother
  \ifGPblacktext
    \def\colorrgb#1{}%
    \def\colorgray#1{}%
  \else
    \ifGPcolor
      \def\colorrgb#1{\color[rgb]{#1}}%
      \def\colorgray#1{\color[gray]{#1}}%
      \expandafter\def\csname LTw\endcsname{\color{white}}%
      \expandafter\def\csname LTb\endcsname{\color{black}}%
      \expandafter\def\csname LTa\endcsname{\color{black}}%
      \expandafter\def\csname LT0\endcsname{\color[rgb]{1,0,0}}%
      \expandafter\def\csname LT1\endcsname{\color[rgb]{0,1,0}}%
      \expandafter\def\csname LT2\endcsname{\color[rgb]{0,0,1}}%
      \expandafter\def\csname LT3\endcsname{\color[rgb]{1,0,1}}%
      \expandafter\def\csname LT4\endcsname{\color[rgb]{0,1,1}}%
      \expandafter\def\csname LT5\endcsname{\color[rgb]{1,1,0}}%
      \expandafter\def\csname LT6\endcsname{\color[rgb]{0,0,0}}%
      \expandafter\def\csname LT7\endcsname{\color[rgb]{1,0.3,0}}%
      \expandafter\def\csname LT8\endcsname{\color[rgb]{0.5,0.5,0.5}}%
    \else
      \def\colorrgb#1{\color{black}}%
      \def\colorgray#1{\color[gray]{#1}}%
      \expandafter\def\csname LTw\endcsname{\color{white}}%
      \expandafter\def\csname LTb\endcsname{\color{black}}%
      \expandafter\def\csname LTa\endcsname{\color{black}}%
      \expandafter\def\csname LT0\endcsname{\color{black}}%
      \expandafter\def\csname LT1\endcsname{\color{black}}%
      \expandafter\def\csname LT2\endcsname{\color{black}}%
      \expandafter\def\csname LT3\endcsname{\color{black}}%
      \expandafter\def\csname LT4\endcsname{\color{black}}%
      \expandafter\def\csname LT5\endcsname{\color{black}}%
      \expandafter\def\csname LT6\endcsname{\color{black}}%
      \expandafter\def\csname LT7\endcsname{\color{black}}%
      \expandafter\def\csname LT8\endcsname{\color{black}}%
    \fi
  \fi
    \setlength{\unitlength}{0.0500bp}%
    \ifx\gptboxheight\undefined%
      \newlength{\gptboxheight}%
      \newlength{\gptboxwidth}%
      \newsavebox{\gptboxtext}%
    \fi%
    \setlength{\fboxrule}{0.5pt}%
    \setlength{\fboxsep}{1pt}%
    \definecolor{tbcol}{rgb}{1,1,1}%
\begin{picture}(1980.00,1700.00)%
    \gplgaddtomacro\gplbacktext{%
      \csname LTb\endcsname
      \put(566,627){\makebox(0,0)[r]{\strut{}$0$}}%
      \csname LTb\endcsname
      \put(566,1043){\makebox(0,0)[r]{\strut{}$0.5$}}%
      \csname LTb\endcsname
      \put(566,1458){\makebox(0,0)[r]{\strut{}$1$}}%
      \csname LTb\endcsname
      \put(649,431){\makebox(0,0){\strut{}$1$}}%
      \csname LTb\endcsname
      \put(1712,431){\makebox(0,0){\strut{}$5792$}}%
    }%
    \gplgaddtomacro\gplfronttext{%
      \csname LTb\endcsname
      \put(159,1055){\rotatebox{-270.00}{\makebox(0,0){\strut{}\normalsize Norm. $f_{\chi^2}$}}}%
      \csname LTb\endcsname
      \put(1181,137){\makebox(0,0){\strut{}\normalsize Score rank}}%
    }%
    \gplbacktext
    \put(0,0){\includegraphics[width={99.00bp},height={85.00bp}]{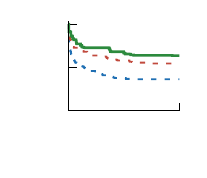}}%
    \gplfronttext
  \end{picture}%
\endgroup

%% file: curve.human_aj.l8d3.dedup.tex
\begingroup
  \makeatletter
  \providecommand\color[2][]{%
    \GenericError{(gnuplot) \space\space\space\@spaces}{%
      Package color not loaded in conjunction with
      terminal option `colourtext'%
    }{See the gnuplot documentation for explanation.%
    }{Either use 'blacktext' in gnuplot or load the package
      color.sty in LaTeX.}%
    \renewcommand\color[2][]{}%
  }%
  \providecommand\includegraphics[2][]{%
    \GenericError{(gnuplot) \space\space\space\@spaces}{%
      Package graphicx or graphics not loaded%
    }{See the gnuplot documentation for explanation.%
    }{The gnuplot epslatex terminal needs graphicx.sty or graphics.sty.}%
    \renewcommand\includegraphics[2][]{}%
  }%
  \providecommand\rotatebox[2]{#2}%
  \@ifundefined{ifGPcolor}{%
    \newif\ifGPcolor
    \GPcolortrue
  }{}%
  \@ifundefined{ifGPblacktext}{%
    \newif\ifGPblacktext
    \GPblacktexttrue
  }{}%
  \let\gplgaddtomacro\g@addto@macro
  \gdef\gplbacktext{}%
  \gdef\gplfronttext{}%
  \makeatother
  \ifGPblacktext
    \def\colorrgb#1{}%
    \def\colorgray#1{}%
  \else
    \ifGPcolor
      \def\colorrgb#1{\color[rgb]{#1}}%
      \def\colorgray#1{\color[gray]{#1}}%
      \expandafter\def\csname LTw\endcsname{\color{white}}%
      \expandafter\def\csname LTb\endcsname{\color{black}}%
      \expandafter\def\csname LTa\endcsname{\color{black}}%
      \expandafter\def\csname LT0\endcsname{\color[rgb]{1,0,0}}%
      \expandafter\def\csname LT1\endcsname{\color[rgb]{0,1,0}}%
      \expandafter\def\csname LT2\endcsname{\color[rgb]{0,0,1}}%
      \expandafter\def\csname LT3\endcsname{\color[rgb]{1,0,1}}%
      \expandafter\def\csname LT4\endcsname{\color[rgb]{0,1,1}}%
      \expandafter\def\csname LT5\endcsname{\color[rgb]{1,1,0}}%
      \expandafter\def\csname LT6\endcsname{\color[rgb]{0,0,0}}%
      \expandafter\def\csname LT7\endcsname{\color[rgb]{1,0.3,0}}%
      \expandafter\def\csname LT8\endcsname{\color[rgb]{0.5,0.5,0.5}}%
    \else
      \def\colorrgb#1{\color{black}}%
      \def\colorgray#1{\color[gray]{#1}}%
      \expandafter\def\csname LTw\endcsname{\color{white}}%
      \expandafter\def\csname LTb\endcsname{\color{black}}%
      \expandafter\def\csname LTa\endcsname{\color{black}}%
      \expandafter\def\csname LT0\endcsname{\color{black}}%
      \expandafter\def\csname LT1\endcsname{\color{black}}%
      \expandafter\def\csname LT2\endcsname{\color{black}}%
      \expandafter\def\csname LT3\endcsname{\color{black}}%
      \expandafter\def\csname LT4\endcsname{\color{black}}%
      \expandafter\def\csname LT5\endcsname{\color{black}}%
      \expandafter\def\csname LT6\endcsname{\color{black}}%
      \expandafter\def\csname LT7\endcsname{\color{black}}%
      \expandafter\def\csname LT8\endcsname{\color{black}}%
    \fi
  \fi
    \setlength{\unitlength}{0.0500bp}%
    \ifx\gptboxheight\undefined%
      \newlength{\gptboxheight}%
      \newlength{\gptboxwidth}%
      \newsavebox{\gptboxtext}%
    \fi%
    \setlength{\fboxrule}{0.5pt}%
    \setlength{\fboxsep}{1pt}%
    \definecolor{tbcol}{rgb}{1,1,1}%
\begin{picture}(1980.00,1700.00)%
    \gplgaddtomacro\gplbacktext{%
      \csname LTb\endcsname
      \put(566,627){\makebox(0,0)[r]{\strut{}$0$}}%
      \csname LTb\endcsname
      \put(566,1043){\makebox(0,0)[r]{\strut{}$0.5$}}%
      \csname LTb\endcsname
      \put(566,1458){\makebox(0,0)[r]{\strut{}$1$}}%
      \csname LTb\endcsname
      \put(649,431){\makebox(0,0){\strut{}$1$}}%
      \csname LTb\endcsname
      \put(1712,431){\makebox(0,0){\strut{}$4558$}}%
    }%
    \gplgaddtomacro\gplfronttext{%
      \csname LTb\endcsname
      \put(159,1055){\rotatebox{-270.00}{\makebox(0,0){\strut{}\normalsize Norm. $f_{\chi^2}$}}}%
      \csname LTb\endcsname
      \put(1181,137){\makebox(0,0){\strut{}\normalsize Score rank}}%
    }%
    \gplbacktext
    \put(0,0){\includegraphics[width={99.00bp},height={85.00bp}]{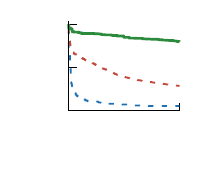}}%
    \gplfronttext
  \end{picture}%
\endgroup

%% file: density_legend.tex
\begingroup
  \makeatletter
  \providecommand\color[2][]{%
    \GenericError{(gnuplot) \space\space\space\@spaces}{%
      Package color not loaded in conjunction with
      terminal option `colourtext'%
    }{See the gnuplot documentation for explanation.%
    }{Either use 'blacktext' in gnuplot or load the package
      color.sty in LaTeX.}%
    \renewcommand\color[2][]{}%
  }%
  \providecommand\includegraphics[2][]{%
    \GenericError{(gnuplot) \space\space\space\@spaces}{%
      Package graphicx or graphics not loaded%
    }{See the gnuplot documentation for explanation.%
    }{The gnuplot epslatex terminal needs graphicx.sty or graphics.sty.}%
    \renewcommand\includegraphics[2][]{}%
  }%
  \providecommand\rotatebox[2]{#2}%
  \@ifundefined{ifGPcolor}{%
    \newif\ifGPcolor
    \GPcolortrue
  }{}%
  \@ifundefined{ifGPblacktext}{%
    \newif\ifGPblacktext
    \GPblacktexttrue
  }{}%
  \let\gplgaddtomacro\g@addto@macro
  \gdef\gplbacktext{}%
  \gdef\gplfronttext{}%
  \makeatother
  \ifGPblacktext
    \def\colorrgb#1{}%
    \def\colorgray#1{}%
  \else
    \ifGPcolor
      \def\colorrgb#1{\color[rgb]{#1}}%
      \def\colorgray#1{\color[gray]{#1}}%
      \expandafter\def\csname LTw\endcsname{\color{white}}%
      \expandafter\def\csname LTb\endcsname{\color{black}}%
      \expandafter\def\csname LTa\endcsname{\color{black}}%
      \expandafter\def\csname LT0\endcsname{\color[rgb]{1,0,0}}%
      \expandafter\def\csname LT1\endcsname{\color[rgb]{0,1,0}}%
      \expandafter\def\csname LT2\endcsname{\color[rgb]{0,0,1}}%
      \expandafter\def\csname LT3\endcsname{\color[rgb]{1,0,1}}%
      \expandafter\def\csname LT4\endcsname{\color[rgb]{0,1,1}}%
      \expandafter\def\csname LT5\endcsname{\color[rgb]{1,1,0}}%
      \expandafter\def\csname LT6\endcsname{\color[rgb]{0,0,0}}%
      \expandafter\def\csname LT7\endcsname{\color[rgb]{1,0.3,0}}%
      \expandafter\def\csname LT8\endcsname{\color[rgb]{0.5,0.5,0.5}}%
    \else
      \def\colorrgb#1{\color{black}}%
      \def\colorgray#1{\color[gray]{#1}}%
      \expandafter\def\csname LTw\endcsname{\color{white}}%
      \expandafter\def\csname LTb\endcsname{\color{black}}%
      \expandafter\def\csname LTa\endcsname{\color{black}}%
      \expandafter\def\csname LT0\endcsname{\color{black}}%
      \expandafter\def\csname LT1\endcsname{\color{black}}%
      \expandafter\def\csname LT2\endcsname{\color{black}}%
      \expandafter\def\csname LT3\endcsname{\color{black}}%
      \expandafter\def\csname LT4\endcsname{\color{black}}%
      \expandafter\def\csname LT5\endcsname{\color{black}}%
      \expandafter\def\csname LT6\endcsname{\color{black}}%
      \expandafter\def\csname LT7\endcsname{\color{black}}%
      \expandafter\def\csname LT8\endcsname{\color{black}}%
    \fi
  \fi
    \setlength{\unitlength}{0.0500bp}%
    \ifx\gptboxheight\undefined%
      \newlength{\gptboxheight}%
      \newlength{\gptboxwidth}%
      \newsavebox{\gptboxtext}%
    \fi%
    \setlength{\fboxrule}{0.5pt}%
    \setlength{\fboxsep}{1pt}%
    \definecolor{tbcol}{rgb}{1,1,1}%
\begin{picture}(3960.00,560.00)%
    \gplgaddtomacro\gplbacktext{%
    }%
    \gplgaddtomacro\gplfronttext{%
      \csname LTb\endcsname
      \put(1541,254){\makebox(0,0)[r]{\strut{}M-SLIDER}}%
      \csname LTb\endcsname
      \put(3267,254){\makebox(0,0)[r]{\strut{}SEQ-SLIDER}}%
    }%
    \gplbacktext
    \put(0,0){\includegraphics[width={198.00bp},height={28.00bp}]{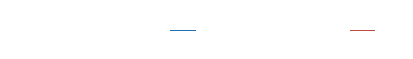}}%
    \gplfronttext
  \end{picture}%
\endgroup

%% file: density.ecoli.l8d5.h8.dedup.tex
\begingroup
  \makeatletter
  \providecommand\color[2][]{%
    \GenericError{(gnuplot) \space\space\space\@spaces}{%
      Package color not loaded in conjunction with
      terminal option `colourtext'%
    }{See the gnuplot documentation for explanation.%
    }{Either use 'blacktext' in gnuplot or load the package
      color.sty in LaTeX.}%
    \renewcommand\color[2][]{}%
  }%
  \providecommand\includegraphics[2][]{%
    \GenericError{(gnuplot) \space\space\space\@spaces}{%
      Package graphicx or graphics not loaded%
    }{See the gnuplot documentation for explanation.%
    }{The gnuplot epslatex terminal needs graphicx.sty or graphics.sty.}%
    \renewcommand\includegraphics[2][]{}%
  }%
  \providecommand\rotatebox[2]{#2}%
  \@ifundefined{ifGPcolor}{%
    \newif\ifGPcolor
    \GPcolortrue
  }{}%
  \@ifundefined{ifGPblacktext}{%
    \newif\ifGPblacktext
    \GPblacktexttrue
  }{}%
  \let\gplgaddtomacro\g@addto@macro
  \gdef\gplbacktext{}%
  \gdef\gplfronttext{}%
  \makeatother
  \ifGPblacktext
    \def\colorrgb#1{}%
    \def\colorgray#1{}%
  \else
    \ifGPcolor
      \def\colorrgb#1{\color[rgb]{#1}}%
      \def\colorgray#1{\color[gray]{#1}}%
      \expandafter\def\csname LTw\endcsname{\color{white}}%
      \expandafter\def\csname LTb\endcsname{\color{black}}%
      \expandafter\def\csname LTa\endcsname{\color{black}}%
      \expandafter\def\csname LT0\endcsname{\color[rgb]{1,0,0}}%
      \expandafter\def\csname LT1\endcsname{\color[rgb]{0,1,0}}%
      \expandafter\def\csname LT2\endcsname{\color[rgb]{0,0,1}}%
      \expandafter\def\csname LT3\endcsname{\color[rgb]{1,0,1}}%
      \expandafter\def\csname LT4\endcsname{\color[rgb]{0,1,1}}%
      \expandafter\def\csname LT5\endcsname{\color[rgb]{1,1,0}}%
      \expandafter\def\csname LT6\endcsname{\color[rgb]{0,0,0}}%
      \expandafter\def\csname LT7\endcsname{\color[rgb]{1,0.3,0}}%
      \expandafter\def\csname LT8\endcsname{\color[rgb]{0.5,0.5,0.5}}%
    \else
      \def\colorrgb#1{\color{black}}%
      \def\colorgray#1{\color[gray]{#1}}%
      \expandafter\def\csname LTw\endcsname{\color{white}}%
      \expandafter\def\csname LTb\endcsname{\color{black}}%
      \expandafter\def\csname LTa\endcsname{\color{black}}%
      \expandafter\def\csname LT0\endcsname{\color{black}}%
      \expandafter\def\csname LT1\endcsname{\color{black}}%
      \expandafter\def\csname LT2\endcsname{\color{black}}%
      \expandafter\def\csname LT3\endcsname{\color{black}}%
      \expandafter\def\csname LT4\endcsname{\color{black}}%
      \expandafter\def\csname LT5\endcsname{\color{black}}%
      \expandafter\def\csname LT6\endcsname{\color{black}}%
      \expandafter\def\csname LT7\endcsname{\color{black}}%
      \expandafter\def\csname LT8\endcsname{\color{black}}%
    \fi
  \fi
    \setlength{\unitlength}{0.0500bp}%
    \ifx\gptboxheight\undefined%
      \newlength{\gptboxheight}%
      \newlength{\gptboxwidth}%
      \newsavebox{\gptboxtext}%
    \fi%
    \setlength{\fboxrule}{0.5pt}%
    \setlength{\fboxsep}{1pt}%
    \definecolor{tbcol}{rgb}{1,1,1}%
\begin{picture}(1980.00,1700.00)%
    \gplgaddtomacro\gplbacktext{%
      \csname LTb\endcsname
      \put(566,627){\makebox(0,0)[r]{\strut{}0}}%
      \csname LTb\endcsname
      \put(566,1483){\makebox(0,0)[r]{\strut{}8.1}}%
      \csname LTb\endcsname
      \put(649,431){\makebox(0,0){\strut{}0}}%
      \csname LTb\endcsname
      \put(1181,431){\makebox(0,0){\strut{}0.5}}%
      \csname LTb\endcsname
      \put(1712,431){\makebox(0,0){\strut{}1}}%
    }%
    \gplgaddtomacro\gplfronttext{%
      \csname LTb\endcsname
      \put(159,1055){\rotatebox{-270.00}{\makebox(0,0){\strut{}\normalsize Density}}}%
      \csname LTb\endcsname
      \put(1181,137){\makebox(0,0){\strut{}\normalsize Similarity}}%
    }%
    \gplbacktext
    \put(0,0){\includegraphics[width={99.00bp},height={85.00bp}]{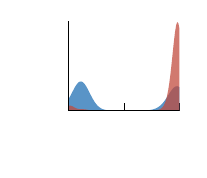}}%
    \gplfronttext
  \end{picture}%
\endgroup

%% file: density.yeast.l8d5.h8.dedup.tex
\begingroup
  \makeatletter
  \providecommand\color[2][]{%
    \GenericError{(gnuplot) \space\space\space\@spaces}{%
      Package color not loaded in conjunction with
      terminal option `colourtext'%
    }{See the gnuplot documentation for explanation.%
    }{Either use 'blacktext' in gnuplot or load the package
      color.sty in LaTeX.}%
    \renewcommand\color[2][]{}%
  }%
  \providecommand\includegraphics[2][]{%
    \GenericError{(gnuplot) \space\space\space\@spaces}{%
      Package graphicx or graphics not loaded%
    }{See the gnuplot documentation for explanation.%
    }{The gnuplot epslatex terminal needs graphicx.sty or graphics.sty.}%
    \renewcommand\includegraphics[2][]{}%
  }%
  \providecommand\rotatebox[2]{#2}%
  \@ifundefined{ifGPcolor}{%
    \newif\ifGPcolor
    \GPcolortrue
  }{}%
  \@ifundefined{ifGPblacktext}{%
    \newif\ifGPblacktext
    \GPblacktexttrue
  }{}%
  \let\gplgaddtomacro\g@addto@macro
  \gdef\gplbacktext{}%
  \gdef\gplfronttext{}%
  \makeatother
  \ifGPblacktext
    \def\colorrgb#1{}%
    \def\colorgray#1{}%
  \else
    \ifGPcolor
      \def\colorrgb#1{\color[rgb]{#1}}%
      \def\colorgray#1{\color[gray]{#1}}%
      \expandafter\def\csname LTw\endcsname{\color{white}}%
      \expandafter\def\csname LTb\endcsname{\color{black}}%
      \expandafter\def\csname LTa\endcsname{\color{black}}%
      \expandafter\def\csname LT0\endcsname{\color[rgb]{1,0,0}}%
      \expandafter\def\csname LT1\endcsname{\color[rgb]{0,1,0}}%
      \expandafter\def\csname LT2\endcsname{\color[rgb]{0,0,1}}%
      \expandafter\def\csname LT3\endcsname{\color[rgb]{1,0,1}}%
      \expandafter\def\csname LT4\endcsname{\color[rgb]{0,1,1}}%
      \expandafter\def\csname LT5\endcsname{\color[rgb]{1,1,0}}%
      \expandafter\def\csname LT6\endcsname{\color[rgb]{0,0,0}}%
      \expandafter\def\csname LT7\endcsname{\color[rgb]{1,0.3,0}}%
      \expandafter\def\csname LT8\endcsname{\color[rgb]{0.5,0.5,0.5}}%
    \else
      \def\colorrgb#1{\color{black}}%
      \def\colorgray#1{\color[gray]{#1}}%
      \expandafter\def\csname LTw\endcsname{\color{white}}%
      \expandafter\def\csname LTb\endcsname{\color{black}}%
      \expandafter\def\csname LTa\endcsname{\color{black}}%
      \expandafter\def\csname LT0\endcsname{\color{black}}%
      \expandafter\def\csname LT1\endcsname{\color{black}}%
      \expandafter\def\csname LT2\endcsname{\color{black}}%
      \expandafter\def\csname LT3\endcsname{\color{black}}%
      \expandafter\def\csname LT4\endcsname{\color{black}}%
      \expandafter\def\csname LT5\endcsname{\color{black}}%
      \expandafter\def\csname LT6\endcsname{\color{black}}%
      \expandafter\def\csname LT7\endcsname{\color{black}}%
      \expandafter\def\csname LT8\endcsname{\color{black}}%
    \fi
  \fi
    \setlength{\unitlength}{0.0500bp}%
    \ifx\gptboxheight\undefined%
      \newlength{\gptboxheight}%
      \newlength{\gptboxwidth}%
      \newsavebox{\gptboxtext}%
    \fi%
    \setlength{\fboxrule}{0.5pt}%
    \setlength{\fboxsep}{1pt}%
    \definecolor{tbcol}{rgb}{1,1,1}%
\begin{picture}(1980.00,1700.00)%
    \gplgaddtomacro\gplbacktext{%
      \csname LTb\endcsname
      \put(566,627){\makebox(0,0)[r]{\strut{}0}}%
      \csname LTb\endcsname
      \put(566,1483){\makebox(0,0)[r]{\strut{}5.3}}%
      \csname LTb\endcsname
      \put(649,431){\makebox(0,0){\strut{}0}}%
      \csname LTb\endcsname
      \put(1181,431){\makebox(0,0){\strut{}0.5}}%
      \csname LTb\endcsname
      \put(1712,431){\makebox(0,0){\strut{}1}}%
    }%
    \gplgaddtomacro\gplfronttext{%
      \csname LTb\endcsname
      \put(159,1055){\rotatebox{-270.00}{\makebox(0,0){\strut{}\normalsize Density}}}%
      \csname LTb\endcsname
      \put(1181,137){\makebox(0,0){\strut{}\normalsize Similarity}}%
    }%
    \gplbacktext
    \put(0,0){\includegraphics[width={99.00bp},height={85.00bp}]{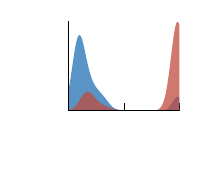}}%
    \gplfronttext
  \end{picture}%
\endgroup

%% file: density.fly.l8d5.h8.dedup.tex
\begingroup
  \makeatletter
  \providecommand\color[2][]{%
    \GenericError{(gnuplot) \space\space\space\@spaces}{%
      Package color not loaded in conjunction with
      terminal option `colourtext'%
    }{See the gnuplot documentation for explanation.%
    }{Either use 'blacktext' in gnuplot or load the package
      color.sty in LaTeX.}%
    \renewcommand\color[2][]{}%
  }%
  \providecommand\includegraphics[2][]{%
    \GenericError{(gnuplot) \space\space\space\@spaces}{%
      Package graphicx or graphics not loaded%
    }{See the gnuplot documentation for explanation.%
    }{The gnuplot epslatex terminal needs graphicx.sty or graphics.sty.}%
    \renewcommand\includegraphics[2][]{}%
  }%
  \providecommand\rotatebox[2]{#2}%
  \@ifundefined{ifGPcolor}{%
    \newif\ifGPcolor
    \GPcolortrue
  }{}%
  \@ifundefined{ifGPblacktext}{%
    \newif\ifGPblacktext
    \GPblacktexttrue
  }{}%
  \let\gplgaddtomacro\g@addto@macro
  \gdef\gplbacktext{}%
  \gdef\gplfronttext{}%
  \makeatother
  \ifGPblacktext
    \def\colorrgb#1{}%
    \def\colorgray#1{}%
  \else
    \ifGPcolor
      \def\colorrgb#1{\color[rgb]{#1}}%
      \def\colorgray#1{\color[gray]{#1}}%
      \expandafter\def\csname LTw\endcsname{\color{white}}%
      \expandafter\def\csname LTb\endcsname{\color{black}}%
      \expandafter\def\csname LTa\endcsname{\color{black}}%
      \expandafter\def\csname LT0\endcsname{\color[rgb]{1,0,0}}%
      \expandafter\def\csname LT1\endcsname{\color[rgb]{0,1,0}}%
      \expandafter\def\csname LT2\endcsname{\color[rgb]{0,0,1}}%
      \expandafter\def\csname LT3\endcsname{\color[rgb]{1,0,1}}%
      \expandafter\def\csname LT4\endcsname{\color[rgb]{0,1,1}}%
      \expandafter\def\csname LT5\endcsname{\color[rgb]{1,1,0}}%
      \expandafter\def\csname LT6\endcsname{\color[rgb]{0,0,0}}%
      \expandafter\def\csname LT7\endcsname{\color[rgb]{1,0.3,0}}%
      \expandafter\def\csname LT8\endcsname{\color[rgb]{0.5,0.5,0.5}}%
    \else
      \def\colorrgb#1{\color{black}}%
      \def\colorgray#1{\color[gray]{#1}}%
      \expandafter\def\csname LTw\endcsname{\color{white}}%
      \expandafter\def\csname LTb\endcsname{\color{black}}%
      \expandafter\def\csname LTa\endcsname{\color{black}}%
      \expandafter\def\csname LT0\endcsname{\color{black}}%
      \expandafter\def\csname LT1\endcsname{\color{black}}%
      \expandafter\def\csname LT2\endcsname{\color{black}}%
      \expandafter\def\csname LT3\endcsname{\color{black}}%
      \expandafter\def\csname LT4\endcsname{\color{black}}%
      \expandafter\def\csname LT5\endcsname{\color{black}}%
      \expandafter\def\csname LT6\endcsname{\color{black}}%
      \expandafter\def\csname LT7\endcsname{\color{black}}%
      \expandafter\def\csname LT8\endcsname{\color{black}}%
    \fi
  \fi
    \setlength{\unitlength}{0.0500bp}%
    \ifx\gptboxheight\undefined%
      \newlength{\gptboxheight}%
      \newlength{\gptboxwidth}%
      \newsavebox{\gptboxtext}%
    \fi%
    \setlength{\fboxrule}{0.5pt}%
    \setlength{\fboxsep}{1pt}%
    \definecolor{tbcol}{rgb}{1,1,1}%
\begin{picture}(1980.00,1700.00)%
    \gplgaddtomacro\gplbacktext{%
      \csname LTb\endcsname
      \put(566,627){\makebox(0,0)[r]{\strut{}0}}%
      \csname LTb\endcsname
      \put(566,1483){\makebox(0,0)[r]{\strut{}4.2}}%
      \csname LTb\endcsname
      \put(649,431){\makebox(0,0){\strut{}0}}%
      \csname LTb\endcsname
      \put(1181,431){\makebox(0,0){\strut{}0.5}}%
      \csname LTb\endcsname
      \put(1712,431){\makebox(0,0){\strut{}1}}%
    }%
    \gplgaddtomacro\gplfronttext{%
      \csname LTb\endcsname
      \put(159,1055){\rotatebox{-270.00}{\makebox(0,0){\strut{}\normalsize Density}}}%
      \csname LTb\endcsname
      \put(1181,137){\makebox(0,0){\strut{}\normalsize Similarity}}%
    }%
    \gplbacktext
    \put(0,0){\includegraphics[width={99.00bp},height={85.00bp}]{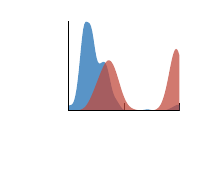}}%
    \gplfronttext
  \end{picture}%
\endgroup

%% file: density.plant.l8d5.h8.dedup.tex
\begingroup
  \makeatletter
  \providecommand\color[2][]{%
    \GenericError{(gnuplot) \space\space\space\@spaces}{%
      Package color not loaded in conjunction with
      terminal option `colourtext'%
    }{See the gnuplot documentation for explanation.%
    }{Either use 'blacktext' in gnuplot or load the package
      color.sty in LaTeX.}%
    \renewcommand\color[2][]{}%
  }%
  \providecommand\includegraphics[2][]{%
    \GenericError{(gnuplot) \space\space\space\@spaces}{%
      Package graphicx or graphics not loaded%
    }{See the gnuplot documentation for explanation.%
    }{The gnuplot epslatex terminal needs graphicx.sty or graphics.sty.}%
    \renewcommand\includegraphics[2][]{}%
  }%
  \providecommand\rotatebox[2]{#2}%
  \@ifundefined{ifGPcolor}{%
    \newif\ifGPcolor
    \GPcolortrue
  }{}%
  \@ifundefined{ifGPblacktext}{%
    \newif\ifGPblacktext
    \GPblacktexttrue
  }{}%
  \let\gplgaddtomacro\g@addto@macro
  \gdef\gplbacktext{}%
  \gdef\gplfronttext{}%
  \makeatother
  \ifGPblacktext
    \def\colorrgb#1{}%
    \def\colorgray#1{}%
  \else
    \ifGPcolor
      \def\colorrgb#1{\color[rgb]{#1}}%
      \def\colorgray#1{\color[gray]{#1}}%
      \expandafter\def\csname LTw\endcsname{\color{white}}%
      \expandafter\def\csname LTb\endcsname{\color{black}}%
      \expandafter\def\csname LTa\endcsname{\color{black}}%
      \expandafter\def\csname LT0\endcsname{\color[rgb]{1,0,0}}%
      \expandafter\def\csname LT1\endcsname{\color[rgb]{0,1,0}}%
      \expandafter\def\csname LT2\endcsname{\color[rgb]{0,0,1}}%
      \expandafter\def\csname LT3\endcsname{\color[rgb]{1,0,1}}%
      \expandafter\def\csname LT4\endcsname{\color[rgb]{0,1,1}}%
      \expandafter\def\csname LT5\endcsname{\color[rgb]{1,1,0}}%
      \expandafter\def\csname LT6\endcsname{\color[rgb]{0,0,0}}%
      \expandafter\def\csname LT7\endcsname{\color[rgb]{1,0.3,0}}%
      \expandafter\def\csname LT8\endcsname{\color[rgb]{0.5,0.5,0.5}}%
    \else
      \def\colorrgb#1{\color{black}}%
      \def\colorgray#1{\color[gray]{#1}}%
      \expandafter\def\csname LTw\endcsname{\color{white}}%
      \expandafter\def\csname LTb\endcsname{\color{black}}%
      \expandafter\def\csname LTa\endcsname{\color{black}}%
      \expandafter\def\csname LT0\endcsname{\color{black}}%
      \expandafter\def\csname LT1\endcsname{\color{black}}%
      \expandafter\def\csname LT2\endcsname{\color{black}}%
      \expandafter\def\csname LT3\endcsname{\color{black}}%
      \expandafter\def\csname LT4\endcsname{\color{black}}%
      \expandafter\def\csname LT5\endcsname{\color{black}}%
      \expandafter\def\csname LT6\endcsname{\color{black}}%
      \expandafter\def\csname LT7\endcsname{\color{black}}%
      \expandafter\def\csname LT8\endcsname{\color{black}}%
    \fi
  \fi
    \setlength{\unitlength}{0.0500bp}%
    \ifx\gptboxheight\undefined%
      \newlength{\gptboxheight}%
      \newlength{\gptboxwidth}%
      \newsavebox{\gptboxtext}%
    \fi%
    \setlength{\fboxrule}{0.5pt}%
    \setlength{\fboxsep}{1pt}%
    \definecolor{tbcol}{rgb}{1,1,1}%
\begin{picture}(1980.00,1700.00)%
    \gplgaddtomacro\gplbacktext{%
      \csname LTb\endcsname
      \put(402,627){\makebox(0,0)[r]{\strut{}0}}%
      \csname LTb\endcsname
      \put(402,1483){\makebox(0,0)[r]{\strut{}4}}%
      \csname LTb\endcsname
      \put(484,431){\makebox(0,0){\strut{}0}}%
      \csname LTb\endcsname
      \put(1098,431){\makebox(0,0){\strut{}0.5}}%
      \csname LTb\endcsname
      \put(1712,431){\makebox(0,0){\strut{}1}}%
    }%
    \gplgaddtomacro\gplfronttext{%
      \csname LTb\endcsname
      \put(159,1055){\rotatebox{-270.00}{\makebox(0,0){\strut{}\normalsize Density}}}%
      \csname LTb\endcsname
      \put(1098,137){\makebox(0,0){\strut{}\normalsize Similarity}}%
    }%
    \gplbacktext
    \put(0,0){\includegraphics[width={99.00bp},height={85.00bp}]{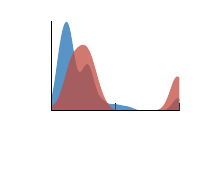}}%
    \gplfronttext
  \end{picture}%
\endgroup

%% file: density.human.l8d5.h8.dedup.tex
\begingroup
  \makeatletter
  \providecommand\color[2][]{%
    \GenericError{(gnuplot) \space\space\space\@spaces}{%
      Package color not loaded in conjunction with
      terminal option `colourtext'%
    }{See the gnuplot documentation for explanation.%
    }{Either use 'blacktext' in gnuplot or load the package
      color.sty in LaTeX.}%
    \renewcommand\color[2][]{}%
  }%
  \providecommand\includegraphics[2][]{%
    \GenericError{(gnuplot) \space\space\space\@spaces}{%
      Package graphicx or graphics not loaded%
    }{See the gnuplot documentation for explanation.%
    }{The gnuplot epslatex terminal needs graphicx.sty or graphics.sty.}%
    \renewcommand\includegraphics[2][]{}%
  }%
  \providecommand\rotatebox[2]{#2}%
  \@ifundefined{ifGPcolor}{%
    \newif\ifGPcolor
    \GPcolortrue
  }{}%
  \@ifundefined{ifGPblacktext}{%
    \newif\ifGPblacktext
    \GPblacktexttrue
  }{}%
  \let\gplgaddtomacro\g@addto@macro
  \gdef\gplbacktext{}%
  \gdef\gplfronttext{}%
  \makeatother
  \ifGPblacktext
    \def\colorrgb#1{}%
    \def\colorgray#1{}%
  \else
    \ifGPcolor
      \def\colorrgb#1{\color[rgb]{#1}}%
      \def\colorgray#1{\color[gray]{#1}}%
      \expandafter\def\csname LTw\endcsname{\color{white}}%
      \expandafter\def\csname LTb\endcsname{\color{black}}%
      \expandafter\def\csname LTa\endcsname{\color{black}}%
      \expandafter\def\csname LT0\endcsname{\color[rgb]{1,0,0}}%
      \expandafter\def\csname LT1\endcsname{\color[rgb]{0,1,0}}%
      \expandafter\def\csname LT2\endcsname{\color[rgb]{0,0,1}}%
      \expandafter\def\csname LT3\endcsname{\color[rgb]{1,0,1}}%
      \expandafter\def\csname LT4\endcsname{\color[rgb]{0,1,1}}%
      \expandafter\def\csname LT5\endcsname{\color[rgb]{1,1,0}}%
      \expandafter\def\csname LT6\endcsname{\color[rgb]{0,0,0}}%
      \expandafter\def\csname LT7\endcsname{\color[rgb]{1,0.3,0}}%
      \expandafter\def\csname LT8\endcsname{\color[rgb]{0.5,0.5,0.5}}%
    \else
      \def\colorrgb#1{\color{black}}%
      \def\colorgray#1{\color[gray]{#1}}%
      \expandafter\def\csname LTw\endcsname{\color{white}}%
      \expandafter\def\csname LTb\endcsname{\color{black}}%
      \expandafter\def\csname LTa\endcsname{\color{black}}%
      \expandafter\def\csname LT0\endcsname{\color{black}}%
      \expandafter\def\csname LT1\endcsname{\color{black}}%
      \expandafter\def\csname LT2\endcsname{\color{black}}%
      \expandafter\def\csname LT3\endcsname{\color{black}}%
      \expandafter\def\csname LT4\endcsname{\color{black}}%
      \expandafter\def\csname LT5\endcsname{\color{black}}%
      \expandafter\def\csname LT6\endcsname{\color{black}}%
      \expandafter\def\csname LT7\endcsname{\color{black}}%
      \expandafter\def\csname LT8\endcsname{\color{black}}%
    \fi
  \fi
    \setlength{\unitlength}{0.0500bp}%
    \ifx\gptboxheight\undefined%
      \newlength{\gptboxheight}%
      \newlength{\gptboxwidth}%
      \newsavebox{\gptboxtext}%
    \fi%
    \setlength{\fboxrule}{0.5pt}%
    \setlength{\fboxsep}{1pt}%
    \definecolor{tbcol}{rgb}{1,1,1}%
\begin{picture}(1980.00,1700.00)%
    \gplgaddtomacro\gplbacktext{%
      \csname LTb\endcsname
      \put(566,627){\makebox(0,0)[r]{\strut{}0}}%
      \csname LTb\endcsname
      \put(566,1483){\makebox(0,0)[r]{\strut{}5.7}}%
      \csname LTb\endcsname
      \put(649,431){\makebox(0,0){\strut{}0}}%
      \csname LTb\endcsname
      \put(1181,431){\makebox(0,0){\strut{}0.5}}%
      \csname LTb\endcsname
      \put(1712,431){\makebox(0,0){\strut{}1}}%
    }%
    \gplgaddtomacro\gplfronttext{%
      \csname LTb\endcsname
      \put(159,1055){\rotatebox{-270.00}{\makebox(0,0){\strut{}\normalsize Density}}}%
      \csname LTb\endcsname
      \put(1181,137){\makebox(0,0){\strut{}\normalsize Similarity}}%
    }%
    \gplbacktext
    \put(0,0){\includegraphics[width={99.00bp},height={85.00bp}]{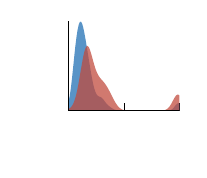}}%
    \gplfronttext
  \end{picture}%
\endgroup

%% file: density.yeast_aj.l8d5.h8.dedup.tex
\begingroup
  \makeatletter
  \providecommand\color[2][]{%
    \GenericError{(gnuplot) \space\space\space\@spaces}{%
      Package color not loaded in conjunction with
      terminal option `colourtext'%
    }{See the gnuplot documentation for explanation.%
    }{Either use 'blacktext' in gnuplot or load the package
      color.sty in LaTeX.}%
    \renewcommand\color[2][]{}%
  }%
  \providecommand\includegraphics[2][]{%
    \GenericError{(gnuplot) \space\space\space\@spaces}{%
      Package graphicx or graphics not loaded%
    }{See the gnuplot documentation for explanation.%
    }{The gnuplot epslatex terminal needs graphicx.sty or graphics.sty.}%
    \renewcommand\includegraphics[2][]{}%
  }%
  \providecommand\rotatebox[2]{#2}%
  \@ifundefined{ifGPcolor}{%
    \newif\ifGPcolor
    \GPcolortrue
  }{}%
  \@ifundefined{ifGPblacktext}{%
    \newif\ifGPblacktext
    \GPblacktexttrue
  }{}%
  \let\gplgaddtomacro\g@addto@macro
  \gdef\gplbacktext{}%
  \gdef\gplfronttext{}%
  \makeatother
  \ifGPblacktext
    \def\colorrgb#1{}%
    \def\colorgray#1{}%
  \else
    \ifGPcolor
      \def\colorrgb#1{\color[rgb]{#1}}%
      \def\colorgray#1{\color[gray]{#1}}%
      \expandafter\def\csname LTw\endcsname{\color{white}}%
      \expandafter\def\csname LTb\endcsname{\color{black}}%
      \expandafter\def\csname LTa\endcsname{\color{black}}%
      \expandafter\def\csname LT0\endcsname{\color[rgb]{1,0,0}}%
      \expandafter\def\csname LT1\endcsname{\color[rgb]{0,1,0}}%
      \expandafter\def\csname LT2\endcsname{\color[rgb]{0,0,1}}%
      \expandafter\def\csname LT3\endcsname{\color[rgb]{1,0,1}}%
      \expandafter\def\csname LT4\endcsname{\color[rgb]{0,1,1}}%
      \expandafter\def\csname LT5\endcsname{\color[rgb]{1,1,0}}%
      \expandafter\def\csname LT6\endcsname{\color[rgb]{0,0,0}}%
      \expandafter\def\csname LT7\endcsname{\color[rgb]{1,0.3,0}}%
      \expandafter\def\csname LT8\endcsname{\color[rgb]{0.5,0.5,0.5}}%
    \else
      \def\colorrgb#1{\color{black}}%
      \def\colorgray#1{\color[gray]{#1}}%
      \expandafter\def\csname LTw\endcsname{\color{white}}%
      \expandafter\def\csname LTb\endcsname{\color{black}}%
      \expandafter\def\csname LTa\endcsname{\color{black}}%
      \expandafter\def\csname LT0\endcsname{\color{black}}%
      \expandafter\def\csname LT1\endcsname{\color{black}}%
      \expandafter\def\csname LT2\endcsname{\color{black}}%
      \expandafter\def\csname LT3\endcsname{\color{black}}%
      \expandafter\def\csname LT4\endcsname{\color{black}}%
      \expandafter\def\csname LT5\endcsname{\color{black}}%
      \expandafter\def\csname LT6\endcsname{\color{black}}%
      \expandafter\def\csname LT7\endcsname{\color{black}}%
      \expandafter\def\csname LT8\endcsname{\color{black}}%
    \fi
  \fi
    \setlength{\unitlength}{0.0500bp}%
    \ifx\gptboxheight\undefined%
      \newlength{\gptboxheight}%
      \newlength{\gptboxwidth}%
      \newsavebox{\gptboxtext}%
    \fi%
    \setlength{\fboxrule}{0.5pt}%
    \setlength{\fboxsep}{1pt}%
    \definecolor{tbcol}{rgb}{1,1,1}%
\begin{picture}(1980.00,1700.00)%
    \gplgaddtomacro\gplbacktext{%
      \csname LTb\endcsname
      \put(566,627){\makebox(0,0)[r]{\strut{}0}}%
      \csname LTb\endcsname
      \put(566,1483){\makebox(0,0)[r]{\strut{}9.1}}%
      \csname LTb\endcsname
      \put(649,431){\makebox(0,0){\strut{}0}}%
      \csname LTb\endcsname
      \put(1181,431){\makebox(0,0){\strut{}0.5}}%
      \csname LTb\endcsname
      \put(1712,431){\makebox(0,0){\strut{}1}}%
    }%
    \gplgaddtomacro\gplfronttext{%
      \csname LTb\endcsname
      \put(159,1055){\rotatebox{-270.00}{\makebox(0,0){\strut{}\normalsize Density}}}%
      \csname LTb\endcsname
      \put(1181,137){\makebox(0,0){\strut{}\normalsize Similarity}}%
    }%
    \gplbacktext
    \put(0,0){\includegraphics[width={99.00bp},height={85.00bp}]{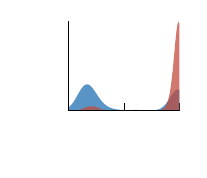}}%
    \gplfronttext
  \end{picture}%
\endgroup

%% file: density.human_aj.l8d5.h8.dedup.tex
\begingroup
  \makeatletter
  \providecommand\color[2][]{%
    \GenericError{(gnuplot) \space\space\space\@spaces}{%
      Package color not loaded in conjunction with
      terminal option `colourtext'%
    }{See the gnuplot documentation for explanation.%
    }{Either use 'blacktext' in gnuplot or load the package
      color.sty in LaTeX.}%
    \renewcommand\color[2][]{}%
  }%
  \providecommand\includegraphics[2][]{%
    \GenericError{(gnuplot) \space\space\space\@spaces}{%
      Package graphicx or graphics not loaded%
    }{See the gnuplot documentation for explanation.%
    }{The gnuplot epslatex terminal needs graphicx.sty or graphics.sty.}%
    \renewcommand\includegraphics[2][]{}%
  }%
  \providecommand\rotatebox[2]{#2}%
  \@ifundefined{ifGPcolor}{%
    \newif\ifGPcolor
    \GPcolortrue
  }{}%
  \@ifundefined{ifGPblacktext}{%
    \newif\ifGPblacktext
    \GPblacktexttrue
  }{}%
  \let\gplgaddtomacro\g@addto@macro
  \gdef\gplbacktext{}%
  \gdef\gplfronttext{}%
  \makeatother
  \ifGPblacktext
    \def\colorrgb#1{}%
    \def\colorgray#1{}%
  \else
    \ifGPcolor
      \def\colorrgb#1{\color[rgb]{#1}}%
      \def\colorgray#1{\color[gray]{#1}}%
      \expandafter\def\csname LTw\endcsname{\color{white}}%
      \expandafter\def\csname LTb\endcsname{\color{black}}%
      \expandafter\def\csname LTa\endcsname{\color{black}}%
      \expandafter\def\csname LT0\endcsname{\color[rgb]{1,0,0}}%
      \expandafter\def\csname LT1\endcsname{\color[rgb]{0,1,0}}%
      \expandafter\def\csname LT2\endcsname{\color[rgb]{0,0,1}}%
      \expandafter\def\csname LT3\endcsname{\color[rgb]{1,0,1}}%
      \expandafter\def\csname LT4\endcsname{\color[rgb]{0,1,1}}%
      \expandafter\def\csname LT5\endcsname{\color[rgb]{1,1,0}}%
      \expandafter\def\csname LT6\endcsname{\color[rgb]{0,0,0}}%
      \expandafter\def\csname LT7\endcsname{\color[rgb]{1,0.3,0}}%
      \expandafter\def\csname LT8\endcsname{\color[rgb]{0.5,0.5,0.5}}%
    \else
      \def\colorrgb#1{\color{black}}%
      \def\colorgray#1{\color[gray]{#1}}%
      \expandafter\def\csname LTw\endcsname{\color{white}}%
      \expandafter\def\csname LTb\endcsname{\color{black}}%
      \expandafter\def\csname LTa\endcsname{\color{black}}%
      \expandafter\def\csname LT0\endcsname{\color{black}}%
      \expandafter\def\csname LT1\endcsname{\color{black}}%
      \expandafter\def\csname LT2\endcsname{\color{black}}%
      \expandafter\def\csname LT3\endcsname{\color{black}}%
      \expandafter\def\csname LT4\endcsname{\color{black}}%
      \expandafter\def\csname LT5\endcsname{\color{black}}%
      \expandafter\def\csname LT6\endcsname{\color{black}}%
      \expandafter\def\csname LT7\endcsname{\color{black}}%
      \expandafter\def\csname LT8\endcsname{\color{black}}%
    \fi
  \fi
    \setlength{\unitlength}{0.0500bp}%
    \ifx\gptboxheight\undefined%
      \newlength{\gptboxheight}%
      \newlength{\gptboxwidth}%
      \newsavebox{\gptboxtext}%
    \fi%
    \setlength{\fboxrule}{0.5pt}%
    \setlength{\fboxsep}{1pt}%
    \definecolor{tbcol}{rgb}{1,1,1}%
\begin{picture}(1980.00,1700.00)%
    \gplgaddtomacro\gplbacktext{%
      \csname LTb\endcsname
      \put(566,627){\makebox(0,0)[r]{\strut{}0}}%
      \csname LTb\endcsname
      \put(566,1483){\makebox(0,0)[r]{\strut{}5.1}}%
      \csname LTb\endcsname
      \put(649,431){\makebox(0,0){\strut{}0}}%
      \csname LTb\endcsname
      \put(1181,431){\makebox(0,0){\strut{}0.5}}%
      \csname LTb\endcsname
      \put(1712,431){\makebox(0,0){\strut{}1}}%
    }%
    \gplgaddtomacro\gplfronttext{%
      \csname LTb\endcsname
      \put(159,1055){\rotatebox{-270.00}{\makebox(0,0){\strut{}\normalsize Density}}}%
      \csname LTb\endcsname
      \put(1181,137){\makebox(0,0){\strut{}\normalsize Similarity}}%
    }%
    \gplbacktext
    \put(0,0){\includegraphics[width={99.00bp},height={85.00bp}]{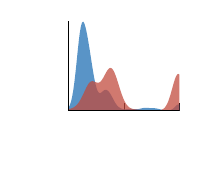}}%
    \gplfronttext
  \end{picture}%
\endgroup

%% file: density.ecoli.l8d3.h8.dedup.tex
\begingroup
  \makeatletter
  \providecommand\color[2][]{%
    \GenericError{(gnuplot) \space\space\space\@spaces}{%
      Package color not loaded in conjunction with
      terminal option `colourtext'%
    }{See the gnuplot documentation for explanation.%
    }{Either use 'blacktext' in gnuplot or load the package
      color.sty in LaTeX.}%
    \renewcommand\color[2][]{}%
  }%
  \providecommand\includegraphics[2][]{%
    \GenericError{(gnuplot) \space\space\space\@spaces}{%
      Package graphicx or graphics not loaded%
    }{See the gnuplot documentation for explanation.%
    }{The gnuplot epslatex terminal needs graphicx.sty or graphics.sty.}%
    \renewcommand\includegraphics[2][]{}%
  }%
  \providecommand\rotatebox[2]{#2}%
  \@ifundefined{ifGPcolor}{%
    \newif\ifGPcolor
    \GPcolortrue
  }{}%
  \@ifundefined{ifGPblacktext}{%
    \newif\ifGPblacktext
    \GPblacktexttrue
  }{}%
  \let\gplgaddtomacro\g@addto@macro
  \gdef\gplbacktext{}%
  \gdef\gplfronttext{}%
  \makeatother
  \ifGPblacktext
    \def\colorrgb#1{}%
    \def\colorgray#1{}%
  \else
    \ifGPcolor
      \def\colorrgb#1{\color[rgb]{#1}}%
      \def\colorgray#1{\color[gray]{#1}}%
      \expandafter\def\csname LTw\endcsname{\color{white}}%
      \expandafter\def\csname LTb\endcsname{\color{black}}%
      \expandafter\def\csname LTa\endcsname{\color{black}}%
      \expandafter\def\csname LT0\endcsname{\color[rgb]{1,0,0}}%
      \expandafter\def\csname LT1\endcsname{\color[rgb]{0,1,0}}%
      \expandafter\def\csname LT2\endcsname{\color[rgb]{0,0,1}}%
      \expandafter\def\csname LT3\endcsname{\color[rgb]{1,0,1}}%
      \expandafter\def\csname LT4\endcsname{\color[rgb]{0,1,1}}%
      \expandafter\def\csname LT5\endcsname{\color[rgb]{1,1,0}}%
      \expandafter\def\csname LT6\endcsname{\color[rgb]{0,0,0}}%
      \expandafter\def\csname LT7\endcsname{\color[rgb]{1,0.3,0}}%
      \expandafter\def\csname LT8\endcsname{\color[rgb]{0.5,0.5,0.5}}%
    \else
      \def\colorrgb#1{\color{black}}%
      \def\colorgray#1{\color[gray]{#1}}%
      \expandafter\def\csname LTw\endcsname{\color{white}}%
      \expandafter\def\csname LTb\endcsname{\color{black}}%
      \expandafter\def\csname LTa\endcsname{\color{black}}%
      \expandafter\def\csname LT0\endcsname{\color{black}}%
      \expandafter\def\csname LT1\endcsname{\color{black}}%
      \expandafter\def\csname LT2\endcsname{\color{black}}%
      \expandafter\def\csname LT3\endcsname{\color{black}}%
      \expandafter\def\csname LT4\endcsname{\color{black}}%
      \expandafter\def\csname LT5\endcsname{\color{black}}%
      \expandafter\def\csname LT6\endcsname{\color{black}}%
      \expandafter\def\csname LT7\endcsname{\color{black}}%
      \expandafter\def\csname LT8\endcsname{\color{black}}%
    \fi
  \fi
    \setlength{\unitlength}{0.0500bp}%
    \ifx\gptboxheight\undefined%
      \newlength{\gptboxheight}%
      \newlength{\gptboxwidth}%
      \newsavebox{\gptboxtext}%
    \fi%
    \setlength{\fboxrule}{0.5pt}%
    \setlength{\fboxsep}{1pt}%
    \definecolor{tbcol}{rgb}{1,1,1}%
\begin{picture}(1980.00,1700.00)%
    \gplgaddtomacro\gplbacktext{%
      \csname LTb\endcsname
      \put(402,627){\makebox(0,0)[r]{\strut{}0}}%
      \csname LTb\endcsname
      \put(402,1483){\makebox(0,0)[r]{\strut{}7}}%
      \csname LTb\endcsname
      \put(484,431){\makebox(0,0){\strut{}0}}%
      \csname LTb\endcsname
      \put(1098,431){\makebox(0,0){\strut{}0.5}}%
      \csname LTb\endcsname
      \put(1712,431){\makebox(0,0){\strut{}1}}%
    }%
    \gplgaddtomacro\gplfronttext{%
      \csname LTb\endcsname
      \put(159,1055){\rotatebox{-270.00}{\makebox(0,0){\strut{}\normalsize Density}}}%
      \csname LTb\endcsname
      \put(1098,137){\makebox(0,0){\strut{}\normalsize Similarity}}%
    }%
    \gplbacktext
    \put(0,0){\includegraphics[width={99.00bp},height={85.00bp}]{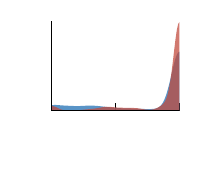}}%
    \gplfronttext
  \end{picture}%
\endgroup

%% file: density.plant.l8d3.h8.dedup.tex
\begingroup
  \makeatletter
  \providecommand\color[2][]{%
    \GenericError{(gnuplot) \space\space\space\@spaces}{%
      Package color not loaded in conjunction with
      terminal option `colourtext'%
    }{See the gnuplot documentation for explanation.%
    }{Either use 'blacktext' in gnuplot or load the package
      color.sty in LaTeX.}%
    \renewcommand\color[2][]{}%
  }%
  \providecommand\includegraphics[2][]{%
    \GenericError{(gnuplot) \space\space\space\@spaces}{%
      Package graphicx or graphics not loaded%
    }{See the gnuplot documentation for explanation.%
    }{The gnuplot epslatex terminal needs graphicx.sty or graphics.sty.}%
    \renewcommand\includegraphics[2][]{}%
  }%
  \providecommand\rotatebox[2]{#2}%
  \@ifundefined{ifGPcolor}{%
    \newif\ifGPcolor
    \GPcolortrue
  }{}%
  \@ifundefined{ifGPblacktext}{%
    \newif\ifGPblacktext
    \GPblacktexttrue
  }{}%
  \let\gplgaddtomacro\g@addto@macro
  \gdef\gplbacktext{}%
  \gdef\gplfronttext{}%
  \makeatother
  \ifGPblacktext
    \def\colorrgb#1{}%
    \def\colorgray#1{}%
  \else
    \ifGPcolor
      \def\colorrgb#1{\color[rgb]{#1}}%
      \def\colorgray#1{\color[gray]{#1}}%
      \expandafter\def\csname LTw\endcsname{\color{white}}%
      \expandafter\def\csname LTb\endcsname{\color{black}}%
      \expandafter\def\csname LTa\endcsname{\color{black}}%
      \expandafter\def\csname LT0\endcsname{\color[rgb]{1,0,0}}%
      \expandafter\def\csname LT1\endcsname{\color[rgb]{0,1,0}}%
      \expandafter\def\csname LT2\endcsname{\color[rgb]{0,0,1}}%
      \expandafter\def\csname LT3\endcsname{\color[rgb]{1,0,1}}%
      \expandafter\def\csname LT4\endcsname{\color[rgb]{0,1,1}}%
      \expandafter\def\csname LT5\endcsname{\color[rgb]{1,1,0}}%
      \expandafter\def\csname LT6\endcsname{\color[rgb]{0,0,0}}%
      \expandafter\def\csname LT7\endcsname{\color[rgb]{1,0.3,0}}%
      \expandafter\def\csname LT8\endcsname{\color[rgb]{0.5,0.5,0.5}}%
    \else
      \def\colorrgb#1{\color{black}}%
      \def\colorgray#1{\color[gray]{#1}}%
      \expandafter\def\csname LTw\endcsname{\color{white}}%
      \expandafter\def\csname LTb\endcsname{\color{black}}%
      \expandafter\def\csname LTa\endcsname{\color{black}}%
      \expandafter\def\csname LT0\endcsname{\color{black}}%
      \expandafter\def\csname LT1\endcsname{\color{black}}%
      \expandafter\def\csname LT2\endcsname{\color{black}}%
      \expandafter\def\csname LT3\endcsname{\color{black}}%
      \expandafter\def\csname LT4\endcsname{\color{black}}%
      \expandafter\def\csname LT5\endcsname{\color{black}}%
      \expandafter\def\csname LT6\endcsname{\color{black}}%
      \expandafter\def\csname LT7\endcsname{\color{black}}%
      \expandafter\def\csname LT8\endcsname{\color{black}}%
    \fi
  \fi
    \setlength{\unitlength}{0.0500bp}%
    \ifx\gptboxheight\undefined%
      \newlength{\gptboxheight}%
      \newlength{\gptboxwidth}%
      \newsavebox{\gptboxtext}%
    \fi%
    \setlength{\fboxrule}{0.5pt}%
    \setlength{\fboxsep}{1pt}%
    \definecolor{tbcol}{rgb}{1,1,1}%
\begin{picture}(1980.00,1700.00)%
    \gplgaddtomacro\gplbacktext{%
      \csname LTb\endcsname
      \put(566,627){\makebox(0,0)[r]{\strut{}0}}%
      \csname LTb\endcsname
      \put(566,1483){\makebox(0,0)[r]{\strut{}6.8}}%
      \csname LTb\endcsname
      \put(649,431){\makebox(0,0){\strut{}0}}%
      \csname LTb\endcsname
      \put(1181,431){\makebox(0,0){\strut{}0.5}}%
      \csname LTb\endcsname
      \put(1712,431){\makebox(0,0){\strut{}1}}%
    }%
    \gplgaddtomacro\gplfronttext{%
      \csname LTb\endcsname
      \put(159,1055){\rotatebox{-270.00}{\makebox(0,0){\strut{}\normalsize Density}}}%
      \csname LTb\endcsname
      \put(1181,137){\makebox(0,0){\strut{}\normalsize Similarity}}%
    }%
    \gplbacktext
    \put(0,0){\includegraphics[width={99.00bp},height={85.00bp}]{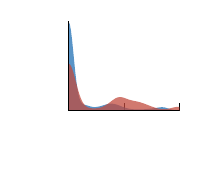}}%
    \gplfronttext
  \end{picture}%
\endgroup

%% file: density.human.l8d3.h8.dedup.tex
\begingroup
  \makeatletter
  \providecommand\color[2][]{%
    \GenericError{(gnuplot) \space\space\space\@spaces}{%
      Package color not loaded in conjunction with
      terminal option `colourtext'%
    }{See the gnuplot documentation for explanation.%
    }{Either use 'blacktext' in gnuplot or load the package
      color.sty in LaTeX.}%
    \renewcommand\color[2][]{}%
  }%
  \providecommand\includegraphics[2][]{%
    \GenericError{(gnuplot) \space\space\space\@spaces}{%
      Package graphicx or graphics not loaded%
    }{See the gnuplot documentation for explanation.%
    }{The gnuplot epslatex terminal needs graphicx.sty or graphics.sty.}%
    \renewcommand\includegraphics[2][]{}%
  }%
  \providecommand\rotatebox[2]{#2}%
  \@ifundefined{ifGPcolor}{%
    \newif\ifGPcolor
    \GPcolortrue
  }{}%
  \@ifundefined{ifGPblacktext}{%
    \newif\ifGPblacktext
    \GPblacktexttrue
  }{}%
  \let\gplgaddtomacro\g@addto@macro
  \gdef\gplbacktext{}%
  \gdef\gplfronttext{}%
  \makeatother
  \ifGPblacktext
    \def\colorrgb#1{}%
    \def\colorgray#1{}%
  \else
    \ifGPcolor
      \def\colorrgb#1{\color[rgb]{#1}}%
      \def\colorgray#1{\color[gray]{#1}}%
      \expandafter\def\csname LTw\endcsname{\color{white}}%
      \expandafter\def\csname LTb\endcsname{\color{black}}%
      \expandafter\def\csname LTa\endcsname{\color{black}}%
      \expandafter\def\csname LT0\endcsname{\color[rgb]{1,0,0}}%
      \expandafter\def\csname LT1\endcsname{\color[rgb]{0,1,0}}%
      \expandafter\def\csname LT2\endcsname{\color[rgb]{0,0,1}}%
      \expandafter\def\csname LT3\endcsname{\color[rgb]{1,0,1}}%
      \expandafter\def\csname LT4\endcsname{\color[rgb]{0,1,1}}%
      \expandafter\def\csname LT5\endcsname{\color[rgb]{1,1,0}}%
      \expandafter\def\csname LT6\endcsname{\color[rgb]{0,0,0}}%
      \expandafter\def\csname LT7\endcsname{\color[rgb]{1,0.3,0}}%
      \expandafter\def\csname LT8\endcsname{\color[rgb]{0.5,0.5,0.5}}%
    \else
      \def\colorrgb#1{\color{black}}%
      \def\colorgray#1{\color[gray]{#1}}%
      \expandafter\def\csname LTw\endcsname{\color{white}}%
      \expandafter\def\csname LTb\endcsname{\color{black}}%
      \expandafter\def\csname LTa\endcsname{\color{black}}%
      \expandafter\def\csname LT0\endcsname{\color{black}}%
      \expandafter\def\csname LT1\endcsname{\color{black}}%
      \expandafter\def\csname LT2\endcsname{\color{black}}%
      \expandafter\def\csname LT3\endcsname{\color{black}}%
      \expandafter\def\csname LT4\endcsname{\color{black}}%
      \expandafter\def\csname LT5\endcsname{\color{black}}%
      \expandafter\def\csname LT6\endcsname{\color{black}}%
      \expandafter\def\csname LT7\endcsname{\color{black}}%
      \expandafter\def\csname LT8\endcsname{\color{black}}%
    \fi
  \fi
    \setlength{\unitlength}{0.0500bp}%
    \ifx\gptboxheight\undefined%
      \newlength{\gptboxheight}%
      \newlength{\gptboxwidth}%
      \newsavebox{\gptboxtext}%
    \fi%
    \setlength{\fboxrule}{0.5pt}%
    \setlength{\fboxsep}{1pt}%
    \definecolor{tbcol}{rgb}{1,1,1}%
\begin{picture}(1980.00,1700.00)%
    \gplgaddtomacro\gplbacktext{%
      \csname LTb\endcsname
      \put(566,627){\makebox(0,0)[r]{\strut{}0}}%
      \csname LTb\endcsname
      \put(566,1483){\makebox(0,0)[r]{\strut{}2.8}}%
      \csname LTb\endcsname
      \put(649,431){\makebox(0,0){\strut{}0}}%
      \csname LTb\endcsname
      \put(1181,431){\makebox(0,0){\strut{}0.5}}%
      \csname LTb\endcsname
      \put(1712,431){\makebox(0,0){\strut{}1}}%
    }%
    \gplgaddtomacro\gplfronttext{%
      \csname LTb\endcsname
      \put(159,1055){\rotatebox{-270.00}{\makebox(0,0){\strut{}\normalsize Density}}}%
      \csname LTb\endcsname
      \put(1181,137){\makebox(0,0){\strut{}\normalsize Similarity}}%
    }%
    \gplbacktext
    \put(0,0){\includegraphics[width={99.00bp},height={85.00bp}]{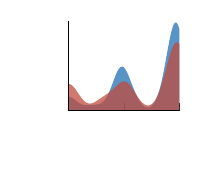}}%
    \gplfronttext
  \end{picture}%
\endgroup

%% file: density.yeast_aj.l8d3.h8.dedup.tex
\begingroup
  \makeatletter
  \providecommand\color[2][]{%
    \GenericError{(gnuplot) \space\space\space\@spaces}{%
      Package color not loaded in conjunction with
      terminal option `colourtext'%
    }{See the gnuplot documentation for explanation.%
    }{Either use 'blacktext' in gnuplot or load the package
      color.sty in LaTeX.}%
    \renewcommand\color[2][]{}%
  }%
  \providecommand\includegraphics[2][]{%
    \GenericError{(gnuplot) \space\space\space\@spaces}{%
      Package graphicx or graphics not loaded%
    }{See the gnuplot documentation for explanation.%
    }{The gnuplot epslatex terminal needs graphicx.sty or graphics.sty.}%
    \renewcommand\includegraphics[2][]{}%
  }%
  \providecommand\rotatebox[2]{#2}%
  \@ifundefined{ifGPcolor}{%
    \newif\ifGPcolor
    \GPcolortrue
  }{}%
  \@ifundefined{ifGPblacktext}{%
    \newif\ifGPblacktext
    \GPblacktexttrue
  }{}%
  \let\gplgaddtomacro\g@addto@macro
  \gdef\gplbacktext{}%
  \gdef\gplfronttext{}%
  \makeatother
  \ifGPblacktext
    \def\colorrgb#1{}%
    \def\colorgray#1{}%
  \else
    \ifGPcolor
      \def\colorrgb#1{\color[rgb]{#1}}%
      \def\colorgray#1{\color[gray]{#1}}%
      \expandafter\def\csname LTw\endcsname{\color{white}}%
      \expandafter\def\csname LTb\endcsname{\color{black}}%
      \expandafter\def\csname LTa\endcsname{\color{black}}%
      \expandafter\def\csname LT0\endcsname{\color[rgb]{1,0,0}}%
      \expandafter\def\csname LT1\endcsname{\color[rgb]{0,1,0}}%
      \expandafter\def\csname LT2\endcsname{\color[rgb]{0,0,1}}%
      \expandafter\def\csname LT3\endcsname{\color[rgb]{1,0,1}}%
      \expandafter\def\csname LT4\endcsname{\color[rgb]{0,1,1}}%
      \expandafter\def\csname LT5\endcsname{\color[rgb]{1,1,0}}%
      \expandafter\def\csname LT6\endcsname{\color[rgb]{0,0,0}}%
      \expandafter\def\csname LT7\endcsname{\color[rgb]{1,0.3,0}}%
      \expandafter\def\csname LT8\endcsname{\color[rgb]{0.5,0.5,0.5}}%
    \else
      \def\colorrgb#1{\color{black}}%
      \def\colorgray#1{\color[gray]{#1}}%
      \expandafter\def\csname LTw\endcsname{\color{white}}%
      \expandafter\def\csname LTb\endcsname{\color{black}}%
      \expandafter\def\csname LTa\endcsname{\color{black}}%
      \expandafter\def\csname LT0\endcsname{\color{black}}%
      \expandafter\def\csname LT1\endcsname{\color{black}}%
      \expandafter\def\csname LT2\endcsname{\color{black}}%
      \expandafter\def\csname LT3\endcsname{\color{black}}%
      \expandafter\def\csname LT4\endcsname{\color{black}}%
      \expandafter\def\csname LT5\endcsname{\color{black}}%
      \expandafter\def\csname LT6\endcsname{\color{black}}%
      \expandafter\def\csname LT7\endcsname{\color{black}}%
      \expandafter\def\csname LT8\endcsname{\color{black}}%
    \fi
  \fi
    \setlength{\unitlength}{0.0500bp}%
    \ifx\gptboxheight\undefined%
      \newlength{\gptboxheight}%
      \newlength{\gptboxwidth}%
      \newsavebox{\gptboxtext}%
    \fi%
    \setlength{\fboxrule}{0.5pt}%
    \setlength{\fboxsep}{1pt}%
    \definecolor{tbcol}{rgb}{1,1,1}%
\begin{picture}(1980.00,1700.00)%
    \gplgaddtomacro\gplbacktext{%
      \csname LTb\endcsname
      \put(566,627){\makebox(0,0)[r]{\strut{}0}}%
      \csname LTb\endcsname
      \put(566,1483){\makebox(0,0)[r]{\strut{}5.9}}%
      \csname LTb\endcsname
      \put(649,431){\makebox(0,0){\strut{}0}}%
      \csname LTb\endcsname
      \put(1181,431){\makebox(0,0){\strut{}0.5}}%
      \csname LTb\endcsname
      \put(1712,431){\makebox(0,0){\strut{}1}}%
    }%
    \gplgaddtomacro\gplfronttext{%
      \csname LTb\endcsname
      \put(159,1055){\rotatebox{-270.00}{\makebox(0,0){\strut{}\normalsize Density}}}%
      \csname LTb\endcsname
      \put(1181,137){\makebox(0,0){\strut{}\normalsize Similarity}}%
    }%
    \gplbacktext
    \put(0,0){\includegraphics[width={99.00bp},height={85.00bp}]{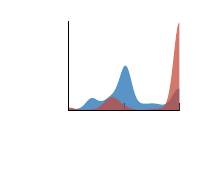}}%
    \gplfronttext
  \end{picture}%
\endgroup

%% file: density.human_aj.l8d3.h8.dedup.tex
\begingroup
  \makeatletter
  \providecommand\color[2][]{%
    \GenericError{(gnuplot) \space\space\space\@spaces}{%
      Package color not loaded in conjunction with
      terminal option `colourtext'%
    }{See the gnuplot documentation for explanation.%
    }{Either use 'blacktext' in gnuplot or load the package
      color.sty in LaTeX.}%
    \renewcommand\color[2][]{}%
  }%
  \providecommand\includegraphics[2][]{%
    \GenericError{(gnuplot) \space\space\space\@spaces}{%
      Package graphicx or graphics not loaded%
    }{See the gnuplot documentation for explanation.%
    }{The gnuplot epslatex terminal needs graphicx.sty or graphics.sty.}%
    \renewcommand\includegraphics[2][]{}%
  }%
  \providecommand\rotatebox[2]{#2}%
  \@ifundefined{ifGPcolor}{%
    \newif\ifGPcolor
    \GPcolortrue
  }{}%
  \@ifundefined{ifGPblacktext}{%
    \newif\ifGPblacktext
    \GPblacktexttrue
  }{}%
  \let\gplgaddtomacro\g@addto@macro
  \gdef\gplbacktext{}%
  \gdef\gplfronttext{}%
  \makeatother
  \ifGPblacktext
    \def\colorrgb#1{}%
    \def\colorgray#1{}%
  \else
    \ifGPcolor
      \def\colorrgb#1{\color[rgb]{#1}}%
      \def\colorgray#1{\color[gray]{#1}}%
      \expandafter\def\csname LTw\endcsname{\color{white}}%
      \expandafter\def\csname LTb\endcsname{\color{black}}%
      \expandafter\def\csname LTa\endcsname{\color{black}}%
      \expandafter\def\csname LT0\endcsname{\color[rgb]{1,0,0}}%
      \expandafter\def\csname LT1\endcsname{\color[rgb]{0,1,0}}%
      \expandafter\def\csname LT2\endcsname{\color[rgb]{0,0,1}}%
      \expandafter\def\csname LT3\endcsname{\color[rgb]{1,0,1}}%
      \expandafter\def\csname LT4\endcsname{\color[rgb]{0,1,1}}%
      \expandafter\def\csname LT5\endcsname{\color[rgb]{1,1,0}}%
      \expandafter\def\csname LT6\endcsname{\color[rgb]{0,0,0}}%
      \expandafter\def\csname LT7\endcsname{\color[rgb]{1,0.3,0}}%
      \expandafter\def\csname LT8\endcsname{\color[rgb]{0.5,0.5,0.5}}%
    \else
      \def\colorrgb#1{\color{black}}%
      \def\colorgray#1{\color[gray]{#1}}%
      \expandafter\def\csname LTw\endcsname{\color{white}}%
      \expandafter\def\csname LTb\endcsname{\color{black}}%
      \expandafter\def\csname LTa\endcsname{\color{black}}%
      \expandafter\def\csname LT0\endcsname{\color{black}}%
      \expandafter\def\csname LT1\endcsname{\color{black}}%
      \expandafter\def\csname LT2\endcsname{\color{black}}%
      \expandafter\def\csname LT3\endcsname{\color{black}}%
      \expandafter\def\csname LT4\endcsname{\color{black}}%
      \expandafter\def\csname LT5\endcsname{\color{black}}%
      \expandafter\def\csname LT6\endcsname{\color{black}}%
      \expandafter\def\csname LT7\endcsname{\color{black}}%
      \expandafter\def\csname LT8\endcsname{\color{black}}%
    \fi
  \fi
    \setlength{\unitlength}{0.0500bp}%
    \ifx\gptboxheight\undefined%
      \newlength{\gptboxheight}%
      \newlength{\gptboxwidth}%
      \newsavebox{\gptboxtext}%
    \fi%
    \setlength{\fboxrule}{0.5pt}%
    \setlength{\fboxsep}{1pt}%
    \definecolor{tbcol}{rgb}{1,1,1}%
\begin{picture}(1980.00,1700.00)%
    \gplgaddtomacro\gplbacktext{%
      \csname LTb\endcsname
      \put(566,627){\makebox(0,0)[r]{\strut{}0}}%
      \csname LTb\endcsname
      \put(566,1483){\makebox(0,0)[r]{\strut{}2.9}}%
      \csname LTb\endcsname
      \put(649,431){\makebox(0,0){\strut{}0}}%
      \csname LTb\endcsname
      \put(1181,431){\makebox(0,0){\strut{}0.5}}%
      \csname LTb\endcsname
      \put(1712,431){\makebox(0,0){\strut{}1}}%
    }%
    \gplgaddtomacro\gplfronttext{%
      \csname LTb\endcsname
      \put(159,1055){\rotatebox{-270.00}{\makebox(0,0){\strut{}\normalsize Density}}}%
      \csname LTb\endcsname
      \put(1181,137){\makebox(0,0){\strut{}\normalsize Similarity}}%
    }%
    \gplbacktext
    \put(0,0){\includegraphics[width={99.00bp},height={85.00bp}]{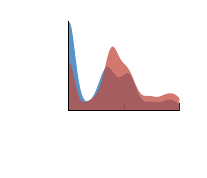}}%
    \gplfronttext
  \end{picture}%
\endgroup